\newcommand{\vpc}{b}
\newcommand{\Aut}{\mathop{\rm Aut}\nolimits}
\newcommand{\Isom}{\mathop{\rm Isom}\nolimits}
\newcommand{\End}{\mathop{\rm End}\nolimits}
\newcommand{\mmod}{\,\mathrm{mod}\,}
\newcommand{\Tr}{\mathop{\rm Tr}}
\newcommand{\A}{\mathcal A}
\newcommand{\B}{\mathcal B}
\newcommand{\Z}{\mathbb Z}
\newcommand{\N}{\mathbb N}
\newcommand{\Q}{\mathbb Q}
\newcommand{\R}{\mathbb R}
\newcommand{\Sym}{\mathop{\rm Sym}\nolimits}
\newcommand{\LL}{\mathcal L}
\newtheorem{theorem}{Theorem}[section]
\newtheorem{algorithm}{Algorithm}
\newtheorem{corollary}[theorem]{Corollary}
\newtheorem{proposition}[theorem]{Proposition}
\newtheorem{lemma}[theorem]{Lemma}
\theoremstyle{definition}
\newtheorem{definition}{Definition}
\title{Solenoid Maps, Automatic Sequences, Van Der Put Series, and Mealy-Moore Automata}
\author[1]{Rostislav Grigorchuk}
\affil[1]{               Department of Mathematics\\
               Texas A\&M University\\
               College Station, TX 77843-3368\\
               \href{mailto:grigorch@math.tamu.edu}{grigorch@math.tamu.edu}}
\author[2]{Dmytro Savchuk}
\affil[2]{Department of Mathematics and Statistics\\
               University of South Florida\\
               4202 E Fowler Ave\\
               Tampa, FL 33620-5700\\
               \href{mailto:savchuk@usf.edu}{savchuk@usf.edu}}
\begin{document}

\maketitle

\begin{abstract}
The ring $\mathbb Z_d$ of $d$-adic integers has a natural interpretation as the boundary of a rooted $d$-ary tree $T_d$. Endomorphisms of this tree (i.e. solenoid maps) are in one-to-one correspondence with 1-Lipschitz mappings from $\mathbb Z_d$ to itself and automorphisms of $T_d$ constitute the group $\mathrm{Isom}(\Z_d)$. In the case when $d=p$ is prime, Anashin showed in~\cite{anashin:automata12} that $f\in\mathrm{Lip}^1(\Z_p)$ is defined by a finite Mealy automaton if and only if the reduced coefficients of its van der Put series constitute a $p$-automatic sequence over a finite subset of $\Z_p\cap\mathbb Q$. We generalize this result to arbitrary integer $d\geq 2$, describe the explicit connection between the Moore automaton producing such sequence and the Mealy automaton inducing the corresponding endomorphism. Along the process we produce two algorithms allowing to convert the Mealy automaton of an endomorphism to the corresponding Moore automaton generating the sequence of the reduced van der Put coefficients of the induced map on $\Z_d$ and vice versa. We demonstrate examples of applications of these algorithms for the case when the sequence of coefficients is Thue-Morse sequence, and also for one of the generators of the standard automaton representation of the lamplighter group.
\end{abstract}

\section{Introduction}

Continuous self-maps of the ring $\Z_p$ of $p$-adic integers is the object of study of $p$-adic analysis and $p$-adic dynamics.
Among all continuous functions $\Z_p\to\Z_p$ there is an natural subclass of $1$-Lipschitz functions that do not increase distances between points of $\Z_p$. These functions appear in many contexts and have various names in the literature. For example, Bernstein and Lagarias in the paper devoted to the Collatz ``$3n+1$'' conjecture call them solenoidal maps~\cite{bernstein_l:3n_plus_1_96}, Anashin in~\cite{ana:erg} (see also~\cite{anashin_ky:characterization_of_ergodic11}) studied the conditions under which these functions act ergodically on $\Z_p$. For us such functions are especially important because they act on regular rooted trees by endomorphisms (or automorphisms in the invertible case). Topologically, $\Z_p$ is homeomorphic to the Cantor set which, in turn, can be identified with the boundary $X^\infty$ of a rooted $p$-ary tree $X^*$, whose vertices are finite words over the alphabet $X=\{0,1,\ldots,p-1\}$. Namely, we identify a $p$-adic number $x_0+x_1p+x_2p^2+\cdots$ with the point $x_0x_1x_2\ldots\in X^\infty$. (For the language of rooted trees and group actions on them see~\cite{gns00:automata,grigorchuk-s:standrews}).

Under this identification Nekrashevych, Sushchansky, and the first author~\cite[Proposition~3.7]{gns00:automata} showed that a continuous map from $\Z_p$ to itself induces a (graph) endomorphism of the tree $X^*$ precisely when it is 1-Lipschitz. Furthermore,
it is an easy but not so well-known observation that the group $\Isom(\Z_p)$ of isometries of $\Z_p$ is naturally isomorphic to the group $\Aut(X^*)$ of automorphisms of a rooted $p$-ary tree. As such, the groups $\Isom(\Z_p)$ contain many exotic groups that provided counterexamples to several long standing conjectures and problems in group theory~\cite{grigorch:burnside,grigorch:milnor,grigorch_lsz:atiyah,grigorch_z:basilica} and have connections to other areas of mathematics, such as holomorphic dynamics~\cite{nekrash:self-similar,bartholdi_n:rabbit}, combinatorics~\cite{grigorchuk-s:hanoi-cr}, analysis on graphs~\cite{grigorch_s:hanoi_spectrum}, computer science~\cite{cain:automaton_semigroups,miasnikov_s:cayley_automatic11,miasnikov_s:automatic_graph}, cryptography~\cite{myasnikov_u:random_subgroups08,myasnikov_su:non_commutative_crypto_book11,garzon_z:crypto,petrides:cryptoanalysis_grigorchuk} and coding theory~\cite{cull_n:hanoi_codes99,grigorchuk-s:hanoi-cr}. In a similar way one can characterize the group $\Isom(\Q_p)$ of isometries
of the field $\Q_p$ of $p$-adic numbers as the group of automorphisms of a regular (not rooted) $(p+1)$-ary tree that fix pointwise one selected end of this tree.

To describe important subgroups of $\Isom(\Z_p)$ and establish their properties, the languages of self-similar groups and semigroups initiated in~\cite{grigorch:burnside} and developed in the last four decades (see survey papers~\cite{gns00:automata,bartholdi_gn:fractal} and the book~\cite{nekrash:self-similar}), and Mealy automata have proved to be very effective. On the other hand, these tools were not widely used by researchers studying $p$-adic analysis and $p$-adic dynamics. There are only few papers that build bridges between the two worlds. The first realization of an affine transformation of $\Z_p$ by a finite Mealy automaton was constructed by Bartholdi and \v Suni\'c in~\cite{bartholdi_s:bsolitar}. Ahmed and the second author in~\cite{ahmed_s:polynomial_ergodicity} described automata defining polynomial functions $x\mapsto f(x)$ on $\Z_d$, where $f\in\Z[x]$, and using the language of groups acting on rooted trees deduced conditions for ergodicity of the action of $f$ on $\Z_2$ obtained by completely different methods by Larin~\cite{larin:tr}.
In~\cite{anashin:automata12} Anashin proved an excellent result relating finiteness of the Mealy automaton generating an endomorphism of the $p$-ary tree with automaticity of the sequence of reduced van der Put coefficients of the induced functions on $\Z_p$, which will be discussed below in details. Automatic sequences represent an important area at the conjunction of computer science and mathematics. Some of the famous examples of automatic sequences include Thue-Morse sequence and Rudin-Shapiro sequence defining space filling curves. We refer the reader to~\cite{allouche:automatic_sequences03} for details. Recent applications of automatic sequences in group theory include~\cite{grigorch_lns:self-similar_groups_automatic_sequences16,grigorch_ln:spectra_schreier_schroedinger18}.

As in the real analysis, one of the effective ways to study functions $\Z_p\to\Z_p$ is to decompose them into series with respect to some natural basis in the space of continuous functions $C(\Z_p)$ from $\Z_p$ to itself. Two of the most widely used bases of this space are the Mahler basis and the van der Put bases~\cite{mahler:p-adic,schikhof:ultrametric84}. In the more general settings of the spaces of continuous functions from $\Z_p$ to a field, several other bases have been used in the literature: Walsh basis~\cite{walsh:basis23}, Haar basis (used in group theory context, for example, in~\cite{bartholdi_g:spectrum}), Kaloujnine basis~\cite{grigorch_lns:self-similar_groups_automatic_sequences16}. In this paper we will deal with the van der Put basis, which is made of functions $\chi_n(x)$, $n\geq 0$ that are characteristic functions of cylindrical subsets of $\Z_p$ consisting of all elements that have the $p$-adic expansion of $n$ as a prefix. Each continuous function $f\in C(\Z_p)$ can be decomposed uniquely as
\begin{equation}
f(x)=\sum_{n\geq 0}B^f_n\chi_n(x),
\end{equation}
where the coefficients $B^f_n$ are elements of $\Z_p$ which we call van der Put coefficients. A function $f\colon\Z_p\to\Z_p$ is 1-Lipschitz if and only if its van der Put coefficients can be represented as $B^f_n=\vpc^f_nd^{\lfloor\log_dn\rfloor}$ for all $n>0$, where $\vpc^f_n\in\Z_p$~\cite{anashin_ky:characterization_of_ergodic11}. We call $\vpc^f_n$ the \emph{reduced van der Put coefficients} (see Section~\ref{sec:continuous} for details).

The main results of the present paper are the following two theorems, in which $d\geq 2$ is an arbitrary (not necessarily prime) integer.

\begin{theorem}
\label{thm:characterization}
Let $g\in\End(X^*)$ be an endomorphism of the rooted tree $X^*$, where $X=\{0,1,\ldots,d-1\}$. Then $g$ is finite state if and only if the following two conditions hold for the transformation $\hat g$ of $\Z_d$ induced by $g$:
\begin{enumerate}
\item[(a)] the sequence $(\vpc_n^{\hat g})_{n\geq1}$ of reduced van der Put coefficients of $\hat g$ consists of finitely many eventually periodic elements from $\Z_d$;
\item[(b)] $(\vpc_n^{\hat g})_{n\geq1}$ is $d$-automatic.
\end{enumerate}
\end{theorem}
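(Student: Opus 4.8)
The plan is to translate everything into the self-similar (section) structure of the tree endomorphism, for which the central computation is a formula expressing each reduced van der Put coefficient through a single section. Write $n\ge 1$ in base $d$ as $n=\sum_{i=0}^{\ell-1}a_i d^i$ with $a_{\ell-1}\neq 0$, so that $\ell=\lfloor\log_d n\rfloor+1$; let $v=a_0a_1\cdots a_{\ell-2}\in X^*$ be the vertex formed by all but the leading digit and let $a=a_{\ell-1}$ be that leading digit. Using that an endomorphism factors through its sections as $\hat g(\widehat v+d^{|v|}y)=(\hat g(\widehat v)\bmod d^{|v|})+d^{|v|}\,\widehat{g|_v}(y)$, where $\widehat{g|_v}$ is the transformation of $\Z_d$ induced by the section $g|_v$, the subtraction $B^{\hat g}_n=\hat g(n)-\hat g(n_-)$ (where $n_-=\widehat v$) yields the key identity
\[
\vpc^{\hat g}_n=\widehat{g|_v}(a)-\widehat{g|_v}(0).
\]
Thus $\vpc^{\hat g}_n$ depends on $n$ only through the section $g|_v$ at the word of its lower digits and through its single leading digit $a$. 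I would isolate this as a lemma, since both directions rest on it.

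For the forward direction, assume $g$ is finite state, so the set $S=\{g|_v:v\in X^*\}$ of sections is finite. Condition (a) is then immediate: by the identity the coefficients range over the finite set $\{\widehat s(a)-\widehat s(0):s\in S,\ a\in X\}$, and each value is eventually periodic because a finite-state map sends the (eventually periodic) input $a\,0\,0\,0\cdots$ to an eventually periodic $d$-adic integer, while the eventually periodic elements $\Z_d\cap\Q$ form a ring closed under subtraction. For condition (b) I would exhibit an explicit Moore automaton reading $n$ least-significant-digit-first: take as states the pairs $(s,a)\in S\times X$ together with a start state, let the transition from $(s,a)$ on an incoming digit $a'$ be $(s|_a,a')$ (advancing the section by the previously buffered digit), and let the output function be $(s,a)\mapsto\widehat s(a)-\widehat s(0)$. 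After reading $a_0,\dots,a_{\ell-1}$ this automaton sits in state $(g|_v,a)$ and outputs exactly $\vpc^{\hat g}_n$; the state set is finite and the output alphabet is the finite set above, so $(\vpc^{\hat g}_n)$ is $d$-automatic (reading least-significant-digit-first, which defines the same class of sequences).

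For the converse I would first record the self-similarity of the coefficient sequence under the chain rule $(g|_v)|_w=g|_{vw}$: applying the key identity to the section map $h=\widehat{g|_v}$ shows that its own reduced coefficients satisfy $\vpc^{h}_m=\vpc^{\hat g}_{\widehat v+d^{|v|}m}$ for all $m\ge 1$. In other words, the reduced coefficient sequence of the section at $v$ is precisely the subsequence of $(\vpc^{\hat g}_n)$ along indices $\equiv\widehat v\pmod{d^{|v|}}$, so that the collection $\{(\vpc^{\widehat{g|_v}}_m)_m:v\in X^*\}$ is exactly the $d$-kernel of $(\vpc^{\hat g}_n)$. By (b) and the Eilenberg characterization of automaticity this kernel is finite, which bounds the number of distinct sections \emph{up to an additive constant}, since the van der Put series shows that $\widehat{g|_v}$ is determined by its reduced coefficients together with the value $c_v:=\widehat{g|_v}(0)$.

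The main obstacle — and the only place where condition (a), as opposed to (b), is genuinely needed — is to show that the constants $c_v$ take only finitely many values, so that the finitely many kernel classes split into finitely many actual sections. That some such hypothesis is indispensable is already visible for the shift $x\mapsto x+c$, whose reduced coefficient sequence is the leading-digit function regardless of $c$, while the map is finite state only when $c$ is eventually periodic. I would control $c_v$ by writing it as the high-order part (above position $|v|$) of $\hat g(\widehat v)=\hat g(0)+\sum_{j=1}^{|v|}\vpc^{\hat g}_{[\widehat v]_j}\,d^{\,j-1}$, where $[\widehat v]_j$ is the truncation of $\widehat v$ to $j$ digits; since each $\vpc^{\hat g}_{[\widehat v]_j}$ lies in the fixed finite set of eventually periodic elements furnished by (a) — and $\hat g(0)$ is itself eventually periodic, being the one datum not seen by $(\vpc^{\hat g}_n)_{n\ge1}$ — the portion of this sum above position $|v|$ is governed by the bounded pre-periods and periods of these finitely many elements. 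Making this precise, by exhibiting $c_v$ as a finite-state function of a bounded window of the digits of $\widehat v$ and concluding that $\{c_v\}$ is finite, is the technical heart of the argument; once it is in place, $S$ is finite and $g$ is finite state.
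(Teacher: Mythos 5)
Your overall architecture coincides with the paper's. Your ``key identity'' $\vpc^{\hat g}_n=\widehat{g|_v}(a)-\widehat{g|_v}(0)$ is the content of the paper's Theorem~\ref{thm:van_der_put_sections} and Corollaries~\ref{cor:sections1}--\ref{cor:correspondence} packaged into one formula, and both directions are organized the same way: the forward direction bounds the coefficient values and the kernel by the finite set of (section, leading digit) pairs, and the converse reduces to showing that the finitely many kernel classes split into only finitely many genuine sections once the constants $c_v=\widehat{g|_v}(0)$ are controlled. Two small index slips: the identity, and the consequent relation $\vpc^{h}_m=\vpc^{\hat g}_{\overline v+d^{|v|}m}$ for $h=\widehat{g|_v}$, hold only for $n\geq d$ (respectively $m\geq d$), since for $1\leq n<d$ one has $\vpc^{\hat g}_n=\hat g(n)$ with no subtraction; this is precisely why the paper's Corollary~\ref{cor:correspondence} asserts agreement only from term $d$ on. These slips are harmless (they affect finitely many terms for automaticity, and your ``additive constant'' remark absorbs the discrepancy on the kernel side), and the DFAO technicalities you gesture at (single-digit inputs, representations with extra trailing zeros, direction of reading) are standard and repairable.

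The one substantive gap is the step you yourself flag: the finiteness of $\{c_v\colon v\in X^*\}$. You propose to obtain it by exhibiting $c_v$ as a function of a \emph{bounded window} of the digits of $\overline v$, but that cannot work as stated: $c_v=\sigma^{|v|}\bigl(\hat g(\overline v)\bigr)$ depends on the whole telescoping sum of coefficients along the path from the root to $v$, hence on all digits of $v$, and the carries entering position $|v|$ come from arbitrarily many summands. What does work --- and is exactly what the paper's Lemmas~\ref{lem:periodic}, \ref{lem:preperiodic} and~\ref{lem:finite} accomplish --- is to identify the eventually periodic elements supplied by hypothesis (a) with rationals whose denominators divide $d^m-1$ and which lie in a bounded interval $[-d^l,d^l-1]$ (with $l,m$ bounding all preperiods and periods), and then to prove by induction on $|v|$ that the first-level coefficients $\vpc^{g|_v}_i$, $0\leq i<d$ (note $\vpc^{g|_v}_0=c_v$), remain in the sets $A^{l,m}_{|v|}=\sigma\bigl(A^{l,m}_{|v|-1}\bigr)+P^{l,m}$, whose union is finite because the shift $\sigma$ contracts enough to trap everything in a fixed interval while the denominators stay divisors of $d^m-1$. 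Until you supply this (or an equivalent) estimate the converse direction is incomplete; with it in place, your argument is essentially the paper's proof.
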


For the case of prime $d=p$, Theorem~\ref{thm:characterization} was proved by Anashin in~\cite{anashin:automata12} using a completely different from our approach.  The proof from~\cite{anashin:automata12} does not provide a direct connection between Mealy automaton of an endomorphism of $X^*$ and the Moore automaton of the corresponding sequence of its reduced van der Put coefficients. Our considerations are based on understanding the connection between the reduced van der Put coefficients of an endomorphism and of its sections at vertices of the rooted  tree via the geometric notion of a portrait. This connection, summarized in the next theorem,  bears a distinct geometric flavor  and gives a way to effectively relate the corresponding Mealy and Moore automata.

\begin{theorem}
\label{thm:main}
Let $X=\{0,1,\ldots,d-1\}$ be a finite alphabet identified with $\Z/d\Z$.\\[-5mm]
\begin{itemize}
\item[(a)] Given an endomorphism $g$ of the tree $X^*$, defined by the finite Mealy automaton, there is an explicit algorithmic procedure given by Theorem~\ref{thm:covering} and Algorithm~\ref{alg:mealy_to_moore}, that constructs the finite Moore automaton generating the sequence $(\vpc_n^g)_{n\geq 0}$ of reduced van der Put coefficients of $g$.
\item[(b)] Conversely, given a finite Moore automaton generating the sequence $(c_n)_{n\geq 0}$ of eventually periodic $d$-adic integers, there is an explicit algorithmic procedure given by Theorem~\ref{thm:covering2} and Algorithm~\ref{alg:moore_to_mealy}, that constructs the finite Mealy automaton of an endomorphism $g$ with the reduced van der Put coefficients satisfying $\vpc_n^g=c_n$ for all $n\geq 0$.
\item[(c)] Both constructions are dual to each other in a sense that the automata produced by them cover the input automata as labelled graphs (see Section~\ref{sec:relation} for the exact definition).
\end{itemize}
\end{theorem}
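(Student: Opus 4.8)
The plan is to base both directions on a single explicit \emph{local formula} expressing each reduced van der Put coefficient through the sections of $g$, which is exactly the content I expect Theorems~\ref{thm:covering} and~\ref{thm:covering2} to provide. Writing $n\geq 1$ in base $d$ as the word $v=v_0v_1\cdots v_k$ (so $k=\lfloor\log_d n\rfloor$ and $v_k\neq 0$), and using the recursion $B_n^g=\hat g(n)-\hat g(n_-)$ for the unreduced van der Put coefficients, where $n_-$ is $n$ with its leading digit deleted, I would compute the difference digitwise. Because an endomorphism acts on $\Z_d$ through its portrait --- the $k$-th output digit of $\hat g(x)$ is the image of $x_k$ under the root action of the section $g|_{x_0\cdots x_{k-1}}$ --- the first $k$ output digits of $\hat g(n)$ and $\hat g(n_-)$ coincide, and peeling off the trailing zeros yields
\begin{equation}
\vpc_n^g=\big(\lambda_{g|_{n_-}}(v_k)-\lambda_{g|_{n_-}}(0)\big)+d\big(\widehat{g|_{n}}(0)-\widehat{g|_{n_-0}}(0)\big),
\end{equation}
where $\lambda_h\colon X\to X$ is the root action of a section $h$. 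The key point to extract is that the right-hand side depends only on the single section $g|_{n_-}$ together with the leading digit $v_k$ (since $g|_n=(g|_{n_-})|_{v_k}$ and $g|_{n_-0}=(g|_{n_-})|_0$); I would record this as $\vpc_n^g=\Phi(g|_{n_-},v_k)$ for an explicit map $\Phi$.

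For part~(a), finiteness of the Mealy automaton means only finitely many sections occur, so $\Phi$ takes finitely many values, and each value is an eventually periodic element of $\Z_d$ because the image $\widehat{h}(0)$ of the all-zero ray under any finite-state section $h$ is eventually periodic. I would then define the Moore automaton whose states are the pairs $(h,a)$, with $h$ a section and $a\in X$ the last letter read, together with a start state; the transition $(h,a)\xrightarrow{\,b\,}(h|_a,b)$ mirrors the Mealy transition $h\xrightarrow{\,a\,}h|_a$, and the output label of $(h,a)$ is $\Phi(h,a)$. Feeding the base-$d$ digits of $n$ least-significant-first drives this automaton to the state $(g|_{n_-},v_k)$, whose output is exactly $\vpc_n^g$; hence the sequence is $d$-automatic. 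This is Algorithm~\ref{alg:mealy_to_moore}.

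Part~(b) runs the formula backwards. Reducing $\Phi(g|_{n_-},v_k)$ modulo $d$ recovers the differences $\lambda_{g|_{n_-}}(v_k)-\lambda_{g|_{n_-}}(0)$ of the root action at every vertex, while the value $\vpc_0^g=\hat g(0)$ pins down the digits $\lambda_{0^j}(0)$ along the zero ray and hence each $\lambda_{g|_{n_-}}(0)$; the higher-order part of $\Phi$ then determines the sections off the zero ray. I would take the states of the given Moore automaton as the states of the Mealy automaton, reading off the output function from the recovered root actions and the transitions from the Moore transitions. The hypothesis that the $c_n$ are eventually periodic $d$-adic integers is precisely what guarantees that the reconstructed $\widehat{g|_v}(0)$ are consistent with genuine finite-state sections, so the resulting automaton is finite; this is Algorithm~\ref{alg:moore_to_mealy}.

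Finally, part~(c) is immediate from the shape of the two constructions: in each direction the state set of the output automaton projects onto the state set of the input automaton --- $(h,a)\mapsto h|_a$ in one direction and the identity on states in the other --- and these projections carry transitions to transitions, which is exactly the covering of labelled graphs defined in Section~\ref{sec:relation}. The main obstacle I anticipate is the careful bookkeeping of the higher-order $d$-adic term $d\big(\widehat{g|_{n}}(0)-\widehat{g|_{n_-0}}(0)\big)$ in the local formula --- getting the trailing-zero contributions exactly right --- and, in part~(b), proving that the reconstruction is well defined and yields a finite-state endomorphism rather than merely a formal digitwise recipe; the forward direction~(a) is then essentially bookkeeping once the formula is in place.
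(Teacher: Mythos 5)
Your local formula for $\vpc_n^g$ is correct and is essentially an unwound form of the paper's key identity (Theorem~\ref{thm:van_der_put_sections}): $\vpc^{g|_x}_n=\vpc^g_{x+nd}$ for $n\geq d$, together with the two first-level cases involving the shift $\sigma(\vpc^g_x)$. So part~(a) is in the spirit of the paper's construction, but your Moore automaton has a bookkeeping defect: the output of a state $(h,a)$ is not well defined. The state $(g,a)$ can be reached both after reading the single letter $a$ (required output $f(a)$) and after reading a longer word $ua$ with $g|_u=g$ (required output $\Phi(g,a)=f(a)-f(0)$), and a state $(h,0)$ reached by appending a trailing zero must reproduce the previous output, not $\Phi(h,0)=0$. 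The paper avoids exactly this by taking as states the pairs $\bigl(g|_v,(\vpc^g_{\overline{vy}})_{y\in X}\bigr)$, i.e.\ the section together with the $d$-tuple of actual coefficient values at its children; that second component is what makes the output well defined on all of $X^*$.

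The genuine gap is in part~(b). You propose to take the states of the given Moore automaton as the states of the Mealy automaton, with the covering being the identity. But the section $g|_v$ is \emph{not} determined by the section $(\vpc^g_n)_{n\geq0}|_v$ of the coefficient sequence: by Corollary~\ref{cor:correspondence} the sequences $(\vpc^{g|_v}_n)_{n\geq0}$ and $(\vpc^g_n)_{n\geq0}|_v$ agree only from index $d$ onward, and the missing first $d$ coefficients $(\vpc^{g|_v}_i)_{i<d}$ depend on the entire path to $v$ through iterated shifts, via $\vpc^{g|_{vx}}_0=\sigma(\vpc^{g|_v}_x)$ and $\vpc^{g|_{vx}}_i=\vpc^{g|_v}_{x+id}+\sigma(\vpc^{g|_v}_x)$ for $0<i<d$. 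Two vertices with the same Moore state can therefore carry different sections of $g$, so your automaton is ill defined; the correct state set is the set of pairs $\bigl((\vpc^g_n)_{n\geq 0}|_v,(\vpc^{g|_v}_i)_{i<d}\bigr)$, and the resulting Mealy automaton covers the Moore automaton in general properly. Moreover, the heart of the converse direction is precisely to prove that only finitely many such first-level $d$-tuples occur; this is the paper's Lemma~\ref{lem:finite}, which bounds the recursively generated sets $A^{l,m}_{i+1}=\sigma(A^{l,m}_i)+P^{l,m}$ inside a fixed interval of $\mathbb{Q}$, and it is here --- not in a consistency check of the reconstructed $\widehat{g|_v}(0)$ --- that eventual periodicity of the $c_n$ enters quantitatively (via Lemmas~\ref{lem:periodic} and~\ref{lem:preperiodic}). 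Your sketch asserts finiteness exactly where the actual argument is required.
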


Theorem~\ref{thm:main} opens up a new approach to study automatic sequences by means of (semi)groups acting on rooted trees, and vice versa, to study endomorphisms of rooted trees via the language of Moore automata.

In~\cite{anashin:automata12} Anashin, using  the  famous Christol's  characterization of  the $p$-automatic  sequences in terms of algebraicity of the corresponding power series, suggested  another  version of  the  main  result of  his  paper  (i.e., of Theorem~\ref{thm:characterization}  in the  case  of  prime $d$).  The  authors  are not  aware  of  the  existence of  the  analogue  of  Christol's  theorem  in  the  situation of  $d$-automaticity  when  $d$  is  not prime. The  first  question  arises  in  what  sense to  mean the  algebraicity of  function  when  Field  $\Q_p$ is replaced  by  the ring  $\Q_d$ of $d$-adic numbers. The  authors  do not  exclude  that the  extension of  Christol's  theorem is possible and leave this question for the future.

The paper is organized as follows. Section~\ref{sec:pre} introduces necessary notions related to Mealy automata and actions on rooted trees. Section~\ref{sec:continuous} recalls how to represent a continuous function $\Z_d\to\Z_d$ by a van der Put series. We consider automatic sequences and define their portraits and sections in Section~\ref{sec:automatic}. The crucial argument relating van der Put coefficients of endomorphisms and their sections is given in~\ref{sec:portraits}. Section~\ref{sec:proof} contains the proof of Theorem~\ref{thm:characterization}. The algorithms relating Mealy and Moore automata associated with an endomorphism of $X^*$ and constituting the proof of Theorem~\ref{thm:main}, are given in Section~\ref{sec:relation}. Finally, two examples are worked out in full details in Section~\ref{sec:examples} that concludes the paper.

\section{Mealy Automata and endomorphisms of rooted trees}

\label{sec:pre}
We start this section by introducing the notions and terminology of endomorphisms and automorphisms of regular rooted trees and transformations generated by Mealy automata. For more details, the reader is referred to~\cite{gns00:automata}.

Let $X=\{0,1,\ldots ,d-1\}$ be a finite alphabet with $d\geq 2$ elements (called letters) and let $X^*$ denote the set of all finite words over $X$. The set $X^*$ can be equipped with the structure of a rooted $d$-ary tree by declaring that $v$ is adjacent to $vx$ for every $v\in X^*$ and $x\in X$. Thus finite words over $X$ serve as vertices of the tree. The empty word corresponds to the root of the tree and for each positive integer $n$ the set $X^n$ corresponds to the $n$-th level of the tree. Also the set $X^\infty$ of infinite words over $X$ can be identified with the \emph{boundary} of the tree $X^*$ consisting of all infinite paths in the tree without backtracking initiating at the root. We will consider endomorphisms and automorphisms of the tree $X^*$ (i.e., the maps and bijections of $X^*$ that preserve the root and the adjacency of vertices). We will sometimes denote the tree $X^*$ sa $T_d$. The semigroup of all endomorphisms of $T_d$ is denoted by $\End(T_d)$ and the group of all automorphisms of $T_d$ is denoted by $\Aut(T_d)$. To operate with such objects, we will use the language of Mealy automata.

\begin{definition}
A \emph{Mealy automaton} (or simply \emph{automaton}) is a 4-tuple \[(Q,X,\delta,\lambda),\]
where
\begin{itemize}
\item $Q$ is a set of states
\item $X$ is a finite alphabet (not necessarily $\{0,1,\ldots,d-1\}$)
\item $\delta\colon Q\times X\to Q$ is the \emph{transition function}
\item $\lambda\colon Q\times X\to X$ is the \emph{output function}.
\end{itemize}
If the set of states $Q$ is finite, the automaton is called \emph{finite}. If for every state $q\in Q$ the output function $\lambda_q(x)=\lambda(q,x)$ induces
a permutation of $X$, the automaton $\A$ is called \emph{invertible}. Selecting a state $q\in Q$ produces an \emph{initial automaton} $\A_q$, which formally is a $5$-tuple $(Q,X,\delta,\lambda,q)$.
\end{definition}

Here we consider automata with the same input and output alphabets.

Automata are often represented by their \emph{Moore diagrams}. The Moore diagram of automaton $\A=(Q,X,\delta,\lambda)$ is a directed graph in which the vertices are in bijection with the states of $Q$ and the edges have the form $q\stackrel{x|\lambda(q,x)}{\longrightarrow}\delta(q,x)$ for $q\in Q$ and $x\in X$. Figure~\ref{fig:lamp_aut} shows the Moore diagram of the automaton $\A$ that, as will be explained later, generates the lamplighter group $\mathcal L=(\Z/2\Z)\wr\Z$.

\begin{figure}[h]
\begin{center}
\epsfig{file=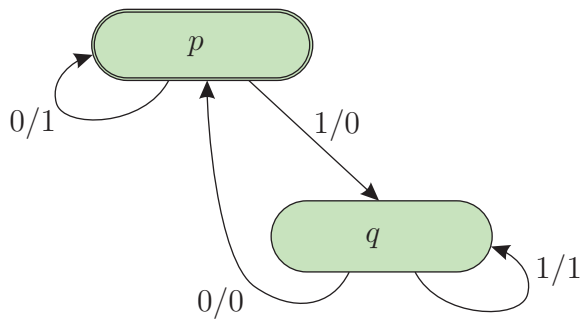}
\caption{Mealy automaton generating the lamplighter group $\mathcal L$\label{fig:lamp_aut}}
\end{center}
\end{figure}

Every initial automaton $\A_q$ induces an endomorphism of $X^*$, which will be also denoted by $\A_q$, defined as follows. Given a word
$v=x_1x_2x_3\ldots x_n\in X^*$, it scans its first letter $x_1$ and outputs $\lambda(q,x_1)$. The rest of the word is handled similarly by the initial automaton $\A_{\delta(q,x_1)}$. So we can actually extend the functions $\delta$ and $\lambda$ to $\delta\colon Q\times X^*\to Q$ and $\lambda\colon  Q\times X^*\to X^*$ via the equations
\[\begin{array}{l}
\delta(q,x_1x_2\ldots x_n)=\delta(\delta(q,x_1),x_2x_3\ldots x_n),\\
\lambda(q,x_1x_2\ldots x_n)=\lambda(q,x_1)\lambda(\delta(q,x_1),x_2x_3\ldots x_n).\\
\end{array}
\]

The boundary $X^\infty$ of the tree is endowed with a natural topology in which two infinite words are close if they have large common prefix. With this topology $X^\infty$ is homeomorphic to the Cantor set. Each endomorphism (automorphism) of $X^*$ naturally induces a continuous transformation (homeomorphism) of $X^\infty$.

\begin{definition}
The semigroup (group) generated by all states of an automaton $\A$ viewed as endomorphisms (automorphisms) of the rooted tree $X^*$ under the operation of composition is called an \emph{automaton semigroup (group)} and is denoted by $\mathbb{S}(\A)$ (respectively $\mathbb{G}(\A)$).
\end{definition}

In the definition of the automaton, we do not require the set $Q$ of states to be finite. With this convention, the notion of an automaton group is equivalent to the notions of \emph{self-similar group}~\cite{nekrash:self-similar} and \emph{state-closed group}~\cite{nekrash_s:12endomorph}. However, most of the interesting examples of automaton (semi)groups are finitely generated (semi)groups defined by finite automata.

Let $g\in\End(X^*)$ and $x\in X$. For any $v\in X^*$ we can write \[g(xv)=g(x)v'\] for some $v'\in X^*$. Then the map $g|_x\colon X^*\to X^*$ given by \[g|_x(v)=v'\] defines an endomorphism of $X^*$ which we call the \emph{state} (or \emph{section}) of $g$ at vertex $x$. We can inductively extend the definition of section at a letter $x\in X$ to section at any vertex $x_1x_2\ldots x_n\in X^*$ as follows. \[g|_{x_1x_2\ldots x_n}=g|_{x_1}|_{x_2}\ldots|_{x_n}.\]

We will adopt the following convention throughout the paper. If $g$ and $h$ are elements of some (semi)group acting on a set $Y$ and $y\in Y$, then $$gh(y)=h(g(y)).$$
Hence the state $g|_v$ at $v\in X^*$ of any product $g=g_1g_2\cdots g_n$, where $g_i\in\Aut(X^*)$ for $1\leq i\leq n$, can be computed as follows:
$$g|_v=g_1|_v g_2|_{g_1(v)}\cdots
g_n|_{g_1g_2\cdots g_{n-1}(v)}.$$

Also we will use the language of the wreath recursions. For each automaton semigroup $G$ there is a natural embedding
\[G\hookrightarrow G \wr \Tr(X),\]
where $\Tr(X)$ denotes the semigroup of all selfmaps of set $X$. This embedding is given by
\begin{equation}
\label{eq:wreath}
G\ni g\mapsto (g_0,g_1,\ldots,g_{d-1})\sigma_g\in G\wr \Tr(X),
\end{equation}
where $g_0,g_1,\ldots,g_{d-1}$ are the states of $g$ at the vertices of the first level, and $\sigma_g$ is the transformation of $X$ induced by the action of $g$ on the first level of the tree. If $\sigma_g$ is the trivial transformation, it is customary to omit it in~\eqref{eq:wreath}. We call $(g_0,g_1,\ldots,g_{d-1})\sigma_g$ the \emph{decomposition of $g$ at the first level} (or the \emph{wreath recursion of $g$}).

In the case of the automaton group $G=\mathbb{G}(\A)$, the embedding~\eqref{eq:wreath} is actually the embedding into the group $G\wr\Sym(X)$.

The decomposition at the first level of all generators $\A_q$ of an automaton semigroup $\mathbb{S}(\A)$ under the embedding~\eqref{eq:wreath} is called the \emph{wreath recursion} defining the semigroup. It is a convenient language when doing computations involving the states of endomorphisms. Indeed, the products endomorphisms and inverses of automorphisms can be found as follows. If $g\mapsto (g_0,g_1,\ldots,g_{d-1})\sigma_g$ and $h\mapsto (h_0,h_1,\ldots,h_{d-1})\sigma_h$ are two elements of $\End(X^*)$, then $$gh=(g_0h_{\sigma_g(0)},g_1h_{\sigma_g(1)},\ldots,g_{d-1}h_{\sigma_g(d-1)})\sigma_g\sigma_h$$ and in the case when $g$ is an automorphism, the wreath recursion of $g^{-1}$ is
$$g^{-1}=(g^{-1}_{\sigma_g^{-1}(0)},g^{-1}_{\sigma_g^{-1}(1)},\ldots,g^{-1}_{\sigma_g^{-1}(d-1)})\sigma_g^{-1}.$$

\section{Continuous maps from $\Z_d$ to $\Z_d$}
\label{sec:continuous}
In this section we will recall how to represent every continuous function $f\colon\Z_d\to\Z_d$ by its van der Put series. For details when $d=p$ is prime we refer the reader to Schikhof's book~\cite{schikhof:ultrametric84} and for needed facts about the ring of $d$-adic integers we recommend~\cite[Section 4.2]{goresky_k:alg_shift_register_sequences12} and~\cite{katok:p-adic_analysis07}. Here we will relate the coefficient of these series to the vertices of the rooted $d$-ary tree, whose boundary is identified with $\Z_d$.

First, we recall that the ring of $d$-adic integers $\Z_d$ for arbitrary (not necessarily prime) $d$ is defined as the set of all formal sums
\[\Z_d=\bigl\{a_0+a_1d+a_2d^2+\cdots\colon a_i\in\{0,1,\ldots,d-1\}=\Z/d\Z, i\geq 0\bigr\},\]
where addition and multiplication are defined in the same way as in $\Z_p$ for prime $p$ taking into account the carry over. Also, the ring $\Q_d$ of $d$-adic numbers can be defined as the full ring of fractions of $\Z_d$, but we will only need to use elements of $\Z_d$ below. Algebraically, if $d=p_1^{n_1}p_2^{n_2}\cdots p_k^{n_k}$ is the decomposition of $d$ into the product of primes, then
\[\Z_d=\Z_{p_1}\times\Z_{p_2}\times\cdots\times\Z_{p_k}\qquad \text{ and }\qquad \Q_d=\Q_{p_1}\times\Q_{p_2}\times\cdots\times\Q_{p_k}.\]

As stated in the introduction, for the alphabet $X=\{0,1,\ldots,d-1\}$ we identify $\Z_d$ with the boundary $X^\infty$ of the rooted $d$-ary regular tree $X^*$ in a natural way, viewing a $d$-adic number $x_0+x_1d+{x_2d^2+\cdots}$ as a point $x_0x_1x_2\ldots\in X^\infty$. This identification gives rise to an embedding of $\N_0=\N\cup\{0\}$ into $X^*$ via $n\mapsto [n]_d$, where $[n]_d$ denotes the word over $X$ representing the expansion of $n$ in base $d$ written backwards (so that, for example, $[6]_2=011$). There are two standard ways to define the image $[0]_d$ of $0\in\N_0$: one can define it to be either the empty word $\varepsilon$ over $X$ of length 0, or a word $0$ of length 1. These two choices will give rise later to two similar versions of the van der Put bases in the space of continuous functions from $\Z_d$ to $\Z_d$, that we will call Mahler and Schikhof versions. Throughout the paper we will use Mahler's version and, unless otherwise stated, we will define $[0]_d=0$ (the word of length 1). However, we will state some of the results for Schikhof's version as well. Note, that the image of $\N\cup\{0\}$ consists of all vertices of $X^*$ that do not end with $0$, and the vertex $0$ itself. We will called these vertices \emph{labelled}. For example, the labelling of the binary tree is shown in Figure~\ref{fig:binary_labelling}. The inverse of this embedding, with a slight abuse of notation as the notation does not explicitly mention $d$, we will denote by bar $\overline{\phantom a}$. In other words, if $u=u_0u_1\ldots u_n\in X^*$, then $\overline u=u_0+u_1d+\cdots+u_nd^{n}\in\N_0$. We note that the operation $u\mapsto\overline u$ is not injective as $\overline u=\overline{u0^k}$ for all $k\geq 0$.

\begin{figure}[h]
\begin{center}
\epsfig{file=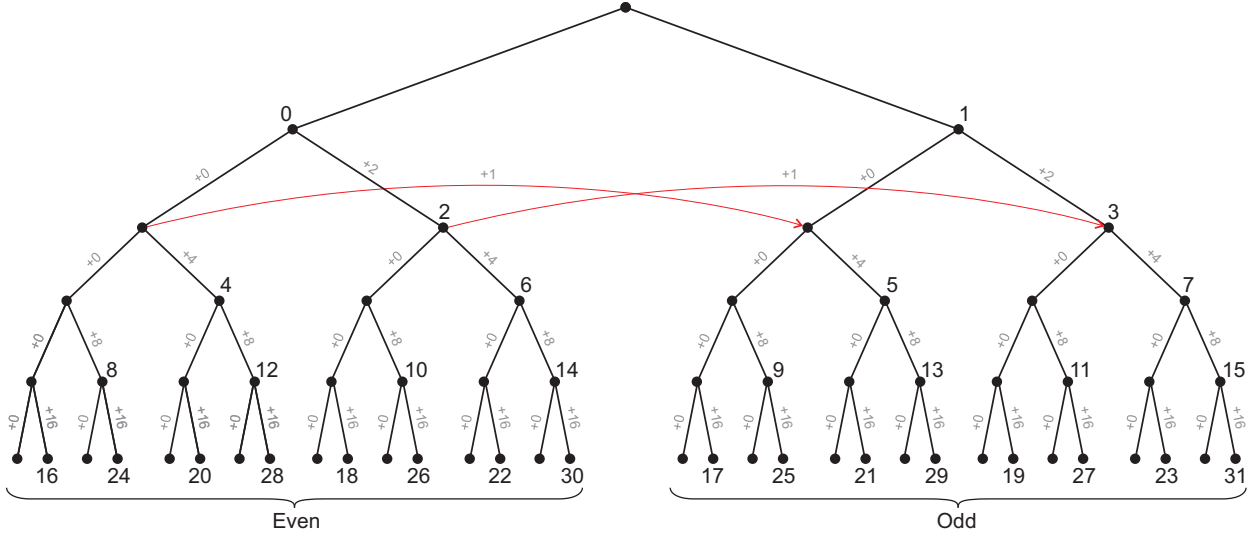, width=\textwidth}
\caption{Labelling of vertices of a binary tree by elements of $\N_0$\label{fig:binary_labelling}}
\end{center}
\end{figure}

Under this notation we can also define for each $n\geq 0$ a cylindrical subset $[n]_dX^\infty\subset\Z_d$ that consists of all $d$-adic integers that have $[n]_d$ as a prefix. Geometrically thus set can be envisioned as the boundary of the subtree of $X^*$ hanging down from the vertex $[n]_d$.

For $n>0$ with the $d$-ary expansion $n=x_0+x_1d+\cdots +x_kd^k$, $x_k\neq 0$, we define $n\_=n-x_kd^k$. Geometrically, $n\_$ is the label of the closest to $n$ labelled vertex in $X^*$ along the unique path from $n$ to the root of the tree. For example, for $n=22$ we have $[n]_2=01101$, so $[n\_]_2=011$ and $n\_=6$.

We are ready to define the decomposition of a continuous function $f\colon\Z_d\to\Z_d$ into a van der Put series. For each such function there is a unique sequence $(B^f_n)_{n\geq 0}$, $B^f_n\in\Z_d$ of $d$-adic integers such that for each $x\in\Z_d$ the following expansion
\begin{equation}
\label{eq:mahler}
f(x)=\sum_{n\geq 0}B^f_n\chi_n(x),
\end{equation}
holds, where $\chi_n(x)$ is the characteristic function of the cylindrical set $[n]_dX^\infty$ with values in $\Z_d$. The coefficients $B^f_n$ are called the \emph{van der Put coefficients} of $f$ and are computed as follows:

\begin{equation}
\label{eqn:B_n}
B^f_n=\left\{
\begin{array}{ll}
f(n),&\text{if}\ \ 0\leq n<d,\\
f(n)-f(n\_),&\text{if}\ \ n\geq d.
\end{array}
\right.
\end{equation}

This is the decomposition with respect to the orthonormal van der Put basis $\{\chi_n(x)\colon n\geq 0\}$ of the space $C(\Z_d)$ of continuous functions from $\Z_d$ (as $\Z_d$-module) to itself, as given in Mahler's book~\cite{mahler:p-adic}, and also used in~\cite{anashin_ky:characterization_of_ergodic11}.
In the literature this basis is considered only when $d=p$ is a prime number, and is, in fact, an orthonormal basis of a larger space $C(\Z_p\to K)$ of continuous functions from $\Z_p$ to a normed field $K$ containing the field of $p$-adic rationals $\mathbb Q_p$. However, the given decomposition works in our context with all the proofs identical to the ``field'' case.

To avoid possible confusion we note that there is another standard version of the van der Put basis $\{\tilde\chi_n(x)\colon n\geq 0\}$ used, for example, in Schikhof's book~\cite{schikhof:ultrametric84}. We will call this version of a basis Schikhof's version. In this basis $\tilde\chi_n=\chi_n$ for $n>0$, and $\tilde\chi_0$ is the characteristic function of the whole space $\Z_d$ (while $\chi_0$ is the characteristic function of $d\Z_d=0X^\infty$). This difference comes exactly from two ways of defining $[0]_d$ that was mentioned earlier. If $[0]_d=0$, we obtain the version of basis used by Mahler, and defining $[0]_d=\varepsilon$ (the empty word) yields the basis used by Schikhof. This difference does not change much the results and the proofs and we will give formulations of some of our results for both bases. In particular, the decomposition~\eqref{eq:mahler} is transformed into
\begin{equation}
\label{eq:schikhof}
f(x)=\sum_{n\geq 0}\tilde B^f_n\tilde\chi_n(x),
\end{equation}
where Schikhof's versions of van der Put coefficients $\tilde B^f_n$ are computed as:
\begin{equation}
\label{eqn:B_n_schikhof}
\tilde B^f_n=\left\{
\begin{array}{ll}
f(0),&\text{if}\ \ n=0,\\
f(n)-f(n\_),&\text{if}\ \ n>0.
\end{array}
\right.
\end{equation}

Among all continuous functions $\Z_d\to\Z_d$ we are interested in those that define endomorphisms of $X^*$ (viewed as a tree). We will use the following useful characterization of these maps in terms of the coefficients of their van der Put series (which works for both versions of the van der Put basis). In the case of prime $d$ this easy fact is given in~\cite{anashin_ky:characterization_of_ergodic11}. The proof in general case is basically the same and we omit it.

\begin{theorem}
\label{thm:lipschitz}
A function $\Z_d\to\Z_d$ is 1-Lipschitz if and only if it can be represented as
\begin{equation}
\label{eq:reduced_decomposition}
f(x)=\sum_{n\geq 0}\vpc^f_nd^{\lfloor\log_dn\rfloor}\chi_n(x),
\end{equation}
where $\vpc^f_n\in\Z_d$ for all $n\geq 0$, and
\[\lfloor \log_dn\rfloor= (\text{the number of digits in the base-$d$ expansion of}\ n)-1.\]
\end{theorem}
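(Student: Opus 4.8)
The plan is to prove the equivalence by analyzing what the 1-Lipschitz condition says about consecutive van der Put coefficients, and then matching the resulting factor of $d$ to the exponent $\lfloor\log_d n\rfloor$. First I would recall that under the identification of $\Z_d$ with $X^\infty$, the metric on $\Z_d$ satisfies $|x-y|_d = d^{-k}$ where $k$ is the length of the longest common prefix of $x$ and $y$ (equivalently, the depth in the tree at which their paths diverge). A map is $1$-Lipschitz exactly when $|f(x)-f(y)|_d \le |x-y|_d$ for all $x,y$, which in tree language means: whenever $x$ and $y$ agree on their first $k$ coordinates, so do $f(x)$ and $f(y)$. This is precisely the statement that $f$ is induced by a graph endomorphism of $X^*$, consistent with the cited result of~\cite{gns00:automata}.

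The key computation is to translate this prefix-preservation condition into a divisibility condition on the coefficients $B^f_n$. For $n\ge d$ with $d$-ary expansion having $k+1$ digits, so that $n$ sits at level $k$ of the tree, the vertices $[n]_d$ and $[n\_]_d$ are joined by a path in which they share a common prefix of length $k$ (they agree in all coordinates except the last, leading one). Applying the $1$-Lipschitz inequality to $d$-adic integers extending these two vertices gives $|f(n)-f(n\_)|_d \le d^{-k}$, i.e. $d^k \mid B^f_n$ in $\Z_d$. Since $k = \lfloor\log_d n\rfloor$ by the stated characterization of that floor as (number of digits) $-1$, this shows $B^f_n = \vpc^f_n d^{\lfloor\log_d n\rfloor}$ for some $\vpc^f_n\in\Z_d$, giving the representation~\eqref{eq:reduced_decomposition} in one direction.

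For the converse, I would assume $f$ has the form~\eqref{eq:reduced_decomposition} and verify the $1$-Lipschitz inequality directly. Given $x,y$ agreeing to depth $k$, I would examine the difference $f(x)-f(y) = \sum_{n\ge 0} B^f_n(\chi_n(x)-\chi_n(y))$ and observe that the terms $\chi_n(x)-\chi_n(y)$ vanish for every $n$ whose representing vertex $[n]_d$ has length at most $k$, since such a cylinder either contains both $x,y$ or neither. Hence only indices $n$ at level $\ge k$ contribute, and for those $d^{\lfloor\log_d n\rfloor} = d^{(\text{level of }[n]_d)} $ is divisible by $d^k$, forcing $|f(x)-f(y)|_d \le d^{-k} = |x-y|_d$. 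The main obstacle to watch for is the careful bookkeeping at the boundary cases --- the indices $0 \le n < d$ (the top level, where the floor is $0$ and the coefficient is simply $f(n)$) and the role of the Mahler convention $[0]_d = 0$ versus the empty word. I would treat these small-$n$ terms separately and confirm that, because they correspond to level-$0$ or level-$1$ cylinders, they never obstruct the inequality for $k\ge 1$ and are consistent with the convention fixing $B^f_0 = f(0)$. The only genuinely delicate point is verifying that the exponent $\lfloor\log_d n\rfloor$ exactly equals the tree-level of the vertex $[n]_d$ for all $n\ge 1$, which follows from the digit-count description already supplied in the statement.
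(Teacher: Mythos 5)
Your argument is correct and is essentially the standard one: the paper itself omits the proof, citing the prime case in Anashin--Khrennikov--Yurova and noting the general case is identical, and that argument is exactly what you give --- $n\equiv n\_\pmod{d^{\lfloor\log_d n\rfloor}}$ forces $d^{\lfloor\log_d n\rfloor}\mid B^f_n$ in one direction, and the ultrametric estimate on $\sum_n B^f_n(\chi_n(x)-\chi_n(y))$, using that cylinders of depth at most $k$ contribute nothing when $x\equiv y\pmod{d^k}$, gives the other. Your boundary bookkeeping for $0\le n<d$ and the identification of $\lfloor\log_d n\rfloor$ with the tree level of $[n]_d$ are both handled correctly.
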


We will call the coefficients $\vpc^f_n$ from Theorem~\ref{thm:lipschitz} the \emph{reduced van der Put coefficients}. It follows from equation~\eqref{eqn:B_n} that these coefficients are computed as
\begin{equation}
\label{eqn:lambda_n}
\vpc^f_n=B^f_n d^{-\lfloor\log_dn\rfloor}=\left\{
\begin{array}{ll}
f(n),&\text{if}\ \ 0\leq n<d,\\[2mm]
\frac{f(n)-f(n\_)}{d^{\lfloor\log_dn\rfloor}},&\text{if}\ \ n\geq d.
\end{array}
\right.
\end{equation}

For Schikhof's version of the van der Put basis equation~\eqref{eq:reduced_decomposition} has to be replaced with
\[f(x)=\sum_{n\geq 0}\tilde\vpc^f_nd^{\lfloor\log_dn\rfloor}\chi_n(x)\]
and corresponding reduced van der Put coefficients are computed as
\[\tilde\vpc^f_n=\tilde B^f_n d^{-\lfloor\log_dn\rfloor}=\left\{
\begin{array}{ll}
f(0),&\text{if}\ \ n=0,\\[2mm]
\frac{f(n)-f(n\_)}{d^{\lfloor\log_dn\rfloor}},&\text{if}\ \ n>0.
\end{array}
\right.
\]
In particular, $\tilde\vpc^f_n=\vpc^f_n$ for all $n\geq d$.

We note that since Schikhof's reduced van der Put coefficients $\tilde\vpc^f_n$ differ from the $\vpc^f_n$ only for $n<d$, the claim of Theorem~\ref{thm:characterization} clearly remains true for Schikhof's van der Put series as well.

\section{Automatic sequences}
\label{sec:automatic}
There are several equivalent ways to define $d$-automatic sequences. We will refer the reader to Allouche-Shallit's book~\cite{allouche:automatic_sequences03} for details. Informally, a sequence $(a_n)_{n\geq 0}$ is called $d$-automatic if one can compute $a_n$ by feeding a deterministic finite automaton with output (DFAO) the base-$d$ representation of $n$, and then applying the output mapping $\tau$ to the last state reached. We first recall the definition of the (Moore) DFAO and then give the formal definition of automatic sequences.

\begin{definition}
A \emph{deterministic finite automaton with output} (or a \emph{Moore} automaton) is defined to be a 6-tuple
\[\B=(Q,X,\delta,q_0,A,\tau)\]
where\\
\begin{itemize}
\item $Q$ is a finite \emph{set of states}
\item $X$ is the finite \emph{input alphabet}
\item $\delta\colon Q\times X\to Q$ is the \emph{transition function}
\item $q_0\in Q$ is the \emph{initial state}
\item $A$ is the \emph{output alphabet}
\item $\tau\colon Q\to A$ is the \emph{output function}
\end{itemize}
In the case when the input alphabet is $X=\{0,1,\ldots,d-1\}$ we will call the corresponding automaton a $d$-DFAO.
\end{definition}

Similarly to the case of Mealy automata, we extend the transition function $\delta$ to $\delta\colon Q\times X^*\to Q$. With this convention, a $d$-DFAO defines a function $f_M\colon X^*\to A$ by $f_M(w)=\tau(\delta(q_0,w))$.

Note, that Moore automata can also be viewed as transducers as well by recording the values of the output function at every state while reading the input word. This way each word over $X$ will be transformed into a word over $A$ of the same length. This model of calculations is equivalent to Mealy automata (in the more general case when the output alphabet is allowed to be different from the input alphabet) in the sense that for each Moore automaton there exists a Mealy automaton that defines the same transformation from $X^*$ to $A^*$ and vice versa (see~\cite{sholomov:osnovy_diskr80} for details).

Recall that for a word $w=x_0x_1\ldots x_n\in X^*$ we denote by $\overline w=x_0+x_1d+\cdots+x_nd^{n}\in\N_0$ the label of the closest to $w$ labelled vertex in $X^*$ along the unique path from $w$ to the root of the tree.

\begin{definition}[\cite{allouche:automatic_sequences03}]
\label{def:automaticity}
We say that a sequence $(a_n)_{n\geq 0}$ over a finite alphabet $A$ is \emph{$d$-automatic} if there exists a $d$-DFAO $\B=(Q,X,\delta,q_0,A,\tau)$ such that $a_n=\tau(\delta(q_0,w))$ for all $n\geq 0$ and $w\in X^*$ with $\overline w=n$.
\end{definition}

For us it will be more convenient to use the alternative characterization of automatic sequences (for the proof see for instance~\cite{allouche:automatic_sequences03}).

\begin{theorem}
\label{thm:automatic_kernel}
A sequence $(a_n)_{n\geq 0}$ over an alphabet $A$ is \emph{$d$-automatic} if and only if the collection of its subsequences of the form $\{(a_{j+n\cdot d^i})_{n\geq 0}\mid i\geq 0, 0\leq j<d^i\}$, called the \emph{$d$-kernel}, is finite.
\end{theorem}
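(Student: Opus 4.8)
The plan is to prove both directions of the equivalence by relating the $d$-kernel of the sequence directly to the collection of subsequences obtained by feeding words into a $d$-DFAO. The key observation is that the operations appearing in the $d$-kernel, namely passing from $(a_n)$ to $(a_{j+n\cdot d^i})$, correspond exactly to reading a fixed prefix in the automaton and then continuing. First I would set up the correspondence between indices and words: for an index $n$ with base-$d$ expansion, the word $w$ with $\overline w = n$ is its reversed digit string, and the subsequence $(a_{j+n\cdot d^i})_{n\geq 0}$ collects the values $a_m$ for all $m$ whose length-$i$ low-order block of digits spells out (the reverse of) $j$. The crux is that appending $i$ digits on the high-order side of $n$ in the index corresponds to prepending the corresponding letters to the word read by the automaton.

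For the forward direction (automatic $\Rightarrow$ finite kernel), I would assume a $d$-DFAO $\B=(Q,X,\delta,q_0,A,\tau)$ computing $(a_n)$ in the sense of Definition~\ref{def:automaticity}. The plan is to show each kernel subsequence $(a_{j+n\cdot d^i})_{n\geq 0}$ is itself computed by a $d$-DFAO, namely the same automaton but with the initial state changed to $\delta(q_0, [j]_d)$ where $[j]_d$ is padded to length exactly $i$. Concretely, for $w$ with $\overline w = n$, the word representing $j + n\cdot d^i$ is (the length-$i$ representation of $j$) followed by $w$, so $\delta(q_0, \text{that word}) = \delta(\delta(q_0,\text{pad of }[j]_d),w)$; thus each subsequence depends only on the state $\delta(q_0,\cdot)$ reached after reading the fixed prefix. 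Since $Q$ is finite, only finitely many distinct states can serve as these new initial states, so only finitely many distinct subsequences arise, giving a finite $d$-kernel.

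For the converse (finite kernel $\Rightarrow$ automatic), I would build a $d$-DFAO whose states are the distinct elements of the $d$-kernel. The plan is to take $Q$ to be the (finite) set of kernel subsequences, with initial state the full sequence $(a_n)_{n\geq 0}$, transition $\delta((c_m)_{m\geq 0}, x) = (c_{x + m\cdot d})_{m\geq 0}$ for $x\in X$, and output $\tau((c_m)_{m\geq 0}) = c_0$. One must check that this transition is well-defined and closed on the kernel (each single-letter transition sends a kernel element to another kernel element, since applying the map $m\mapsto x+md$ to a subsequence indexed by $j+n\cdot d^i$ yields one indexed by $(x+jd)+n\cdot d^{i+1}$), and that reading the word for $n$ and applying $\tau$ recovers $a_n$.

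The main obstacle, which deserves care rather than difficulty, is bookkeeping the mismatch between the \emph{reversed} (least-significant-digit-first) word convention used in this paper's identification $n\mapsto[n]_d$ and the standard most-significant-digit-first convention in Allouche--Shallit, together with the handling of the padding of $j$ to length exactly $i$ and the role of the special vertex $[0]_d=0$. In particular I would verify that the well-definedness of the subsequences under the bar map (recall $\overline w=\overline{w0^k}$) is consistent with the claim that the value $a_n$ does not depend on which word $w$ with $\overline w=n$ is chosen, so that the DFAO is genuinely reading the automatic sequence in the prescribed sense. Once these conventions are pinned down, both inclusions between the finiteness of the kernel and the existence of a $d$-DFAO follow by the state-tracking argument above.
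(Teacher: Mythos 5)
Your argument is correct and is essentially the standard proof of the kernel characterization, which the paper itself does not prove but cites from Allouche--Shallit; your converse construction coincides with what the paper records as Proposition~\ref{prop:moore_automaton}, and your forward direction is the usual state-tracking argument, correctly adapted to the least-significant-digit-first convention. One small index slip: under that convention the section at $x$ of $(a_{j+n d^i})_{n\geq 0}$ is $(a_{(j+x d^i)+n d^{i+1}})_{n\geq 0}$, so the new offset is $j+x d^i$ rather than $x+jd$, though this does not affect your (correct) claim that the kernel is closed under single-letter sections.
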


We will recall the connection between the $d$-DFAO defining a $d$-automatic sequence $(a_n)_{n\geq 0}$ and the $d$-kernel of this sequence (see Theorem~6.6.2 in~\cite{allouche:automatic_sequences03}).
For that we define the section of a sequence $(a_n)_{n\geq 0}$ at a word $v$ over $X=\{0,1,\ldots,d-1\}$ recursively as follows.

\begin{definition}
\label{def:section}
Let $(a_n)_{n\geq 0}$ be a sequence over alphabet $A$. Its \emph{$d$-section $(a_n)_{n\geq 0}\bigl|_x$ at $x\in X=\{0,1,\ldots,d-1\}$} is a subsequence $(a_{x+nd})_{n\geq 0}$. For a word $v=x_1x_2\ldots x_k$ over $X$ we further define the \emph{$d$-section $(a_n)_{n\geq 0}\bigl|_v$ at $v$} to be either $(a_n)_{n\geq 0}$ itself if $v$ is the empty word or $(a_n)_{n\geq 0}\bigl|_{x_1}\bigl|_{x_2}\ldots\bigl|_{x_k}$ otherwise.
\end{definition}

We will often omit $d$ in the term $d$-section when $d$ is clear from the context. The $d$-kernel of a sequence consists exactly of $d$-sections and $d$-automaticity of a sequence can be reformulated as:

\begin{proposition}
\label{prop:automatic_sections}
A sequence $(a_n)_{n\geq 0}$ over an alphabet $A$ is $d$-automatic if and only if the set $\bigl\{(a_n)_{n\geq 0}\bigl|_v : v\in X^*\bigr\}$ is finite.
\end{proposition}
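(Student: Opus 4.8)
The plan is to prove Proposition~\ref{prop:automatic_sections} by showing that it is essentially a restatement of Theorem~\ref{thm:automatic_kernel}, with the only work being to identify the set of $d$-sections indexed by words $v \in X^*$ with the $d$-kernel of the sequence. The core observation is that the recursive definition of the $d$-section (Definition~\ref{def:section}) unwinds into an explicit arithmetic formula, and matching that formula against the indexing of the $d$-kernel finishes the argument.

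First I would compute the $d$-section at an arbitrary word $v = x_1 x_2 \ldots x_k$ explicitly. Applying Definition~\ref{def:section} one letter at a time, taking the $x_1$-section replaces $(a_n)_{n\geq 0}$ by $(a_{x_1 + nd})_{n \geq 0}$; taking the $x_2$-section of that replaces it by $(a_{x_1 + (x_2 + nd)d})_{n\geq 0} = (a_{x_1 + x_2 d + n d^2})_{n \geq 0}$, and so on. By induction on $k$ one obtains
\begin{equation}
\label{eq:section_formula}
(a_n)_{n\geq 0}\bigl|_v = (a_{\,j + n d^k})_{n \geq 0}, \qquad \text{where } j = x_1 + x_2 d + \cdots + x_k d^{k-1} = \overline{v}.
\end{equation}
The key point is that as $v$ ranges over all words of length exactly $k$, the quantity $j = \overline{v}$ ranges over precisely the integers $0 \leq j < d^k$, each obtained from a unique $v$. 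This is exactly the range of indices appearing in the definition of the $d$-kernel in Theorem~\ref{thm:automatic_kernel} with $i = k$.

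With formula~\eqref{eq:section_formula} in hand, I would argue the equality of the two sets directly. The set $\{(a_n)_{n\geq 0}\bigl|_v : v \in X^*\}$ is the union over all $k \geq 0$ of the sections coming from words of length $k$, and by the bijection $v \mapsto \overline{v}$ between words of length $k$ and integers in $[0, d^k)$, this union equals $\{(a_{j + n d^i})_{n\geq 0} \mid i \geq 0,\ 0 \leq j < d^i\}$, which is the $d$-kernel. (The empty word contributes the full sequence itself, matching the $i=0$, $j=0$ term.) Hence the two sets coincide, so one is finite if and only if the other is. The equivalence with $d$-automaticity is then immediate from Theorem~\ref{thm:automatic_kernel}: the sequence is $d$-automatic iff its $d$-kernel is finite iff $\{(a_n)_{n\geq 0}\bigl|_v : v \in X^*\}$ is finite.

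I do not anticipate a genuine obstacle here, since the statement is a reformulation rather than a new result; the only point requiring care is the bookkeeping in~\eqref{eq:section_formula} and the observation that words of length $k$ biject with $[0,d^k)$ under $\overline{\phantom{a}}$. One subtlety worth flagging is that distinct words can yield the same section as a \emph{sequence} even when they have different lengths or different values of $\overline{v}$ (for instance padding by trailing zeros, reflecting the non-injectivity of $\overline{\phantom a}$ noted earlier); but this only means the two sets contain repeated elements, which does not affect the finiteness equivalence, so no extra argument is needed beyond noting set equality of the underlying collections.
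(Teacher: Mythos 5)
Your proof is correct and follows exactly the route the paper intends: the paper gives no separate proof of Proposition~\ref{prop:automatic_sections}, simply asserting that the $d$-kernel consists exactly of the $d$-sections so that the statement is a reformulation of Theorem~\ref{thm:automatic_kernel}, and your unwinding of Definition~\ref{def:section} into $(a_n)_{n\geq 0}\bigl|_v=(a_{\overline{v}+nd^{|v|}})_{n\geq 0}$ together with the bijection between length-$k$ words and $[0,d^k)$ is precisely the bookkeeping that justifies that assertion. Nothing further is needed.
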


The subsequences involved in the definition of the $d$-kernel can be plotted on the $d$-ary rooted tree $X^*$, where the vertex $v\in X^*$ is labelled with the subsequence $(a_n)|_v$. For $d=2$ such a tree is shown in Figure~\ref{fig:subsequences_tree}.

\begin{figure}[h]
\begin{center}
\epsfig{file=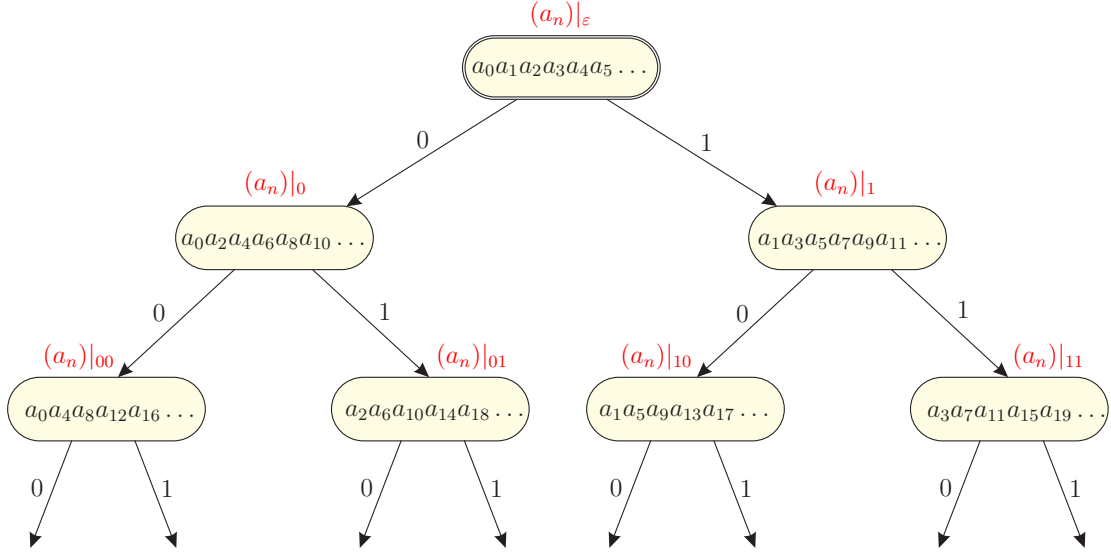,width=400pt}
\caption{The tree of subsequences of $(a_n)_{n\geq 0}$ constituting its $d$-kernel (for $d=2$)\label{fig:subsequences_tree}}
\end{center}
\end{figure}

A convenient way to represent sections of a sequence and understand $d$-automaticity is to put the terms of this sequence on a $d$-ary tree. Recall, that in the previous section we have constructed an embedding of $\N_0$ into $X^*$ via $n\mapsto [n]_d$. Under this embedding we will call the image of $n\in\mathbb N\cup\{0\}$ the vertex $n$ of $X^*$.

\begin{definition}
The $d$-\emph{portrait} of a sequence $(a_n)_{n\geq0}$ over an alphabet $A$ is a $d$-ary rooted tree $X^*$ where the vertex $n$ is labelled by $a_n$ and other vertices are unlabelled.
\end{definition}

In other words, we label each vertex $v=x_0x_1\ldots x_k$ with $x_k\neq 0$ or $v=0$ by $a_{\overline{v}}=a_{x_0+x_1d+\cdots+x_kd^k}$. For example, Figure~\ref{fig:binary_portrait_an} represents the 2-portrait of the sequence $(a_n)_{n\geq0}$.

\begin{figure}[h]
\begin{center}
\epsfig{file=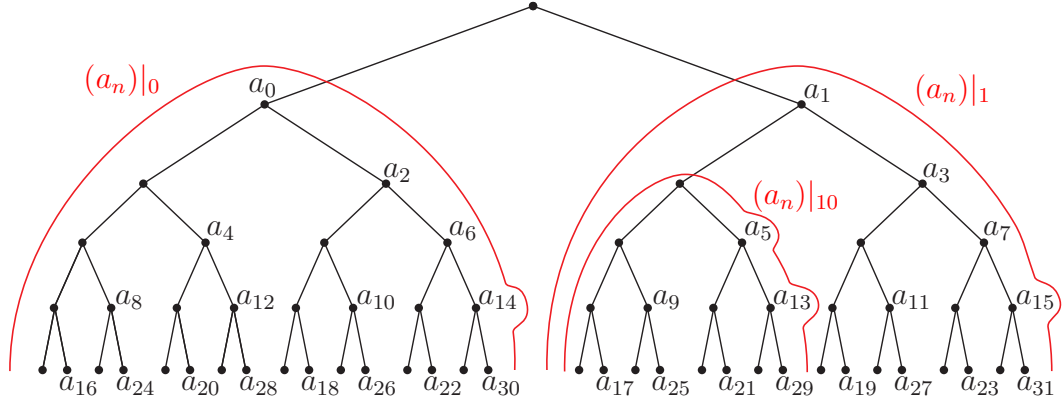, width=400pt}
\caption{The 2-portrait of the sequence $(a_n)_{n\geq0}$\label{fig:binary_portrait_an}}
\end{center}
\end{figure}

To simplify the exposition we will write simply \emph{portrait} for $d$-portrait when the value of $d$ is clear from the context. In particular, unless otherwise stated, $X$ will denote an alphabet $\{0,1,\ldots,d-1\}$ of cardinality $d$ and a portrait will mean a $d$-portrait.

There is a simple connection between the portrait of a sequence and the portrait of its section at vertex $v\in X^*$ that takes into account that the subtree $vX^*$ of $X^*$ hanging down from vertex $v$ is canonically isomorphic to $X^*$ itself via $vu\leftrightarrow u$ for each $u\in X^*$.

\begin{proposition}
\label{prop:portrait}
For a sequence $(a_n)_{n\geq 0}$ over an alphabet $A$ with a portrait $P$ and a vertex $v=x_0x_1\ldots x_k$, $k\geq0$ of $X^*$ the portrait of the section $(a_n)_{n\geq 0}\bigl|_v$ is obtained from the portrait of $(a_n)_{n\geq 0}$ by taking the (labelled) subtree of $P$ hanging down from vertex $v$, removing, if $v$ ends with $x_k\neq 0$ and $k>0$, the label at its root vertex, and labelling the vertex 0 by $a_{\overline{v}}=a_{x_0+x_1d+\cdots+x_kd^k}$, which is the label of the closest to $v0$ labelled vertex in $P$ on the unique path connecting $v0$ to the root.
\end{proposition}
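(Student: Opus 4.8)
The plan is to reduce everything to two explicit identities followed by a short bookkeeping of which vertices carry labels. First I would unwind the recursive Definition~\ref{def:section}: iterating $(a_n)_{n\geq 0}\bigl|_x=(a_{x+nd})_{n\geq 0}$ along the letters of $v=x_0x_1\ldots x_k$ yields the closed form
\[
(a_n)_{n\geq 0}\bigl|_v=\bigl(a_{\overline v+n\,d^{\,k+1}}\bigr)_{n\geq 0},\qquad \overline v=x_0+x_1d+\cdots+x_kd^k,\ \ k+1=|v|,
\]
which is checked by a one-line induction on $|v|$. Writing $c_n:=a_{\overline v+n\,d^{\,|v|}}$ for the terms of this section, the second ingredient is the concatenation identity for the bar map, $\overline{vw}=\overline v+d^{\,|v|}\,\overline w$, valid for all $w\in X^*$.

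These two identities do essentially all the work. Under the canonical isomorphism $w\leftrightarrow vw$ between $X^*$ and the subtree of $P$ hanging from $v$, the vertex of $P$ matched to a vertex $w$ carries, whenever it is labelled, the value $a_{\overline{vw}}=a_{\overline v+\overline w\,d^{\,|v|}}=c_{\overline w}$, which is exactly the label that $w$ receives in the portrait of $(c_n)_{n\geq 0}$. Hence the two labellings agree wherever both corresponding vertices are labelled, and the entire content of the proposition is to reconcile the two \emph{sets} of labelled vertices, a discrepancy caused solely by the convention $[0]_d=0$.

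I would then run the comparison vertex by vertex. For $w\notin\{\varepsilon,0\}$ the word $vw$ ends in the same last letter as $w$ and differs from the one-letter word $0$, so $vw$ is a labelled vertex of $P$ exactly when $w$ is a labelled vertex of the section's portrait, and no correction is needed. Only the two exceptional vertices survive. At $w=0$ the image $v0$ ends in $0$ and is not the vertex $0$, hence is unlabelled in $P$, whereas the vertex $0$ of the section's portrait must carry $c_0=a_{\overline v}$; since $\overline{v0}=\overline v$, this $a_{\overline v}$ is precisely the label of the closest labelled vertex to $v0$ on the path to the root, which forces the relabelling prescribed in the statement. At the root $w=\varepsilon$ the section's portrait carries no label, while the matched vertex $v$ of $P$ is labelled precisely when $v$ ends in a nonzero letter (or $v=0$); for $|v|>1$ this reduces to $x_k\neq 0$, the condition under which the root label is removed.

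The step I expect to demand the most care is exactly this root bookkeeping in the single-letter case $k=0$: there $v$ is \emph{always} a labelled vertex of $P$, with label $a_{\overline v}$, so the naive subtree already carries a root label even though the portrait of a sequence never labels its root $\varepsilon$. I would resolve this by observing that the value at the root is never read off when one recovers a sequence from its portrait, so it is immaterial to the portrait; the stated operation (relabel vertex $0$ by $a_{\overline v}$, and remove the root label only when $v$ ends in a nonzero letter with $|v|>1$) therefore produces the correct portrait in all cases. An optional induction on $|v|$ that peels the last letter, applies the single-letter case to the already-computed section $(a_n)_{n\geq 0}\bigl|_{x_0\ldots x_{k-1}}$, and uses that the subtree of $P$ at $vx_k$ is the subtree at $x_k$ of the subtree at $v$, packages the same content more formally, but the direct argument above is already complete.
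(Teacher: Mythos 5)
Your proof is correct and follows the only available route: the paper itself gives no argument beyond declaring that the proposition ``follows immediately from the definitions of portrait and section,'' and your two identities $(a_n)_{n\geq 0}\bigl|_v=\bigl(a_{\overline v+nd^{|v|}}\bigr)_{n\geq 0}$ and $\overline{vw}=\overline v+d^{|v|}\overline w$ are exactly the definitional unwinding being alluded to. You are in fact more careful than the source on the one genuine edge case (the residual root label when $|v|=1$), which the paper only acknowledges informally as a ``minor technical issue,'' and your resolution of it is sound.
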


The proof of the above proposition follows immediately from the definitions of portrait and section.

In other words, as shown in Figure~\ref{fig:binary_portrait_an}, you can see the portrait of a section of a sequence $(a_n)_{n\geq 0}$ at vertex $v\in X^*$ just by looking at the subtree hanging down in the portrait of $(a_n)_{n\geq 0}$ from vertex $v$ (modulo minor technical issue of labelling the vertex $0$ of this subtree and possibly removing the label of the root vertex). Therefore, a sequence is automatic if and only if its portrait has finite number of ``subportraits'' hanging down from its vertices. This way of interpreting automaticity now corresponds naturally to the condition of an automaton endomorphism being finite state.

Note, that the formulation of the previous proposition would be simpler had we defined portraits by labelling each vertex $v=x_0x_1\ldots x_k$ of the tree by $a_{x_0+x_1d+\cdots x_kd^k}$ instead of only numbered ones, but we have intentionally opted not to do that to simplify notations in the next section.

Now it is easy to see that the $d$-DFAO defining a $d$-automatic sequence $(a_n)_{n\geq0}$ over an alphabet $A$ with the $d$-kernel $K$ can be built as follows.

\begin{proposition}
\label{prop:moore_automaton}
Suppose $(a_n)_{n\geq0}$ is a $d$-automatic sequence over an alphabet $A$ with the $d$-kernel $K$. Then a $d$-DFAO $\B=(K,X,\delta,q_0,A,\tau)$, where:
\begin{equation}
\label{eqn:moore_transitions}
\begin{array}{rcl}
\delta((a_n)_{n\geq0}|_v, x)&=&(a_n)_{n\geq0}|_{vx},\\
\tau((a_n)_{n\geq0}|_v)&=&a_{\overline{v}}\ (\text{the first term of the sequence}\ (a_n)_{n\geq0}|_v),\\
q_0&=&(a_n)_{n\geq0}|_{\varepsilon}=(a_n)_{n\geq0}.
\end{array}
\end{equation}
defines the sequence $(a_n)_{n\geq0}$.
\end{proposition}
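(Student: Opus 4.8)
The plan is to verify directly that the $d$-DFAO $\B=(K,X,\delta,q_0,A,\tau)$ defined in~\eqref{eqn:moore_transitions} outputs the correct value $a_n$ on every input whose label equals $n$, as demanded by Definition~\ref{def:automaticity}. Since Proposition~\ref{prop:automatic_sections} already guarantees that the $d$-kernel $K=\{(a_n)_{n\geq 0}|_v : v\in X^*\}$ is finite for a $d$-automatic sequence, the state set is indeed finite, so the tuple $\B$ is a genuine $d$-DFAO; the only substantive point is correctness of the output.

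\medskip

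\noindent\textbf{First} I would check that $\delta$ is well defined on $K$, since a section $(a_n)_{n\geq 0}|_v$ may be reached from distinct words $v$. This follows from Definition~\ref{def:section}: the recursion $(a_n)_{n\geq 0}|_{vx}=\bigl((a_n)_{n\geq 0}|_v\bigr)|_x$ shows that the successor section depends only on the section $(a_n)_{n\geq 0}|_v$ and the letter $x$, not on the particular $v$ producing it. \textbf{Next} I would establish, by induction on the length of $w\in X^*$, the identity
\[
\delta(q_0,w)=(a_n)_{n\geq 0}\bigl|_w.
\]
The base case $w=\varepsilon$ is the definition $q_0=(a_n)_{n\geq 0}|_\varepsilon$, and the inductive step $\delta(q_0,wx)=\delta\bigl(\delta(q_0,w),x\bigr)=\delta\bigl((a_n)_{n\geq 0}|_w,x\bigr)=(a_n)_{n\geq 0}|_{wx}$ uses both the extension of $\delta$ to $X^*$ and the first line of~\eqref{eqn:moore_transitions}.

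\medskip

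\noindent\textbf{Then} I would compute the output. By the definition of the $d$-DFAO function and the displayed identity,
\[
f_M(w)=\tau\bigl(\delta(q_0,w)\bigr)=\tau\bigl((a_n)_{n\geq 0}|_w\bigr)=a_{\overline w},
\]
where the last equality is exactly the second line of~\eqref{eqn:moore_transitions}, asserting that $\tau$ reads off the first term of the section. \textbf{Finally}, the key arithmetic fact to invoke is that the first term of $(a_n)_{n\geq 0}|_w$ equals $a_{\overline w}$: unwinding Definition~\ref{def:section}, the section $(a_n)_{n\geq 0}|_{x_0x_1\ldots x_k}$ is the subsequence indexed by $n\mapsto x_0+x_1d+\cdots+x_kd^k+nd^{k+1}$, whose $n=0$ entry is $a_{x_0+x_1d+\cdots+x_kd^k}=a_{\overline w}$. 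Hence for every $n\geq 0$ and every $w$ with $\overline w=n$ we obtain $\tau(\delta(q_0,w))=a_n$, which is precisely the condition of Definition~\ref{def:automaticity}.

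\medskip

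\noindent The argument is essentially a bookkeeping verification, so \emph{the main obstacle is not depth but care}: the only genuine subtlety is that $\overline{\phantom a}$ is not injective (since $\overline w=\overline{w0^k}$), so one must confirm that appending trailing zeros does not change the output. I would address this by noting that $(a_n)_{n\geq 0}|_{w0}$ has the same first term $a_{\overline w}=a_{\overline{w0}}$ as $(a_n)_{n\geq 0}|_w$, so $\tau$ agrees on all words with a common label, making the defining condition of Definition~\ref{def:automaticity} consistent across the fiber $\{w:\overline w=n\}$.
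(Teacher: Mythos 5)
Your proof is correct and is exactly the verification the paper leaves implicit: the paper states this proposition with only an informal description and a pointer to Theorem~6.6.2 of Allouche--Shallit, and your argument --- well-definedness of $\delta$ on the kernel via $(a_n)_{n\geq0}|_{vx}=\bigl((a_n)_{n\geq0}|_v\bigr)|_x$, the induction showing $\delta(q_0,w)=(a_n)_{n\geq0}|_w$, and the identification of the first term of that section as $a_{\overline{w}}$, invariant under trailing zeros --- is precisely the standard proof behind that construction. No gaps.
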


Informally, we build the automaton $M$ by following the edges of the tree $X^*$ from the root, labelling these edges by corresponding elements of $X$, and identifying the vertices that correspond to the same sections of $(a_n)_{n\geq0}$ into one state of $M$ that is labelled by the $0$-th term of the corresponding section.

\section{Portraits of sequences of reduced van der Put coefficients and their sections}
\label{sec:portraits}

It turns out that there is a natural relation between the (portraits of the sequences of) reduced van der Put coefficients of an endomorphism $g$ and of its sections. Denote by $\sigma\colon\Z_d\to\Z_d$ the map $\sigma(a)=\frac{a-(a\mmod d)}d$. This map corresponds to the shift map on $\Z_d$ that deletes the first letter of $a$. I.e., if $a=x_0x_1x_2\ldots\in\Z_d$, then $\sigma(a)=x_1x_2x_3\ldots\in\Z_d$.

\begin{theorem}
\label{thm:van_der_put_sections}
Suppose $g\in\End X^*$ has sections $g|_x$, $x=0,1,\ldots,n-1$ at the vertices of the first level of $X^*$. Then the reduced van der Put coefficients $b_n^{g|_x}$ of the section $g|_x$ satisfy:
\begin{equation}
\label{eqn:lambda_section}
\vpc_n^{g|_x}=\left\{\begin{array}{ll}
\sigma(\vpc_x^g),& n=0,\\
\vpc_{x+nd}^g+\sigma(\vpc_x^g),& 0<n<d,\\
\vpc_{x+nd}^g,& n\geq d
\end{array}
\right.
\end{equation}
where for $\vpc\in\Z_d$ we denote by $\sigma(\vpc)=\frac{\vpc-(\vpc\mmod d)}d$ the shift map on $\Z_d$.
\end{theorem}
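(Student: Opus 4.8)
The plan is to reduce the whole statement to a single recursion relating the induced map of $g$ to that of its section $g|_x$, and then to substitute this into the explicit formula~\eqref{eqn:lambda_n} for the reduced coefficients. Writing $g$ also for the map on $\Z_d$ it induces, and letting $\sigma_g(x)\in\{0,\ldots,d-1\}$ denote the image of the letter $x$ under the first-level action of $g$ (so that $\sigma_g(x)=g(x)\mmod d$), the definition of a section together with the wreath recursion~\eqref{eq:wreath} gives, for every $x\in X$,
\begin{equation}
g(x+dm)=\sigma_g(x)+d\cdot(g|_x)(m)\qquad\text{for all }m\in\Z_d. \tag{$\star$}
\end{equation}
Indeed, viewing $x+dm$ as the word $x\,m_0m_1\ldots$, the endomorphism sends the leading letter $x$ to $\sigma_g(x)$ and processes the tail by $g|_x$, which is exactly $(\star)$; in particular the leading digit of the output depends only on $x$ and not on $m$.

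Next I would record two elementary bookkeeping identities for the integer $x+nd$. For $0<n<d$ its base-$d$ expansion is two digits long, so $\lfloor\log_d(x+nd)\rfloor=1$ and $(x+nd)\_=x$. For $n\geq d$, writing $n=n_0+\cdots+n_kd^k$ with $n_k\neq0$, the top digit $n_k$ of $x+nd$ sits in position $k+1$, whence $\lfloor\log_d(x+nd)\rfloor=\lfloor\log_dn\rfloor+1$ and $(x+nd)\_=x+(n\_)d$. With these the three cases of~\eqref{eqn:lambda_section} follow by direct substitution. For $n=0$, evaluating $(\star)$ at $m=0$ gives $(g|_x)(0)=\bigl(g(x)-\sigma_g(x)\bigr)/d=\sigma(g(x))=\sigma(\vpc_x^g)$, using $\vpc_x^g=g(x)$ for $0\leq x<d$; this is also the value $\vpc_0^{g|_x}$. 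For $n\geq d$ I substitute $(\star)$ at $m=n$ and at $m=n\_$ into $\vpc_{x+nd}^g=\bigl(g(x+nd)-g(x+(n\_)d)\bigr)/d^{\lfloor\log_dn\rfloor+1}$; the two copies of $\sigma_g(x)$ cancel, one factor $d$ cancels the raised exponent, and what survives is exactly $\bigl((g|_x)(n)-(g|_x)(n\_)\bigr)/d^{\lfloor\log_dn\rfloor}=\vpc_n^{g|_x}$.

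The only case with real content, and the one I expect to be the main (if mild) obstacle, is $0<n<d$, since this is where the correction term $\sigma(\vpc_x^g)$ is born. Here $(x+nd)\_=x$ forces the van der Put recursion on the $g$-side to subtract $g(x)$ rather than $\sigma_g(x)$, so substituting $(\star)$ yields $\vpc_{x+nd}^g=\bigl(g(x+nd)-g(x)\bigr)/d=(g|_x)(n)-(g|_x)(0)$. Since $(g|_x)(n)=\vpc_n^{g|_x}$ for $0<n<d$ and $(g|_x)(0)=\sigma(\vpc_x^g)$ by the case $n=0$, rearranging gives $\vpc_n^{g|_x}=\vpc_{x+nd}^g+\sigma(\vpc_x^g)$. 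Thus the extra summand is precisely the constant term $(g|_x)(0)$ of the section, which the $g$-side recursion silently removes at the labelled ancestor $x$. The whole argument is routine once $(\star)$ is in place; the only point demanding care is keeping straight which base point the operation $\_$ lands on ($x$ when $0<n<d$, versus $x+(n\_)d$ when $n\geq d$), and correspondingly which power of $d$ appears in the denominator.
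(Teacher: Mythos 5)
Your proposal is correct and follows essentially the same route as the paper: the identity $(\star)$ is just an additive packaging of the fact that $g(xw)$ has first letter $\sigma_g(x)$ followed by $g|_x(w)$, which is exactly what the paper's proof uses via the shift $\sigma$, together with the same bookkeeping identities $(x+nd)\_=x$ for $0<n<d$, $(x+nd)\_=x+(n\_)d$ and $\lfloor\log_d(x+nd)\rfloor=\lfloor\log_d n\rfloor+1$ for $n\geq d$. All three cases are handled correctly, including the origin of the correction term $\sigma(\vpc_x^g)$ in the middle case.
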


\begin{proof}
First we consider the case $n=0$. By equation~\eqref{eqn:lambda_n} the reduced van der Put coefficients are computed as follows
\[\vpc_0^{g|_x}=g|_x(0^\infty)=\frac{g(x0^\infty)-(g(x0^\infty)\mmod d)}d=\sigma(g(x0^\infty))=\sigma(\vpc_x^g).\]
Similarly for $0<n<d$ we obtain
\begin{multline*}\vpc_n^{g|_x}=g|_x(n0^\infty)=\frac{g(xn0^\infty)-(g(xn0^\infty)\mmod d)}d\\
=\frac{g(xn0^\infty)-(g(x0^\infty)\mmod d)}d+\frac{g(x0^\infty)-(g(xn0^\infty)\mmod d)}d\\
=\vpc_{x+nd}^g+\frac{g(x0^\infty)-(g(x0^\infty)\mmod d)}d=\vpc_{x+nd}^g+\sigma(\vpc_x^g).
\end{multline*}
Finally, for $n>d$ we derive
\begin{multline*}\vpc_n^{g|_x}=d^{-\lfloor\log_dn\rfloor}\bigl(g|_x([n]_d0^\infty)-g|_x([n\_]_d0^\infty)\bigr)\\
=d^{-\lfloor\log_dn\rfloor}\left(\frac{g(x[n]_d0^\infty)-(g(x[n]_d0^\infty)\mmod d)}d-\frac{g(x[n\_]_d0^\infty)-(g(x[n\_]_d0^\infty)\mmod d)}d\right)\\
=d^{-\lfloor\log_dn\rfloor-1}\bigl(g(x[n]_d0^\infty)-g(x[n\_]_d0^\infty)\bigr)\\
=d^{-\lfloor\log_d(x+nd)\rfloor}\bigl(g([x+nd]_d0^\infty)-g([(x+nd)\_]_d0^\infty)\bigr)=\vpc_{x+nd}^g,
\end{multline*}
where in the last line we used that for $x<d$ we have $x+(n\_)d=(x+nd)\_$ and
\[\lfloor\log_d(n)\rfloor+1=\lfloor\log_d(n)+1\rfloor=\lfloor\log_d(nd)\rfloor=\lfloor\log_d(x+nd)\rfloor.\]
\end{proof}

In the case of Schikhof's version of van der Put basis we can similarly prove the following.

\begin{theorem}
\label{thm:van_der_put_sections_schikhof}
Suppose $g\in\End X^*$ has sections $g|_x$, $x=0,1,\ldots,n-1$ at the vertices of the first level of $X^*$. Then the reduced van der Put coefficients with respect to Schikhof's version of van der Put basis of the section $g|_x$ satisfy:
\begin{equation}
\label{eqn:lambda_section_schikhof}
\tilde\vpc_n^{g|_x}=\left\{\begin{array}{ll}
\sigma(\tilde\vpc_0^g),& n=0, x=0\\
\sigma(\tilde\vpc_{x}^g+\vpc_0^g),& n=0, 0<x<d,\\
\tilde\vpc_{x+nd}^g,& n>0.
\end{array}
\right.
\end{equation}
\end{theorem}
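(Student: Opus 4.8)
The plan is to run the same three-case analysis as in the proof of Theorem~\ref{thm:van_der_put_sections}, the only genuinely new ingredient being the bookkeeping forced by Schikhof's different normalization at the small indices $n<d$. The backbone of every case is the identity $g|_x(\omega)=\sigma(g(x\omega))$, valid for every $\omega\in X^\infty$: since $g$ is a tree endomorphism, the first letter of $g(x\omega)$ equals $g(x)$ independently of $\omega$, so deleting it recovers the action of the section $g|_x$. I would record this at the outset, together with the cancellation $\sigma(g(xn0^\infty))-\sigma(g(x0^\infty))=\frac1d\bigl(g(xn0^\infty)-g(x0^\infty)\bigr)$, which holds precisely because the two arguments share the common first digit $g(x)$, so the $(\cdot\mmod d)$ terms cancel.

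For $n=0$ I would compute straight from the definition, $\tilde\vpc_0^{g|_x}=g|_x(0^\infty)=\sigma(g(x0^\infty))=\sigma(f(x))$, where $f$ denotes the induced map on $\Z_d$. When $x=0$ this reads $\sigma(f(0))=\sigma(\tilde\vpc_0^g)$ by the Schikhof definition $\tilde\vpc_0^g=f(0)$. When $0<x<d$ the one substantive step is to re-express $f(x)$: for such $x$ Schikhof's formula gives $\tilde\vpc_x^g=f(x)-f(x\_)=f(x)-f(0)$, whence $f(x)=\tilde\vpc_x^g+f(0)=\tilde\vpc_x^g+\vpc_0^g$, and applying $\sigma$ yields exactly $\sigma(\tilde\vpc_x^g+\vpc_0^g)$. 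This is where the extra summand $\vpc_0^g$, absent from the pure-shift $n=0$ formula of the Mahler case, enters.

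For $n>0$ I would split into $n\geq d$ and $0<n<d$. In the range $n\geq d$ the Schikhof and Mahler reduced coefficients coincide, both for the section and for the original map (the two versions differ only below $d$), so the assertion $\tilde\vpc_n^{g|_x}=\tilde\vpc_{x+nd}^g$ is literally the Mahler identity of Theorem~\ref{thm:van_der_put_sections} and needs nothing new. For $0<n<d$ I would expand $\tilde\vpc_n^{g|_x}=g|_x(n0^\infty)-g|_x(0^\infty)$, apply the section identity and the cancellation above to obtain $\frac1d\bigl(g(xn0^\infty)-g(x0^\infty)\bigr)$, and match it against $\tilde\vpc_{x+nd}^g=\vpc_{x+nd}^g$ using $[x+nd]_d=xn$, $(x+nd)\_=x$, and $\lfloor\log_d(x+nd)\rfloor=1$, which turn the defining quotient into the same expression. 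I expect no real obstacle here; the only point demanding care is the $n=0$, $0<x<d$ subcase, where one must not carry over the Mahler formula verbatim but substitute the correct Schikhof value of $f(x)$, and the minor boundary bookkeeping at $x=0$, where $[(x+nd)\_]_d$ is the length-one word $0$ rather than a nonzero single digit yet still yields $g(x0^\infty)$.
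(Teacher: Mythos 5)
Your proposal is correct and follows exactly the route the paper intends: the paper gives no separate proof of Theorem~\ref{thm:van_der_put_sections_schikhof}, stating only that it is proved ``similarly'' to Theorem~\ref{thm:van_der_put_sections}, and your argument is precisely that adaptation, with the right adjustments at the indices $n<d$ (in particular the substitution $f(x)=\tilde\vpc_x^g+\vpc_0^g$ coming from $x\_=0$). All case checks, including the cancellation of the $(\cdot\mmod d)$ terms and the identities $[x+nd]_d=xn$, $(x+nd)\_=x$, $\lfloor\log_d(x+nd)\rfloor=1$, are accurate.
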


There is a more visual way to state the third case in equation~\eqref{eqn:lambda_section} using the $\overline{\phantom{a}}$ notation.
\begin{corollary}
\label{cor:sections1}
Let $x_0x_1\ldots x_k\in X^*$ be a word of length $k+1\geq 3$ with $x_k\neq 0$. Then
\[\vpc^g_{\overline{x_0x_1\ldots x_k}}=\vpc^{g|_{x_0}}_{\overline{x_1\ldots x_k}}.\]
\end{corollary}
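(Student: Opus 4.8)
The plan is to derive Corollary~\ref{cor:sections1} directly from the third case of Theorem~\ref{thm:van_der_put_sections}, since the corollary is simply a restatement of that case using the $\overline{\phantom a}$ notation. First I would set $x=x_0$ and let $v=x_1\ldots x_k$ be the remaining suffix, so that the word in question is $x_0v$. The key observation is that $\overline{x_1\ldots x_k}=x_1+x_2d+\cdots+x_kd^{k-1}$ is a genuine index $n$ into the section sequence, and that the full word $x_0x_1\ldots x_k$ corresponds to the index $\overline{x_0x_1\ldots x_k}=x_0+x_1d+\cdots+x_kd^k=x_0+nd$ into the original sequence.

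With this identification, the statement $\vpc^g_{\overline{x_0x_1\ldots x_k}}=\vpc^{g|_{x_0}}_{\overline{x_1\ldots x_k}}$ becomes exactly $\vpc^g_{x_0+nd}=\vpc^{g|_{x_0}}_n$, where $n=\overline{x_1\ldots x_k}$. So the heart of the argument is to check that the hypotheses guarantee $n\geq d$, which is precisely the regime in which the third line of~\eqref{eqn:lambda_section} reads $\vpc_n^{g|_x}=\vpc_{x+nd}^g$. Here I would use that the length of $x_0x_1\ldots x_k$ is $k+1\geq 3$, so $k\geq 2$, and that $x_k\neq 0$. The condition $x_k\neq 0$ ensures that $v=x_1\ldots x_k$ is a labelled vertex with $\overline v=n$ having $k$ digits and leading digit $x_k\neq 0$ in position $d^{k-1}$; combined with $k\geq 2$ this forces $n\geq d^{k-1}\geq d$, putting us squarely in the third case.

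After establishing $n\geq d$, the corollary follows by a one-line substitution into Theorem~\ref{thm:van_der_put_sections}, reading the equality $\vpc^{g|_{x_0}}_n=\vpc^g_{x_0+nd}$ backwards and translating the indices back into bar notation via $x_0+nd=\overline{x_0x_1\ldots x_k}$ and $n=\overline{x_1\ldots x_k}$. I do not expect any serious obstacle; the only point requiring care is the bookkeeping that confirms the index $n$ lands in the third (rather than the $0<n<d$ or $n=0$) case of the theorem, which is exactly what the hypotheses $k+1\geq 3$ and $x_k\neq 0$ are designed to secure. In particular, the case restriction $k+1\geq 3$ is there to rule out the short words where the simpler identity would fail, since for those the extra $\sigma(\vpc_x^g)$ correction term in the first two cases of~\eqref{eqn:lambda_section} would appear.
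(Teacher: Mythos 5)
Your proposal is correct and matches the paper's own (one-line) proof: both reduce the corollary to the third case of Theorem~\ref{thm:van_der_put_sections} via the identity $\overline{x_0x_1\ldots x_k}=x_0+\overline{x_1\ldots x_k}\cdot d$. Your additional bookkeeping verifying that $n=\overline{x_1\ldots x_k}\geq d^{k-1}\geq d$ (so that the third case indeed applies) is exactly the detail the paper leaves implicit.
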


\begin{proof}
Follows from~\eqref{eqn:lambda_section} and the fact that if $\overline{x_1x_2\ldots x_k}=n$, then $\overline{x_0x_1x_2\ldots x_k}=x_0+nd$.
\end{proof}

The next corollary will be used in calculations in Section~\ref{sec:examples}.

\begin{corollary}
\label{cor:sections2}
Let $v,w\in X^*$ with $w$ of length at least 2 and ending in a nonzero element of $X$.
Then \[\vpc^g_{\overline{vw}}=\vpc^{g|_v}_{\overline{w}}.\]
\end{corollary}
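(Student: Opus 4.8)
The plan is to reduce the general statement to the single-letter case already established in Corollary~\ref{cor:sections1}, by induction on the length of $v$, peeling off one letter of $v$ at a time. Because the inductive step will invoke the identity for a section $g|_{x_0}$ rather than for $g$ itself, I would phrase the induction so that the claim is proved for all endomorphisms $g\in\End(X^*)$ simultaneously, with $v$ ranging over all words of $X^*$ and $w$ fixed to satisfy the standing hypotheses (length at least $2$, last letter nonzero). The base case $v=\varepsilon$ is immediate: here $g|_\varepsilon=g$ and $\overline{\varepsilon w}=\overline{w}$, so the two sides coincide trivially.

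For the inductive step I would write $v=x_0v'$ with $x_0\in X$ and $|v'|=|v|-1$, and first apply Corollary~\ref{cor:sections1} to the word $x_0(v'w)$. Its tail $v'w$ has length $|v'|+|w|\geq 2$ and ends in the last letter of $w$, which is nonzero, so the hypotheses of Corollary~\ref{cor:sections1} are satisfied and the total length of $x_0v'w$ is at least $3$; this gives $\vpc^g_{\overline{x_0v'w}}=\vpc^{g|_{x_0}}_{\overline{v'w}}$. Next I would apply the inductive hypothesis to the endomorphism $g|_{x_0}$ together with the shorter word $v'$, yielding $\vpc^{g|_{x_0}}_{\overline{v'w}}=\vpc^{(g|_{x_0})|_{v'}}_{\overline{w}}$. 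Finally the composition law for sections, $(g|_{x_0})|_{v'}=g|_{x_0v'}=g|_v$, closes the chain and produces $\vpc^g_{\overline{vw}}=\vpc^{g|_v}_{\overline{w}}$.

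The only point requiring care is the bookkeeping of the side conditions of Corollary~\ref{cor:sections1}: one must check that at every peeling step the remaining tail still has length at least two and a nonzero terminal letter. Both properties persist automatically because the word $w$ is never modified during the recursion, so Corollary~\ref{cor:sections1} applies at each stage. I expect no genuine difficulty here; the entire analytic content lives in the single-letter Corollary~\ref{cor:sections1}, and the remainder is a formal induction combined with the functoriality of sections under composition.
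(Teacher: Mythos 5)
Your proposal is correct and matches the paper's proof essentially verbatim: both handle the empty-word base case trivially and then induct on $|v|$ by peeling one letter $x_0$ off $v$, applying Corollary~\ref{cor:sections1} to $x_0(v'w)$, invoking the inductive hypothesis for $g|_{x_0}$, and finishing with $(g|_{x_0})|_{v'}=g|_{x_0v'}$. Your extra check that the hypotheses of Corollary~\ref{cor:sections1} persist at each step is a welcome (if minor) addition of care.
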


\begin{proof}
When $v$ is the empty word the claim is trivial. The general case now follows by induction on $|v|$ from Corollary~\ref{cor:sections1} as for each $x\in X$ we have
\[\vpc^g_{\overline{xvw}}=\vpc^{g|_x}_{\overline{vw}}=\vpc^{(g|_x)|_v}_{\overline{w}}=\vpc^{g|_{xv}}_{\overline{w}}.\]
\end{proof}

\begin{corollary}
\label{cor:correspondence}
Let $g\in\End(X^*)$ be an endomorphism of $X^*$ and $v\in X^*$ be arbitrary vertex. Then the sequences $(\vpc_n^{g|_v})_{n\geq 0}$ and $(\vpc_n^{g})_{n\geq 0}|_v$ coincide starting from term $d$.
\end{corollary}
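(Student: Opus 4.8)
The plan is to reduce the claim directly to Corollary~\ref{cor:sections2}, which already records the coincidence of a van der Put coefficient of $g$ with one of $g|_v$ whenever the trailing word is long enough. The first step is to make both sides of the asserted equality explicit as functions of the index $n$. Unwinding Definition~\ref{def:section} by induction on $|v|$, the $n$-th term of the section $(\vpc_n^g)_{n\geq 0}\bigl|_v$ equals $\vpc^g_{\overline v + n\,d^{|v|}}$: taking a section at a single letter $x$ shifts the index by $n\mapsto x+nd$, and iterating this along the letters of $v$ produces $\overline v + n\,d^{|v|}$ (one checks the base case $|v|=1$ directly and the inductive step by composing shifts). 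The $n$-th term of $(\vpc_n^{g|_v})_{n\geq 0}$ is of course just $\vpc_n^{g|_v}$, so the corollary amounts to the identity $\vpc^g_{\overline v + n\,d^{|v|}} = \vpc_n^{g|_v}$ for every $n\geq d$.

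The second step is the key identification. I would set $w=[n]_d$, the base-$d$ expansion of $n$ written backwards, so that $\overline w = n$ and, by the additive behaviour of the $\overline{\phantom a}$ operation on concatenations, $\overline{vw}=\overline v + n\,d^{|v|}$. With this substitution the target identity becomes $\vpc^g_{\overline{vw}}=\vpc^{g|_v}_{\overline w}$, which is precisely the conclusion of Corollary~\ref{cor:sections2}.

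The only remaining point — and the place where the hypothesis $n\geq d$ is actually consumed — is to verify that $w=[n]_d$ meets the hypotheses of that corollary, namely that it has length at least $2$ and ends in a nonzero letter. Both hold exactly when $n\geq d$: the base-$d$ expansion of $n$ has two or more digits precisely when $n\geq d$, and its most significant digit, which is the last letter of $[n]_d$, is nonzero since $n>0$. The degenerate case $v=\varepsilon$ is trivial, both sequences then being $(\vpc_n^g)_{n\geq 0}$ itself. I do not anticipate any genuine obstacle here; the sole subtlety is aligning the two indexing conventions, namely the section-of-a-sequence index $\overline v + n\,d^{|v|}$ against the $\overline{vw}$ appearing in Corollary~\ref{cor:sections2}, and the choice $w=[n]_d$ makes these match on the nose.
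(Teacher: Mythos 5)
Your proof is correct and follows essentially the same route as the paper's: both reduce the statement to Corollary~\ref{cor:sections2} by taking the trailing word to be $[n]_d$ (the paper merely writes it as $xw$ with $x$ its first letter) and observing that for $n\geq d$ this word has length at least $2$ and ends in a nonzero letter. Your explicit computation of the section index $\overline v + n\,d^{|v|}$ just spells out what the paper leaves as a one-line remark.
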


\begin{proof}
For any $n\geq d$ we have that $[n]_d=xw$ for some $x\in X$ and $w\in X^*$ of length at least 1 that ends with a non-zero element of $X$. So we have by Corollary~\ref{cor:sections2}
\[\vpc_n^{g|_v}=\vpc_{\overline{xw}}^{g|_v}=\vpc_{\overline{vxw}}^{g}.\]
But $\vpc_{\overline{vxw}}^{g}$ is exactly the term of the sequence $(\vpc_n^{g})_{n\geq 0}|_v$ with index $n=\overline{xw}$.
\end{proof}

Now, taking into account Proposition~\ref{prop:portrait}, there is a geometric way to look at the previous theorem. Namely, the third subcase in equation~\eqref{eqn:lambda_section} yields the following proposition.

\begin{corollary}
\label{cor:correspondence2}
Let $v\in X^*$ be arbitrary vertex of $X^*$. The labels of the portrait of the sequence $(\vpc_n^{g|_v})_{n\geq 0}$ coincide at levels 2 and below with the corresponding labels of the restriction of the portrait of $(\vpc_n^{g})_{n\geq 0}$ to the subtree hanging down from vertex $v\in X$.
\end{corollary}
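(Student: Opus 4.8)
The plan is to obtain Corollary~\ref{cor:correspondence2} as a direct geometric reformulation of the results already established, chiefly Corollary~\ref{cor:correspondence} together with Proposition~\ref{prop:portrait}. The statement concerns two portraits: the portrait $P_v$ of the sequence $(\vpc_n^{g|_v})_{n\geq 0}$, and the portrait $P$ of the sequence $(\vpc_n^{g})_{n\geq 0}$ restricted to the subtree hanging down from vertex $v$. The assertion is that their labels agree at level 2 and below (i.e. at all vertices of length $\geq 2$ in $X^*$, equivalently at all labelled vertices $n$ with $n\geq d$). First I would recall from Corollary~\ref{cor:correspondence} that the sequences $(\vpc_n^{g|_v})_{n\geq 0}$ and $(\vpc_n^{g})_{n\geq 0}\bigl|_v$ coincide for all indices $n\geq d$; this is the arithmetic heart of the matter and it is already proved.

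Next I would translate this index-level agreement into a statement about portrait labels. By the definition of a portrait, a labelled vertex $w=x_0x_1\ldots x_k$ (with $x_k\neq 0$, or $w=0$) carries the label $a_{\overline w}$, and the condition that $w$ have length $\geq 2$ corresponds exactly to $\overline w\geq d$. Thus the labels of $P_v$ at vertices of length $\geq 2$ are precisely the terms $\vpc_n^{g|_v}$ for $n\geq d$. On the other side, Proposition~\ref{prop:portrait} describes the portrait of the section $(\vpc_n^{g})_{n\geq 0}\bigl|_v$ as the labelled subtree of $P$ hanging from $v$, adjusted only at the root vertex $0$ of that subtree (and with the root's own label possibly removed). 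These adjustments affect only vertices $0$ and $\varepsilon$ of the subtree, i.e.\ the levels $0$ and $1$, and leave every label at level $2$ and below of the restricted portrait of $(\vpc_n^{g})_{n\geq 0}$ untouched.

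Combining the two observations, I would argue that at every vertex $w$ of length $k\geq 2$ the label in $P_v$ equals $\vpc_{\overline w}^{g|_v}$, the label in the restricted portrait of $(\vpc_n^{g})_{n\geq 0}\bigl|_v$ is the same, and by Proposition~\ref{prop:portrait} this label coincides with the label of $(\vpc_n^{g})_{n\geq 0}$ at the corresponding vertex $vw'$ of the subtree hanging from $v$, where $w'$ is $w$ viewed under the canonical isomorphism $vu\leftrightarrow u$. The equality of these labels is then exactly the content of Corollary~\ref{cor:correspondence}, reread through the index correspondence $\overline{vw'}=\overline{v}+\overline{w'}\cdot d^{|v|}$ supplied by Corollary~\ref{cor:sections2}.

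I do not expect any genuine obstacle here, since every ingredient is in place; the care required is purely bookkeeping. The one point to watch is the precise boundary between what changes and what does not: the section operation alters the portrait only at the root and at the newly relabelled vertex $0$, so the claim is sharp precisely at level $2$ (vertices of length $\geq 2$), and I would make explicit that levels $0$ and $1$ are genuinely excluded, in agreement with the first two cases of equation~\eqref{eqn:lambda_section}, where $\vpc_0^{g|_x}$ and $\vpc_n^{g|_x}$ for $0<n<d$ involve the extra additive correction $\sigma(\vpc_x^g)$. Keeping the indexing convention $[0]_d=0$ consistent throughout is the only place where a slip could occur, so I would verify the level count against the explicit small example of Figure~\ref{fig:binary_portrait_an} before declaring the proof complete.
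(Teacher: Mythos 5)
Your argument is correct and follows essentially the same route as the paper, which presents this corollary as an immediate geometric restatement of the third subcase of equation~\eqref{eqn:lambda_section} (equivalently, of Corollary~\ref{cor:correspondence}) read through Proposition~\ref{prop:portrait}; your bookkeeping about why only levels $0$ and $1$ are affected matches the paper's intent. The only nitpick is that the index identity $\overline{vw}=\overline{v}+\overline{w}\cdot d^{|v|}$ comes from the definition of $\overline{\phantom{a}}$ rather than from Corollary~\ref{cor:sections2}, which is the statement about coefficients built on top of it.
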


We illustrate by Figure~\ref{fig:portrait_correspondence} this fact for $v=x\in X$ of length one, where the portraits of $(\vpc_n^{g|_0})_{n\geq 0}$ and $(\vpc_n^{g|_1})_{n\geq 0}$ are drawn on the left and right subtrees of the portrait of $(\vpc_n^{g})_{n\geq 0}$. Figure~\ref{fig:portrait_correspondence} asserts that the labels of the portraits of sections coincide with the labels of the portrait of $(\vpc_n^{g})_{n\geq 0}$ below the dashed line. The first two subcases of~\eqref{eqn:lambda_section} give labels of the portraits of $(\vpc_n^{g|_x})_{n\geq 0}, x\in X$ on the first level.

\begin{figure}[h]
\begin{center}
~\hspace{-4cm}\epsfig{file=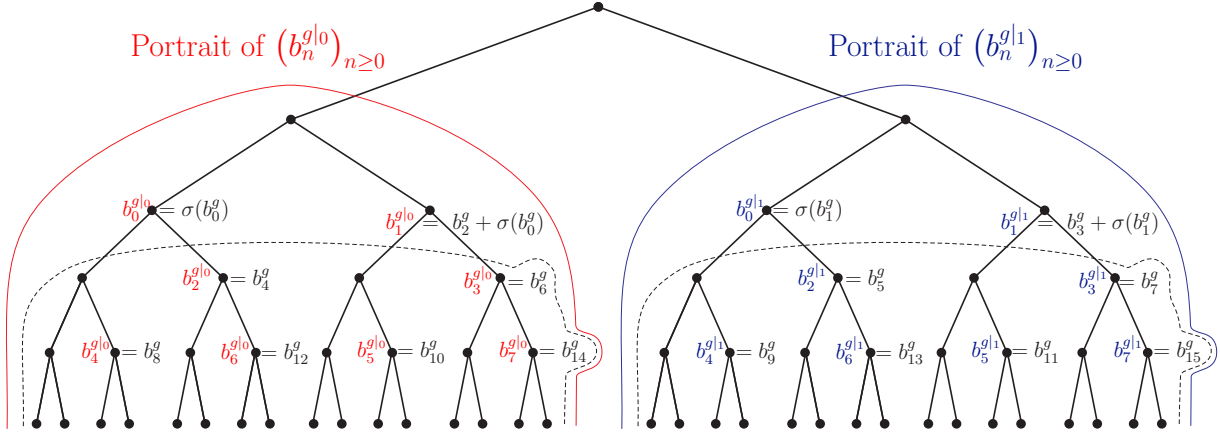, width=350pt}
\caption{Correspondence between portraits of $(\vpc_n^{g|_x})_{n\geq 0}$ and $(\vpc_n^{g})_{n\geq 0}$\label{fig:portrait_correspondence}}
\end{center}
\end{figure}

\section{Proof of the Theorem~\ref{thm:characterization}}
\label{sec:proof}

In this section we will prove Theorem~\ref{thm:characterization}. In the arguments below we will work with eventually periodic elements of $\Z_d$, i.e. elements of the form $a_0+a_1d+a_2d^2+\cdots$ with eventually periodic sequence $(a_i)_{i\geq 0}$ of coefficients. As shown in~\cite[Theorem~4.2.4]{goresky_k:alg_shift_register_sequences12}, this set of $d$-adic integers can be identified with the subset $\Z_{d,0}$ of $\Q$ consisting of all rational numbers $a/b\in\Q$ such that $b$ is relatively prime to $d$. Algebraically it can be defined as $\Z_{d,0}=D^{-1}\Z$, where $D$ is the multiplicative set $\{b\in\Z\colon \gcd(b,d)=1\}$. We will denote the corresponding inclusion $\Z_{d,0}\hookrightarrow\Z_d$ by $\psi$. The following inclusions then take place.
\[\begin{array}{ccccccc}
\Z&\subset&\Z_{d,0}&\stackrel{\psi}{\hookrightarrow}&\Z_d&\subset&\Q_d\\
&&\cap&&\\
&&\Q&&
\end{array}
\]

We  will  not  need the  definition  of $\psi$  which  can  be  constructed using Lemma~4.2.2 in~\cite{goresky_k:alg_shift_register_sequences12}, but rather will need  the  definition  of  $\psi^{-1}\colon \psi(\Z_{d,0})\to\Z_{d,0}$. The  map  is defined  as follows. Suppose $uv^\infty\in\Z_d$ is an arbitrary eventually periodic element for some $u,v\in X^*$. Then we define
\[\psi^{-1}(uv^\infty)=\overline u+\frac{\overline v\cdot d^{|u|}}{1-d^{|v|}}\in\Z_{d,0}.\]

\begin{lemma}
\label{lem:periodic}
The preimage under $\psi$ of the set $\{v^\infty\colon v\in X^m\}$ of all periodic elements of $\Z_d$ with period of length dividing $m\geq 0$ is the set
\[P^{0,m}=\left\{\frac j{1-d^m}\colon 0\leq j<d^m\right\}\]
which is the subset of the interval $[-1,0]\subset\R$.
\end{lemma}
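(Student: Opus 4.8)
The plan is to verify the formula for $\psi^{-1}$ directly on periodic elements and then check that the resulting values lie in the stated interval. I would start by taking an arbitrary $v\in X^m$ and applying the explicit description of $\psi^{-1}$ given just above the lemma. The periodic element $v^\infty$ is the eventually periodic element with empty pre-period, so in the notation $uv^\infty$ I would set $u=\varepsilon$, giving $|u|=0$ and $\overline u=0$. Substituting into
\[\psi^{-1}(uv^\infty)=\overline u+\frac{\overline v\cdot d^{|u|}}{1-d^{|v|}}\]
immediately yields
\[\psi^{-1}(v^\infty)=\frac{\overline v\cdot d^0}{1-d^m}=\frac{\overline v}{1-d^m}.\]

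Next I would identify the range of the numerator. As $v$ ranges over all words in $X^m$, the quantity $\overline v=v_0+v_1 d+\cdots+v_{m-1}d^{m-1}$ ranges over exactly the integers $j$ with $0\leq j<d^m$, since base-$d$ expansions of length $m$ biject with this set of integers. Hence the image of $\{v^\infty\colon v\in X^m\}$ under $\psi^{-1}$ is precisely
\[\left\{\frac{j}{1-d^m}\colon 0\leq j<d^m\right\}=P^{0,m},\]
which is the claimed set. I should take a moment to confirm each such value genuinely lies in $\Z_{d,0}$: the denominator $1-d^m$ satisfies $\gcd(1-d^m,d)=1$ because $1-d^m\equiv 1\pmod d$, so $j/(1-d^m)\in D^{-1}\Z=\Z_{d,0}$, consistent with the earlier identification.

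Finally I would establish the inclusion $P^{0,m}\subset[-1,0]$. Here $1-d^m<0$ since $d\geq 2$ and $m\geq 1$ (for $m=0$ the set is degenerate and the claim is vacuous or trivial, which I would note separately). Dividing the inequality $0\leq j<d^m$ by the negative number $1-d^m=-(d^m-1)$ reverses the inequalities, giving
\[0\geq\frac{j}{1-d^m}>\frac{d^m}{1-d^m}=\frac{d^m}{-(d^m-1)}=-\frac{d^m}{d^m-1}.\]
Since $-\frac{d^m}{d^m-1}\geq -1$ fails in general, I would instead use the sharper bound coming from $j\leq d^m-1$: then $\frac{j}{1-d^m}\geq\frac{d^m-1}{1-d^m}=-1$, so each element lies in $[-1,0]$. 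I do not expect any serious obstacle here; the only subtlety is getting the inequality directions correct when dividing by the negative denominator $1-d^m$, and being careful that the extreme value $-1$ is attained at $j=d^m-1$ while $0$ is attained at $j=0$.
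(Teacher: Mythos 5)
Your proposal is correct and follows essentially the same route as the paper's own proof: apply the explicit formula for $\psi^{-1}$ with empty pre-period to get $\psi^{-1}(v^\infty)=\overline v/(1-d^{m})$, observe that $\overline v$ ranges exactly over $\{0,1,\ldots,d^m-1\}$ as $v$ ranges over $X^m$, and deduce the containment in $[-1,0]$ from $0\le\overline v\le d^m-1$. The extra checks you add (membership in $\Z_{d,0}$, the degenerate $m=0$ case, and the care with the sign of $1-d^m$) are harmless refinements of the same argument.
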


\begin{proof}
It follows from the definition of $\psi^{-1}$ that
\[\psi^{-1}(v^\infty)= \frac{\overline v}{1-d^{|v|}}.\]
Recall, that for $v=x_0x_1\ldots x_{m-1}$ we have $\overline v=x_0+x_1d+\ldots x_{m-1}d^{m-1}$. This implies that $0\leq\overline v\leq d^{m}-1$ and, henceforth, $-1\leq\psi^{-1}(v^\infty)\leq 0$. Moreover, as $v$ runs over all words in $X^m$, $\overline v$ runs over all integer numbers from $0$ to $d^m-1$ as we simply list the $d$-ary expansions of all these numbers.
\end{proof}

\begin{lemma}
\label{lem:preperiodic}
The preimage under $\psi$ of the set $\{uv^\infty\colon u\in X^l, v\in X^m\}$ of all eventually periodic elements of $\Z_d$ with preperiod of length at most $l\geq 0$ and period of length dividing $m\geq 1$ is the set
\[P^{l,m}=\left\{i+\frac{j\cdot d^l}{1-d^m}\colon 0\leq i<d^l, 0\leq j<d^m\right\}\]
which is the subset of the interval $[-d^l,d^l-1]\subset\R$.
\end{lemma}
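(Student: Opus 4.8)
The plan is to compute $\psi^{-1}$ directly on an arbitrary element $uv^\infty$ with $u\in X^l$ and $v\in X^m$, using the explicit formula for $\psi^{-1}$ given just before Lemma~\ref{lem:periodic}. Writing $\overline u$ for the integer encoded by the preperiod word and $\overline v$ for the integer encoded by the period word, the definition gives
\[
\psi^{-1}(uv^\infty)=\overline u+\frac{\overline v\cdot d^{|u|}}{1-d^{|v|}}=\overline u+\frac{\overline v\cdot d^{l}}{1-d^{m}}.
\]
First I would set $i=\overline u$ and $j=\overline v$. As in the proof of Lemma~\ref{lem:periodic}, as $u$ ranges over all words of $X^l$ the integer $\overline u$ ranges over exactly $\{0,1,\ldots,d^l-1\}$, and as $v$ ranges over $X^m$ the integer $\overline v$ ranges over exactly $\{0,1,\ldots,d^m-1\}$; this is just listing all base-$d$ expansions of length $l$ and $m$ respectively. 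Hence the image of the set $\{uv^\infty : u\in X^l, v\in X^m\}$ under $\psi^{-1}$ is precisely $P^{l,m}=\{\,i+\frac{j\cdot d^l}{1-d^m} : 0\le i<d^l,\ 0\le j<d^m\,\}$, which proves the first assertion.

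For the containment in $[-d^l,d^l-1]$, I would bound the second summand. Since $m\geq 1$ we have $1-d^m<0$, so with $0\le j\le d^m-1$ and $d^l>0$ the term $\frac{j\cdot d^l}{1-d^m}$ is $\le 0$, attaining $0$ at $j=0$; its most negative value occurs at $j=d^m-1$, giving
\[
\frac{(d^m-1)\cdot d^l}{1-d^m}=-d^l.
\]
Combining with $0\le i\le d^l-1$, the sum $i+\frac{j\cdot d^l}{1-d^m}$ is at most $(d^l-1)+0=d^l-1$ and at least $0+(-d^l)=-d^l$, so every element of $P^{l,m}$ lies in $[-d^l,d^l-1]$.

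This argument is essentially routine and parallels Lemma~\ref{lem:periodic} almost verbatim; there is no serious obstacle, since the only new feature is the nonzero preperiod contribution $\overline u=i$, which cleanly shifts the range by the integer interval $[0,d^l-1]$. The one point requiring a little care is verifying that the parametrization by $(i,j)$ is the same as the parametrization by words $(u,v)$ — i.e.\ that distinct pairs of integers arise from the enumeration and cover the full index ranges — but this is exactly the base-$d$ expansion correspondence already invoked in the previous lemma. I would therefore present this proof as a short direct computation, flagging only that the interval endpoints are attained (at $i=j=0$ for the maximum side, modulo the $-1$, and at $i=d^l-1$, $j=d^m-1$ region for the extremes) to justify that the stated interval is the tightest one containing $P^{l,m}$.
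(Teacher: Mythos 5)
Your proof is correct and takes essentially the same approach as the paper: both amount to evaluating $\psi^{-1}(uv^\infty)=\overline u+\frac{\overline v\cdot d^l}{1-d^m}$, identifying the ranges of $\overline u$ and $\overline v$ with $0\le i<d^l$ and $0\le j<d^m$, and bounding the fractional term in $[-d^l,0]$. The only cosmetic difference is that the paper packages this as the union of integer shifts $i+d^lP^{0,m}$ and cites Lemma~\ref{lem:periodic} for the bound on $P^{0,m}$, whereas you bound the term $\frac{j\cdot d^l}{1-d^m}$ directly.
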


\begin{proof}
We have
\begin{multline*}
\psi^{-1}(\{uv^\infty\colon u\in X^l,v\in X^m\})=\{\psi^{-1}(uv^\infty)\colon u\in X^l,v\in X^m\}\\
=\{\psi^{-1}(u0^\infty)+\psi^{-1}(0^lv^\infty)\colon u\in X^l,v\in X^m\}=\{\overline u+d^l\cdot\psi^{-1}(v^\infty)\colon  u\in X^l,v\in X^m\}\\
=\{i+d^l\psi^{-1}(v^\infty)\colon  0\leq i<d^l,v\in X^m\}=\bigcup_{0\leq i<d^l}(i+d^lP^{0,m}).
\end{multline*}
The set $d^lP^{0,m}$ by Lemma~\ref{lem:periodic} is a subset of $[-d^l,0]$. Therefore, since $P^{l,m}\subset\Q$ is obtained as the union of all shifts of $d^lP^{0,m}$ by all integers $0\leq i<d^l$, we obtain that $P^{l,m}\subset[-d^l,d^l-1]$.
\end{proof}

In order to obtain the condition of finiteness of Mealy automata in the proof of Theorem~\ref{thm:characterization} we will also need the following technical lemma. Define a sequence of subsets $A_i^{l,m}$ recursively by $A_0^{l,m}=P^{l,m}$, and

\begin{equation}
\label{eqn:A}
A_{i+1}^{l,m}=\sigma(A_i^{l,m})+P^{l,m}.
\end{equation}

$A^{l,m}_{i}$ will be used later to describe possible sets of states of automaton defined by a transformation with a given automatic sequence of reduced van der Put coefficients.

\begin{lemma}
\label{lem:finite}
The set $A^{l,m}=\bigcup_{i\geq 0}A^{l,m}_i$ is finite.
\end{lemma}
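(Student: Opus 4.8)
The plan is to establish finiteness of $A^{l,m}$ by controlling the elements of the sets $A_i^{l,m}$ simultaneously in two directions inside $\Z_{d,0}\subset\Q$: their archimedean size and their denominators. Recall from Lemmas~\ref{lem:periodic} and~\ref{lem:preperiodic} that $P^{l,m}$ is a finite subset of $\Z_{d,0}$ lying in $[-d^l,d^l-1]$, each of whose elements has the form $i-\frac{j\,d^l}{d^m-1}$; in particular, setting $D:=d^m-1$ (a number coprime to $d$), we have $D\cdot P^{l,m}\subset\Z$, so every element of $P^{l,m}$ has denominator dividing $D$. I would prove that every element of every $A_i^{l,m}$ keeps both properties: its denominator divides $D$, and its absolute value stays below a fixed bound $B$. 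Since the set of rationals with denominator dividing $D$ lying in $[-B,B]$ is finite, this forces $A^{l,m}=\bigcup_{i\geq0}A_i^{l,m}$ to be finite.

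First I would analyze the effect of the shift $\sigma$ on a rational $a=p/q\in\Z_{d,0}$ written in lowest terms, so that $\gcd(q,d)=1$. If $r=a\mmod d\in\{0,\ldots,d-1\}$ is the residue with $d\mid(p-qr)$, then writing $p-qr=dk$ gives $\sigma(a)=\frac{a-r}{d}=\frac{p-qr}{qd}=\frac{k}{q}$. Hence $\sigma$ maps $\Z_{d,0}$ into itself and never increases the denominator. Combined with the elementary fact that the sum of two rationals whose denominators divide $D$ again has denominator dividing $D$, the recursion $A_{i+1}^{l,m}=\sigma(A_i^{l,m})+P^{l,m}$ shows by induction on $i$ that $D\cdot A_i^{l,m}\subset\Z$ for all $i$.

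For the archimedean bound I would use that $|\sigma(a)|=\frac{|a-r|}{d}\leq\frac{|a|+(d-1)}{d}$, so $\sigma$ contracts by the factor $1/d$ up to a bounded additive error. Writing $M_i=\sup_{a\in A_i^{l,m}}|a|$ and $C=\frac{d-1}{d}+d^l$, and using that every element of $P^{l,m}$ has absolute value at most $d^l$, the recursion gives $M_{i+1}\leq\frac1d M_i+C$. This affine recurrence with contraction factor $1/d<1$ has the attracting fixed point $M^*=\frac{Cd}{d-1}$, and a one-line induction shows $M_i\leq B:=\max(M_0,M^*)$ for all $i$; thus every $A_i^{l,m}$ is contained in $[-B,B]$.

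The two bounds together finish the proof: $A^{l,m}$ is contained in the finite set $\{a\in\Q:\ (d^m-1)a\in\Z,\ |a|\leq B\}$. The step I expect to demand the most care is reconciling the two descriptions of $\sigma$, which is defined $d$-adically as a shift yet must be tracked on the rational representatives; here the hypothesis $\gcd(q,d)=1$ characterizing $\Z_{d,0}$ is exactly what guarantees that division by $d$ lands back in $\Z_{d,0}$ and does not inflate the denominator. Once $\sigma$ is seen to be simultaneously denominator-nonincreasing and an archimedean contraction, finiteness follows from the standard observation that a bounded set of rationals with uniformly bounded denominators is finite.
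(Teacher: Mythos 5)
Your proposal is correct and takes essentially the same route as the paper: control denominators (all divide $d^m-1$) and establish a uniform archimedean bound on $\bigcup_i A_i^{l,m}$ by induction, using that $\sigma$ contracts by $1/d$ up to a bounded additive error. Your attracting fixed point $M^*=\frac{Cd}{d-1}=\frac{d^{l+1}+d-1}{d-1}$ is exactly the constant $z$ the paper verifies by direct computation, and your explicit check that $\sigma$ does not inflate denominators on $\Z_{d,0}$ fills in a detail the paper only states as a remark.
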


\begin{proof}
First, we remark that the denominators of fractions in $A^{l,m}_i$ are divisors of $d^m-1$. Therefore, it will be enough to prove by induction on $i$ that $A^{l,m}_i\subset[-z,z]$ for $z=\frac{d^{l+1}+d-1}{d-1}$. For $i=0$ the statement is true since $A_0^{l,m}\subset[-d^l,d^l-1]$ by Lemma~\ref{lem:periodic}, and
\[z=\frac{d^{l+1}+d-1}{d-1}>\frac{d^{l+1}+d-1}{d}=d^l+\frac{d-1}d>d^l.\]
Assume that the statement is true for a given $i\geq 0$. Any element of $A_{i+1}^{l,m}$ is equal to $\sigma(x)+b$ for some $x\in A_{i+1}^{l,m}\subset[-z,z]$ and $b\in P^{l,m}\subset[-d^l,d^l-1]$. Since $\sigma(x)=\frac{x-x\mmod d}d$, we immediately obtain
\begin{multline*}\sigma(x)+b\leq\frac xd+b\leq \frac zd+d^l-1=\frac{(d^{l+1}+d-1)}{d(d-1)}+d^{l}-1\\=\frac{d^l+1+(d^{l}-1)(d-1)}{d-1}=\frac{d^{l+1}-d+2}{d-1}<z.
\end{multline*}
For the lower bound we obtain
\begin{multline*}\sigma(x)+b\geq\frac {x-d+1}d+b\geq \frac{-z-d+1}d-d^l=\frac{-\frac{d^{l+1}+d-1}{d-1}-d+1}{d}-d^{l}=\\
\frac{-d^l-d+1}{d-1}-d^l=-\frac{d^{l+1}+d-1}{d-1}=-z.
\end{multline*}
\end{proof}

We are ready to proceed to the main result of this section.

\begin{proof}[Proof of Theorem~\ref{thm:characterization}]
First, assume that $g\in\End(X^*)$ is defined by a finite Mealy automaton $\A$ with the set of states $Q$. In order to prove that $(\vpc^g_n)_{n\geq 0}$ is automatic, by Proposition~\ref{prop:automatic_sections} we need to show that it has finitely many sections at vertices of $X^*$.

Assume that $v\in X^*$ is of length at least 2, $v=v'xy$ for some $v'\in X^*$ and $x,y\in X$. Then the section $(\vpc^g_n)|_v$ is a sequence that can be completely identified by a pair
\begin{equation}
\label{eq:pair}
\bigl(\vpc^g_{\overline v},\sigma((\vpc^g_n)|_v)\bigr),
\end{equation}
where $\vpc^g_{\overline v}$ is its zero term, and $\sigma((\vpc^g_n)|_v)$ is the subsequence made of all other terms.

Since by Corollary~\ref{cor:correspondence2}
\[\sigma((\vpc^g_n)|_v)=\sigma((\vpc^g_n)|_{v'xy})=\sigma((\vpc^{g|_{v'}}_n)|_{xy}),\]
the number of possible choices for the second component in~\eqref{eq:pair} is bounded above by $|Q|\cdot|X|^2$ (as we have $|Q|$ choices for $g|_{v'}$ and $|X|^2$ choices for $xy\in X^2$).

Further, if $\overline v<d$ (i.e., $v=z0^k$ for some $z\in X$) then the number of choices for the first component $\vpc^g_{\overline v}$ of~\eqref{eq:pair} is bounded above by $|Q|\cdot|X|$. Otherwise, $v'=v''x'y'$ for some $v''\in X^*$, $x',y'\in X$ with $y'\neq0$. In this case, $\vpc^g_{\overline v}=\vpc^{g|_{v''}}_{\overline{x'y'}}$, so the number of possible choices for $\vpc^g_{\overline v}$ is again bounded above by $|Q|\cdot|X|^2$. Thus, the sequence $(\vpc^g_n)_{n\geq0}$ has finitely many sections.

To prove the first condition asserting that all $\vpc^g_n$ are in $\Z_d\cap\Q$, or, equivalently, eventually periodic, it is enough to mention that by equation~\eqref{eqn:lambda_n} $\vpc^g_n$ must be eventually periodic for $n\geq d$ as a shifted difference of two eventually periodic words $g(n)=g([n]_d0^\infty)$ and $g(n\_)=g([n\_]_d0^\infty)$. The latter two words are eventually periodic as the images of eventually periodic words $[n]_d0^\infty$ and $[n\_]_d0^\infty$ under a finite automaton transformation. Similar even works for $n<d$, in which case there is no need to take a difference.

Now we prove the converse direction. Assume that for $g\in\End(X^*)$ of $X^*$ the sequence $(\vpc^g_n)_{n\geq0}$ is automatic and consists of eventually periodic elements of $\Z_d$. Then automaticity implies that $\{\vpc^g_n : n\geq0\}$ is finite as a set. Let $l$ be the maximal length among preperiods of all $\vpc^g_n$, and let $m$ be the least common multiple of the lengths of all periods of $\vpc^g_n$. Then clearly $\vpc^g_n\in\psi(P^{l,m})$ for all $n\geq0$ by definition of $P^{l,m}$ given in Lemma~\ref{lem:preperiodic}.

Our aim is to show that the set $\{g|_v : v\in X^*\}$ is finite. We will show that there is only finitely many portraits $(\vpc^{g|_v}_n)_{n\geq0}$. Let $v\in X^*$. By Corollary~\ref{cor:correspondence2} the part of the portrait of $(\vpc^{g|_v}_n)_{n\geq0}$ below level one coincides with the part below level one of the restriction of the portrait of $(\vpc^{g}_n)_{n\geq0}$ on the subtree hanging down from vertex $v$. But according to Propositions~\ref{prop:automatic_sections} and~\ref{prop:portrait}, since $(\vpc^{g}_n)_{n\geq0}$ is automatic, there is only finite number of such restrictions as the set of all sections $\{(\vpc^{g}_n)|_v: v\in X^*\}$ is finite.

Hence, we only need to check that there is a finite number of choices for the van der Put coefficients of the first level of $g|_v$ for $v\in X^*$. To do that we will prove by induction on $|v|$ that $\vpc^{g|_v}_i\in \psi(A^{l,m}_{|v|})$ for $0\leq i<d$. The claim is trivial for $|v|=0$ by definition of $A^{l,m}$ and the choice of $l$ and $m$. Assume that the claim is true for all words $v$ of length $k$, and let $vx$ be a word of length $k+1$ for some $x\in X$. Then by assumption $\vpc^{g|_v}_x\in \psi(A^{l,m}_{|v|})$, and additionally $\vpc^{g|_v}_{x+d\cdot i}\in P^{l,m}$ for $1\leq i<d$ as these coefficients of $g$ on the second level of its portrait coincide with corresponding coefficients of $g$.  Now by Theorem~\ref{thm:van_der_put_sections} we obtain
\begin{equation*}
\vpc_i^{g|_{vx}}=\vpc_i^{(g|_v)|_x}=
\left\{\begin{array}{ll}
\sigma(\vpc_x^{g|_v}),& i=0,\\
\vpc_{x+d\cdot i}^{g|_v}+\sigma(\vpc_x^{g|_v}),& 0<i<d,
\end{array}
\right.
\end{equation*}
In both cases we get that $\vpc_i^{g|_{vx}}\in A^{l,m}_{|vx|}$ by definition of $A^{l,m}_{|vx|}$ from~\eqref{eqn:A}. Finally, Lemma~\ref{lem:finite} now guarantees that $g$ has finitely many sections and completes the proof.
\end{proof}

\section{Mealy and Moore Automata Associated to an Endomorphism of $X^*$}
\label{sec:relation}

The above proof of Theorem~\ref{thm:characterization} allows us to build algorithms that construct the Moore automaton of the automatic sequence of reduced van der Put coefficients of a transformation of $\Z_d$ defined by a finite state Mealy automaton, and vice versa.

We start from constructing the Moore automaton generating the sequence of reduced van der Put coefficients of an endomorphism $g$ from the finite Mealy automaton defining $g$.

\begin{theorem}
\label{thm:covering}
Let $g\in\End(X^*)$ be an endomorphism of $X^*$ defined by a finite initial Mealy automaton $\A$ with the set of states $Q_{\A}=\{g|_v\colon v\in X^*\}$. Let also $(\vpc_n^g)_{n\geq0}$ be the sequence of reduced van der Put coefficients of the map $\Z_d\to\Z_d$ induced by $g$. Then the Moore automaton $\B=(Q_{\B},X,\delta,q,\Z_d,\tau)$, where
\begin{itemize}
\item the set of states is $Q_{\B}=\left\{\bigl(g|_v,(\vpc^g_{\overline{vy}})_{y\in X}\bigr)\colon v\in X^*\right\}$,
\item the transition and output functions are
\begin{equation}
\label{eqn:states_labels}
\begin{array}{rcl}
\delta\left(\bigl(g|_v,(\vpc^g_{\overline{vy}})_{y\in X}\bigr),x\right)&=&\bigl(g|_{vx},(\vpc^g_{\overline{vxy}})_{y\in X}\bigr),\\
\tau\left(\bigl(g|_v,(\vpc^g_{\overline{vy}})_{y\in X}\bigr)\right)&=&\vpc^g_{\overline{v}},
\end{array}
\end{equation}
\item the initial state is $q=\bigl(g,(\vpc^g_{\overline{y}})_{y\in X}\bigr)$
\end{itemize}
is finite and generates the sequence $(\vpc_n^g)_{n\geq0}$.
\end{theorem}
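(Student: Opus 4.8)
The plan is to verify three things about the proposed Moore automaton $\B$: that it is \emph{finite}, that its transition and output functions are \emph{well-defined} on the state set $Q_\B$, and that it actually \emph{generates} the target sequence $(\vpc_n^g)_{n\geq 0}$ in the sense of Definition~\ref{def:automaticity}. The states of $\B$ are pairs $\bigl(g|_v,(\vpc^g_{\overline{vy}})_{y\in X}\bigr)$, so the first component is a section of $g$ (of which there are finitely many, since $\A$ is finite state) and the second component records the first-level reduced van der Put coefficients indexed by the vertices $vy$, $y\in X$. To see finiteness I would argue that the second component is determined up to finitely many possibilities: by Corollary~\ref{cor:sections1}, for $y\neq 0$ the coefficient $\vpc^g_{\overline{vy}}$ coincides with a first-level coefficient of the section $g|_{v}$ pushed down the tree, so the tuple $(\vpc^g_{\overline{vy}})_{y\in X}$ is controlled by $g|_v$ together with a bounded amount of local data; since there are only $|Q_\A|$ sections, $Q_\B$ is finite. (This is essentially the finiteness bookkeeping already carried out in the proof of Theorem~\ref{thm:characterization}.)

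Next I would check that the maps in~\eqref{eqn:states_labels} are genuinely functions of the \emph{state} and not of the representing word $v$. The key point is that if two words $v,v'$ give the same state, meaning $g|_v=g|_{v'}$ and $(\vpc^g_{\overline{vy}})_{y}=(\vpc^g_{\overline{v'y}})_{y}$, then for each $x\in X$ the successor data $\bigl(g|_{vx},(\vpc^g_{\overline{vxy}})_{y}\bigr)$ agrees with $\bigl(g|_{v'x},(\vpc^g_{\overline{v'x y}})_{y}\bigr)$. Equality of the first components is immediate since $g|_{vx}=(g|_v)|_x=(g|_{v'})|_x=g|_{v'x}$. For the second components, Corollary~\ref{cor:correspondence} (or directly Theorem~\ref{thm:van_der_put_sections}) expresses the coefficients $\vpc^g_{\overline{vxy}}$ at levels two and below in terms of the section $g|_v$ alone, while the remaining coefficient $\vpc^g_{\overline{vx}}$ (the new $\tau$-value at the child) is exactly one entry of the tuple already stored in the current state. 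Thus the transition is well-defined, and the output $\tau$ reading off $\vpc^g_{\overline v}$ is manifestly a function of the state.

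Finally I would establish that $\B$ generates $(\vpc^g_n)_{n\geq 0}$. Given $n\geq 0$, pick any $w\in X^*$ with $\overline w=n$; I claim $\delta(q,w)=\bigl(g|_w,(\vpc^g_{\overline{wy}})_{y\in X}\bigr)$, which follows by a direct induction on $|w|$ using the transition rule in~\eqref{eqn:states_labels} with initial state $q=\bigl(g,(\vpc^g_{\overline y})_{y\in X}\bigr)$. Applying $\tau$ then yields $\vpc^g_{\overline w}=\vpc^g_n$, as required by Definition~\ref{def:automaticity}. I would also note that the answer is independent of the chosen representative $w$: if $\overline w=\overline{w'}=n$ then $w$ and $w'$ differ only by trailing zeros, and Corollary~\ref{cor:correspondence2} together with the labelling convention guarantees the stored output $\vpc^g_{\overline w}$ is unchanged.

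The step I expect to be the main obstacle is the well-definedness of the transition function, specifically showing that the stored tuple $(\vpc^g_{\overline{vy}})_{y\in X}$ at a child vertex is reconstructible from the parent state data rather than from the global index $\overline v$. This is where the geometric content of Theorem~\ref{thm:van_der_put_sections} is genuinely used: the first-level coefficients of $g|_{vx}$ are \emph{not} simply inherited from the portrait of $g$ but involve the shift map $\sigma$ applied to the parent coefficient (the cases $n=0$ and $0<n<d$ of~\eqref{eqn:lambda_section}), so one must check that exactly the data retained in the state — namely $g|_v$ and the tuple $(\vpc^g_{\overline{vy}})_y$ — suffices to compute $(\vpc^g_{\overline{vxy}})_y$. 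Everything else is routine induction and the finiteness bookkeeping already available from the proof of Theorem~\ref{thm:characterization}.
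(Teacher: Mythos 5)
Your proposal is correct and follows essentially the same route as the paper: both hinge on the fact that the pair $\bigl(g|_v,(\vpc^g_{\overline{vy}})_{y\in X}\bigr)$ determines the section $(\vpc^g_n)_{n\geq0}|_v$ (via Corollary~\ref{cor:correspondence2}/Corollary~\ref{cor:sections2}), with finiteness supplied by the same bookkeeping from the proof of Theorem~\ref{thm:characterization}. The only cosmetic difference is that you verify generation directly by induction on $|w|$ and check well-definedness of the transitions by hand, whereas the paper factors through the kernel automaton of Proposition~\ref{prop:moore_automaton} and observes that $\B$ minimizes to it.
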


\begin{proof}
According to Proposition~\ref{prop:moore_automaton} one can construct an automaton $\B'$ generating $(\vpc_n^g)_{n\geq0}$ as follows. The states of $\B'$ are the sections of $(\vpc_n^g)_{n\geq0}$ at the vertices of $X^*$ (i.e., the $d$-kernel of $(\vpc_n^g)_{n\geq0}$) with the initial state being the whole sequence $(\vpc_n^g)_{n\geq0}$, and transition and output functions defined by~\eqref{eqn:moore_transitions}. Let $v\in X^*$ be an arbitrary vertex. By Corollary~\ref{cor:correspondence2} the labels of the portrait of $(\vpc_n^g)_{n\geq0}|_v$ at level $2$ and below coincide with the corresponding labels of the portrait of $(\vpc_n^{g|_v})_{n\geq0}$. Therefore, each state $(\vpc_n^g)_{n\geq0}|_v$ of $\B'$ can be completely defined by a pair, called the \emph{label} of this state:
\begin{equation}
\label{eqn:states_labels2}
l((\vpc_n^g)_{n\geq0}|_v)=\bigl(g|_v,(\vpc^g_{\overline{vy}})_{y\in X}\bigr),
\end{equation}
where $(\vpc^g_{\overline{vy}})_{y\in X}$ is the $d$-tuple of the first $d$ terms of $(\vpc_n^g)_{n\geq0}|_v$ that corresponds to the labels of the first level of the portrait of this sequence. The first component of this pair defines the terms of $(\vpc_n^{g|_v})_{n\geq0}$ at level 2 and below, and the second component consists of terms of the first level. It is possible that different labels will define the same state of $\B'$, but clearly the automaton $\B$ from the statement of the theorem also generates $(\vpc_n^g)_{n\geq0}$ since its minimization coincides with $\B'$.
Indeed, the set of states of $\B$ is the set of labels of states of $\B'$ and the transitions in $\B$ are obtained from the transitions in $\B'$ defined in Proposition~\ref{prop:moore_automaton}, and the definition of labels.

Finally, the finiteness of $Q_{\B}$ follows from our proof of Theorem~\ref{thm:characterization} since the set $\{g|_v\colon v\in X^*\}$ (coinciding with $Q_{\A}$) is finite, and the set $\{\vpc^g_{\overline{vy}}\colon v\in X^*, y\in X\}$ is a subset of a finite set $\{\vpc^{g'}_{\overline{w}}\colon g'\in Q_{\A}, w\in X\cup X^2\}$.
\end{proof}

For the algorithmic procedure that, given a finite state $g\in\End(X^*)$, constructs a Moore automaton generating the sequence $(\vpc^g_n)_{n\geq0}$ of its reduced van der Put coefficients, we need the following lemma.

\begin{lemma}
\label{lem:vpc_algorithm}
Given a finite state endomorphism $g\in G$ acting on $X^*$ with $|X|=d$, its first $d^2$ reduced van der Put coefficients $\vpc_{\overline{v}}^g$, $v\in X^*$ of length at most 2, are eventually periodic elements of $\Z_d$ that can be algorithmically computed.
\end{lemma}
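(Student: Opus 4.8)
The plan is to reduce the computation of the finitely many coefficients $\vpc_{\overline{v}}^g$ for $|v|\leq 2$ to finitely many evaluations of the induced map $\hat g$ on eventually periodic inputs, and then to show that each such evaluation produces an eventually periodic output that can be extracted algorithmically from the Mealy automaton $\A$. First I would note that there are only finitely many words $v\in X^*$ of length at most $2$ (namely $1+d+d^2$ of them), so it suffices to compute a single coefficient $\vpc_{\overline v}^g$ by an effective procedure. By the formula~\eqref{eqn:lambda_n}, each such coefficient is either $g(\overline v)=g([\overline v]_d0^\infty)$ (when $\overline v<d$) or the shifted difference $d^{-\lfloor\log_d\overline v\rfloor}\bigl(g([\overline v]_d0^\infty)-g([\,\overline v\_\,]_d0^\infty)\bigr)$ (when $\overline v\geq d$). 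Thus everything reduces to evaluating $g$ on inputs of the form $w0^\infty$ with $w\in X^*$ of length at most $2$.

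The key observation is that the input $w0^\infty$ is eventually periodic (with period $0$), so feeding it into the finite Mealy automaton $\A$ yields an eventually periodic output. Concretely, after reading the finite prefix $w$, the automaton arrives at some state $g|_w\in Q_{\A}$ and thereafter reads the constant tail $0^\infty$; since $Q_{\A}$ is finite, the sequence of states $g|_{w},\,g|_{w0},\,g|_{w0^2},\dots$ visited while reading $0^\infty$ must eventually cycle. Detecting this cycle is a finite computation: iterate the transition $q\mapsto\delta(q,0)$ starting from $g|_w$ until a state repeats, which happens after at most $|Q_{\A}|$ steps. Recording the output letters $\lambda(\cdot,0)$ emitted along the preperiodic prefix and the period then gives the base-$d$ expansion of the $d$-adic integer $g(w0^\infty)$ as an explicit eventually periodic sequence $uv^\infty$; this is exactly an element of $\Z_d\cap\Q$, identified via $\psi^{-1}$ with a rational in $\Z_{d,0}$.

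Having obtained each of $g([\overline v]_d0^\infty)$ and, when needed, $g([\,\overline v\_\,]_d0^\infty)$ as explicit eventually periodic $d$-adic integers, the final step is to carry out the arithmetic of~\eqref{eqn:lambda_n} within $\Z_d$: forming the difference of two eventually periodic elements and applying the shift $\sigma$ a fixed number ($\lfloor\log_d\overline v\rfloor$) of times. Both operations preserve eventual periodicity and are algorithmically computable, either by working directly with the $d$-adic digit streams (addition and subtraction with carry on eventually periodic sequences produce eventually periodic sequences) or, more cleanly, by passing through the rational representation using $\psi^{-1}(uv^\infty)=\overline u+\tfrac{\overline v\cdot d^{|u|}}{1-d^{|v|}}$, performing ordinary rational arithmetic, and then re-expanding the result into its $d$-adic form. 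Either route terminates and yields $\vpc_{\overline v}^g$ as a concrete eventually periodic element of $\Z_d$.

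The main obstacle I anticipate is purely bookkeeping rather than conceptual: one must argue carefully that the shifted difference in the second case of~\eqref{eqn:lambda_n} lands exactly in $\Z_d$ (i.e., that the $d$-adic valuation of the numerator $g([\overline v]_d0^\infty)-g([\,\overline v\_\,]_d0^\infty)$ is at least $\lfloor\log_d\overline v\rfloor$, so that division by $d^{\lfloor\log_d\overline v\rfloor}$ is well-defined and produces a genuine $d$-adic integer). This, however, is guaranteed a priori by Theorem~\ref{thm:lipschitz}, since $\hat g$ is $1$-Lipschitz, so the division never fails; the remaining work is simply to verify that the effective digit-by-digit (or rational) computation respects this valuation bound and terminates with the correct eventually periodic expansion.
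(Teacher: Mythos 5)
Your proposal is correct and follows essentially the same route as the paper's proof: reduce via~\eqref{eqn:lambda_n} to evaluating $g$ on the eventually periodic inputs $w0^\infty$ with $|w|\leq 2$, observe that a finite Mealy automaton maps eventually periodic words to eventually periodic words with effectively computable preperiod and period (by cycle detection among the at most $|Q_{\A}|$ states reached while reading the tail $0^\infty$), and then perform the subtraction and shift. Your explicit treatment of the cycle-detection bound and of why the division by $d^{\lfloor\log_d\overline v\rfloor}$ stays in $\Z_d$ (via Theorem~\ref{thm:lipschitz}) merely fills in details the paper leaves implicit.
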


\begin{proof}
Suppose $g$ has $q$ states. If $\overline v=i<d$, then by definition $\vpc_i^g=g(i0^\infty)$ is the image of an eventually periodic word under a finite automaton transformation. Thus, it is also eventually periodic with the period of length at most $q$ and the preperiod of length at most $q+1$. Clearly both period and preperiod can be computed effectively. Further, if $d\leq \overline v<d^2$, then $v=xy$ for $x,y\in X$ with $y\neq 0$. In this case $\vpc^g_{\overline v}=\frac{g(xy0^\infty)-g(x0^\infty)}{d}$ is eventually periodic as a shifted difference of two eventually periodic words $g(xy0^\infty)$ and $g(x0^\infty)$. The latter two words are eventually periodic as the images of eventually periodic words $xy0^\infty$ and $x0^\infty$ under a finite automaton transformation that can be effectively computed.
\end{proof}

\begin{algorithm}[Construction of Moore automaton from Mealy automaton]
\label{alg:mealy_to_moore}
Suppose an endomorphism $g$ of $X^*$ is defined by a finite state Mealy automaton $\A$ with the set of states $Q_{\A}$. To construct a Moore automaton $\B$ defining the sequence of reduced van der Put coefficients $(\vpc^g_n)_{n\geq 0}$ complete the following steps.
\begin{enumerate}
\item[Step 1.] Compute $\vpc^{g'}_{\overline{w}}$ for each $g'\in Q_{\A}$ and $w\in X\cup X^2$.
\item[Step 2.] Start building the set of states of $\B$ from its initial state $q=\bigl(g,(\vpc^g_{\overline{y}})_{y\in X}\bigr)$ with $\tau(q)=\vpc^g_0$. Define $Q_0=\{q\}$.
\item[Step 3.] To build $Q_{i+1}$ from $Q_i$ for $i\geq 1$ start from the empty set and for each state $q=\bigl(g|_v,(\vpc^{g}_{\overline{vy}})_{y\in X}\bigr)\in Q_i$ and each $x\in X$ add the state $q_x=\bigl(g|_{vx},(\vpc^{g}_{\overline{vxy}})_{y\in X}\bigr)$ to $Q_{i+1}$ unless it belongs to $Q_{j}$ for some $j\leq i$ or is already in $Q_{i+1}$. Use Corollary~\ref{cor:sections2} to identify $\vpc^{g}_{\overline{vxy}}$ with one of the elements computed in Step 1. Extend the transition function by $\delta(q,x)=q_x$ and the output function by $\tau(q_x)=\vpc^{g}_{\overline{vx}}$.
\item[Step 4.] Repeat Step 3 until $Q_{i+1}=\emptyset$.
\item[Step 5.] The set of state of the Moore automaton $\B$ is $\cup_{i\geq0}Q_i$, where the transition and output functions are defined in Step 3.
\end{enumerate}
\end{algorithm}

A particular connection between the constructed Moore automaton $\B$ and the original Mealy automaton $\A$ can be seen at the level of underlying oriented graphs as explained below.

\begin{definition}
For a Mealy automaton $\A=(Q,X,\delta,\lambda)$ we define its \emph{underlying oriented graph} $\Gamma(\A)$ to be the oriented labeled graph whose set of vertices is the set $Q$ of states of $\A$ and whose edges correspond to the transitions of $\A$ and are labeled by the input letters of the corresponding transitions. I.e., there is an oriented edge from $q\in Q$ to $q'\in Q$ labeled by $x\in X$ if and only if $\delta(q,x)=q'$.
\end{definition}

In other words, the underlying oriented graph of a Mealy automaton $\A$ can be obtained from the Moore diagram of $\A$ by removing second components of edge labels. For example, Figure~\ref{fig:underlying} depicts the underlying graph of a Mealy automaton from Figure~\ref{fig:lamp_aut} generating the lamplighter group $\LL$. Similarly, we construct underlying oriented graph of a Moore automaton.

\begin{figure}[h]
\begin{center}
\epsfig{file=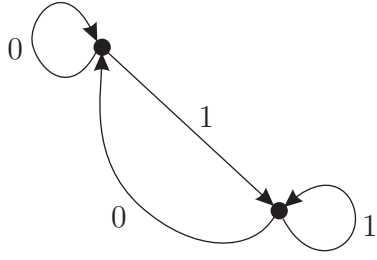}
\caption{Underlying graph for Mealy automaton from Figure~\ref{fig:lamp_aut} and Moore automaton from Figure~\ref{fig:thue_morse_moore}\label{fig:underlying}}
\end{center}
\end{figure}

\begin{definition}
For a Moore automaton $\B=(Q,X,\delta,q_0,A,\tau)$ we define its \emph{underlying oriented graph} $\Gamma(\B)$ to be the oriented labeled graph whose set of vertices is the set $Q$ of states of $\B$ and whose edges correspond to transitions of $\B$ and are labeled by the input letters of the corresponding transitions. I.e., there is an oriented edge from $q\in Q$ to $q'\in Q$ labeled by $x\in X$ if and only if $\delta(q,x)=q'$.
\end{definition}

Figure~\ref{fig:underlying} depicts also the underlying graph of a Moore automaton from Figure~\ref{fig:thue_morse_moore} generating the Thue-Morse sequence.

We finally define a \emph{covering} of such oriented labeled graphs to be a surjective (both on vertices and edges) graph homomorphism that preserves the labels of the edges.

\begin{corollary}
\label{cor:covering}
Let $g\in\End(X^*)$ be an endomorphism of $X^*$ defined by a finite Mealy automaton $\A$. Let also $(\vpc_n^g)_{n\geq0}$ be the (automatic) sequence of the reduced van der Put coefficients of a transformation $\Z_d\to\Z_d$ induced by $g$. Then the underlying oriented graph $\Gamma(\B)$ of the Moore automaton $\B$ defining $(\vpc_n^g)_{n\geq0}$ obtained from $\A$ by Algorithm~\ref{alg:mealy_to_moore} covers the underlying oriented graph $\Gamma(\A)$.
\end{corollary}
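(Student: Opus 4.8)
The plan is to produce the covering map explicitly as the coordinate projection $\pi\colon Q_{\B}\to Q_{\A}$ forgetting the recorded first-level coefficients, and then to check the three requirements from the definition of a covering: that $\pi$ is a label-preserving graph homomorphism, and that it is surjective on vertices and on edges. Everything will reduce to unwinding the transition rules fixed in Theorem~\ref{thm:covering} and Algorithm~\ref{alg:mealy_to_moore}.

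First I would set $\pi\bigl(g|_v,(\vpc^g_{\overline{vy}})_{y\in X}\bigr)=g|_v$. This is well defined on $Q_{\B}$ since, by the description of $Q_{\B}$ in Theorem~\ref{thm:covering}, the section $g|_v$ is literally the first component of each state of $\B$; on vertices $\pi$ is then tautologically surjective onto $Q_{\A}=\{g|_v\colon v\in X^*\}$, every $g|_v$ being the image of the state $\bigl(g|_v,(\vpc^g_{\overline{vy}})_{y\in X}\bigr)$.

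The heart of the argument is the commutation identity $\pi\bigl(\delta_{\B}(s,x)\bigr)=\delta_{\A}\bigl(\pi(s),x\bigr)$ for every state $s$ of $\B$ and every $x\in X$. For $s=\bigl(g|_v,(\vpc^g_{\overline{vy}})_{y\in X}\bigr)$ the transition rule~\eqref{eqn:states_labels} gives $\delta_{\B}(s,x)=\bigl(g|_{vx},(\vpc^g_{\overline{vxy}})_{y\in X}\bigr)$, whose first component is $g|_{vx}=\delta_{\A}(g|_v,x)$, i.e. exactly the image of $\pi(s)=g|_v$ under the input $x$ in $\A$. This single identity shows at once that $\pi$ carries each $x$-labeled edge of $\Gamma(\B)$ to an $x$-labeled edge of $\Gamma(\A)$ respecting incidence, so $\pi$ is a label-preserving graph homomorphism; and reading it in reverse shows surjectivity on edges, since an arbitrary edge of $\Gamma(\A)$ from $g|_v$ to $g|_{vx}$ with label $x$ is the $\pi$-image of the $x$-labeled edge issuing from $\bigl(g|_v,(\vpc^g_{\overline{vy}})_{y\in X}\bigr)$.

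I expect no genuine obstacle, as the statement is essentially a repackaging of the defining transitions of $\B$. The only point deserving an explicit remark is that $\pi$ need not be injective: two states of $\B$ with the same section $g|_v$ but different first-level tuples $(\vpc^g_{\overline{vy}})_{y\in X}$ are both sent to the single vertex $g|_v$ of $\A$. This is harmless, since a covering in the sense of this section is required only to be surjective, not bijective, and it is in fact the expected behavior — the Moore automaton $\B$ typically has more states than the Mealy automaton $\A$.
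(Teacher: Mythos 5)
Your proposal is correct and follows essentially the same route as the paper: both define the covering as the projection $\bigl(g|_v,(\vpc^g_{\overline{vy}})_{y\in X}\bigr)\mapsto g|_v$ and observe that it intertwines the transition rule of $\B$ from Theorem~\ref{thm:covering} with $\delta(g|_v,x)=g|_{vx}$ in $\A$. You merely spell out the surjectivity on vertices and edges and the non-injectivity remark more explicitly than the paper's one-line argument.
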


\begin{proof}
Since the transitions in the original Mealy automaton $\A$ defining $g$ are defined by $\delta(g|_v,x)=g|_{vx}$, we immediately get that the map from the set of vertices of the underlying oriented graph of $\B$ to the set of vertices of the underlying oriented graph of $\A$ defined by
\[\bigl(g|_v,(\vpc^g_{\overline{vy}})_{y\in X}\bigr)\mapsto g|_v, v\in X^*\]
is a graph covering.
\end{proof}

Now we describe the procedure that constructs a Mealy automaton of an endomorphism defined by an automatic sequence generated by a given Moore automaton.

\begin{theorem}
\label{thm:covering2}
Let $g\in\End(X^*)$ be an endomorphism of $X^*$ induced by a transformation of $\Z_d$ with the sequence of reduced van der Put coefficients $(\vpc_n^g)_{n\geq0}\subset\Z_d$ generated by a finite Moore automaton $\B$ with the set of states $Q_{\B}=\{(\vpc_n^g)_{n\geq0}|_v\colon v\in X^*\}$. Then the Mealy automaton $\A=(Q_{\A},X,\delta,\lambda,q)$, where
\begin{itemize}
\item the set of states is $Q_{\A}=\left\{\bigl((\vpc^g_n)_{n\geq 0}|_v,(\vpc^{g|_v}_{i})_{i=0,1,\ldots,d-1}\bigr)\colon v\in X^*\right\}$,
\item the transition and output functions are
\begin{equation}
\label{eqn:states_labels4}
\begin{array}{rcl}
\delta\left(\bigl((\vpc^g_n)_{n\geq 0}|_v,(\vpc^{g|_v}_{i})_{i=0,1,\ldots,d-1}\bigr),x\right)&=&\bigl((\vpc^g_n)_{n\geq 0}|_{vx},(\vpc^{g|_{vx}}_{i})_{i=0,1,\ldots,d-1}\bigr),\\
\lambda\left(\bigl((\vpc^g_n)_{n\geq 0}|_v,(\vpc^{g|_v}_{i})_{i=0,1,\ldots,d-1}\bigr),x\right)&=&\vpc^{g|_v}_{\overline x}\mmod d.
\end{array}
\end{equation}
\item the initial state is $q=\bigl((\vpc^g_n)_{n\geq 0},(\vpc^{g}_{i})_{i=0,1,\ldots,d-1}\bigr)$
\end{itemize}
is finite and defines the endomorphism $g$.
\end{theorem}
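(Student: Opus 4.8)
The plan is to exhibit $\A$ as a ``decorated'' copy of the standard Mealy automaton of $g$, where each state remembers the section $g|_v$ through two pieces of data. The key observation is that the pair $\bigl((\vpc^g_n)_{n\geq 0}|_v,(\vpc^{g|_v}_{i})_{i=0,\ldots,d-1}\bigr)$ determines $g|_v$ completely: by Corollary~\ref{cor:correspondence} the first component already determines $\vpc^{g|_v}_n$ for all $n\geq d$, while the second component supplies the remaining coefficients $\vpc^{g|_v}_i$ with $0\leq i<d$; together they recover the entire sequence $(\vpc^{g|_v}_n)_{n\geq 0}$, which by Theorem~\ref{thm:lipschitz} determines the $1$-Lipschitz map $g|_v$. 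I would therefore organize the argument around the map $\Phi$ sending each state (at $v$) to $g|_v$, and establish three things in order: that $\delta$ and $\lambda$ are well-defined functions of the recorded pair, that $Q_\A$ is finite, and that the initial automaton $\A_q$ induces $g$.

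The main obstacle is well-definedness of the transition function, i.e.\ showing that the new pair $\bigl((\vpc^g_n)|_{vx},(\vpc^{g|_{vx}}_i)_{i<d}\bigr)$ depends only on the old pair and on $x$, and not on the particular $v$. For the first component this is immediate, since $(\vpc^g_n)|_{vx}=\bigl((\vpc^g_n)|_v\bigr)|_x$ is a $d$-section of the first component. For the second component I would invoke Theorem~\ref{thm:van_der_put_sections} applied to $h=g|_v$, which gives $\vpc^{g|_{vx}}_0=\sigma(\vpc^{g|_v}_x)$ and $\vpc^{g|_{vx}}_i=\vpc^{g|_v}_{x+id}+\sigma(\vpc^{g|_v}_x)$ for $0<i<d$. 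Here $\vpc^{g|_v}_x$ is simply the $x$-th entry of the second component, while each index $x+id$ with $0<i<d$ satisfies $d\leq x+id<d^2$, so by Corollary~\ref{cor:correspondence} the coefficient $\vpc^{g|_v}_{x+id}$ equals the $(x+id)$-th term of the kernel sequence $(\vpc^g_n)|_v$, which is read off the first component. Thus every ingredient needed to form the new pair is determined by the current pair together with $x$, so $\delta$ is well defined; the output $\lambda(\,\cdot\,,x)=\vpc^{g|_v}_x\mmod d$ visibly depends only on the second component.

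For finiteness of $Q_\A$ I would observe that the first component ranges over the $d$-kernel of $(\vpc^g_n)_{n\geq 0}$, which is finite because $\B$ is finite, and that the second component ranges over a finite set as well: this is precisely the content of the induction carried out in the proof of Theorem~\ref{thm:characterization}, where it is shown that $\vpc^{g|_v}_i\in\psi(A^{l,m}_{|v|})\subseteq\psi(A^{l,m})$ for $0\leq i<d$, with $A^{l,m}$ finite by Lemma~\ref{lem:finite}. Hence $Q_\A$ embeds into a product of two finite sets.

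Finally, for correctness I would verify that $\Phi$ is well defined (two vertices giving the same pair give the same section, by the recovery argument above), that $\Phi(q)=g$, and that $\Phi$ intertwines the automaton structure with the self-similar structure of $g$: the transition matches the section map, since $\Phi(\delta(s,x))=g|_{vx}=(g|_v)|_x=\Phi(s)|_x$, and the output matches the first-level action, since $\lambda(s,x)=\vpc^{g|_v}_x\mmod d=g|_v(x0^\infty)\mmod d=\sigma_{g|_v}(x)$ by~\eqref{eqn:lambda_n}. A routine induction on $|w|$, using the extension of $\lambda$ to words together with the wreath-recursion identity $g(x_1\cdots x_n)=\sigma_g(x_1)\,g|_{x_1}(x_2\cdots x_n)$, then yields $\lambda(q,w)=g(w)$ for every $w\in X^*$, so the initial automaton $\A_q$ induces exactly the endomorphism $g$, as claimed.
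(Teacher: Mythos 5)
Your proposal is correct and follows essentially the same route as the paper: both arguments relabel the canonical Mealy automaton of $g$ by the pairs $\bigl((\vpc^g_n)_{n\geq 0}|_v,(\vpc^{g|_v}_{i})_{i<d}\bigr)$, use Corollary~\ref{cor:correspondence} to show such a pair determines the section $g|_v$, and derive finiteness of the second component from the induction in the proof of Theorem~\ref{thm:characterization} together with Lemma~\ref{lem:finite}. You merely make explicit two points the paper compresses into ``its minimization coincides with $\A'$,'' namely the well-definedness of $\delta$ via Theorem~\ref{thm:van_der_put_sections} and the final check that $\A_q$ induces $g$.
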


\begin{proof}
The initial Mealy automaton $\A'$ defining $g$ has the set of states $Q'=\{g|_v\colon v\in X^*\}$, transition and output functions defined as
\begin{equation}
\begin{array}{rcl}
\delta'(g|_v,x)&=&g|_{vx},\\
\lambda'(g|_v,x)&=&g_v(x),
\end{array}
\end{equation}
and the initial state $g=g|_{\epsilon}$.

Since each endomorphism of $X^*$ is uniquely defined by the sequence of its reduced van der Put coefficients, we can identify $Q'$ with the set
\[\{(\vpc^{g|_v}_n)_{n\geq 0}\colon v\in X^*\}.\]
By Corollary~\ref{cor:correspondence} the sequence $(\vpc^{g|_v}_n)_{n\geq 0}$ of the reduced van der Put coefficients that defines $g|_v$ coincides starting from term $d$ with $(\vpc^g_n)_{n\geq 0}|_v$. Therefore, each state $g|_v$ of $\A'$ can be completely defined by a pair, called the \emph{label} of this state:
\begin{equation}
\label{eqn:states_labels3}
l(g|_v)=\bigl((\vpc^g_n)_{n\geq 0}|_v,(\vpc^{g|_v}_{i})_{i=0,1,\ldots,d-1}\bigr),
\end{equation}
where $(\vpc^{g|_v}_{i})_{i=0,1,\ldots,d-1}$ is the $d$-tuple of the first $d$ terms of $(\vpc_n^{g|_v})_{n\geq0}$ that corresponds to the labels of the first level of the portrait of this sequence. As in the equation~\eqref{eqn:states_labels2}, the first component of this pair defines the terms of $(\vpc_n^{g|_v})_{n\geq0}$ at level 2 and below, and the second component consists of terms of the first level.

Similarly to the case of Theorem~\ref{thm:covering}, it is possible that different labels will define the same state of $\A'$, but clearly the automaton $\A$ from the statement of the theorem also generates $g$ since its minimization coincides with $\A'$. Indeed, the set of states of $\A$ is the set of labels of states of $\A'$ and the transition and output functions in $\A$ are obtained from the corresponding functions in $\A'$ and the definition of labels.


Finally, the finiteness of $Q$ follows from the above proof of Theorem~\ref{thm:characterization} since the set $\{(\vpc^g_n)_{n\geq 0}|_v\colon v\in X^*\}$ (coinciding with $Q_{\B}$) is finite, and the set $\{\vpc^{g|_v}_{i}\colon v\in X^*, i=0,1,\ldots,d-1\}$ is finite as well, which follows from Lemma~\ref{lem:finite}.
\end{proof}

We conclude with the description of the algorithm of building the Mealy automaton of an endomorphism of $X^*$ from a Moore automaton defining the sequence of its reduced van der Put coefficients.

\begin{algorithm}[Construction of Mealy automaton from Moore automaton]
\label{alg:moore_to_mealy}
Let $g\in\End(X^*)$ be an endomorphism of $X^*$ induced by a transformation of $\Z_d$ with the sequence of reduced van der Put coefficients $(\vpc_n^g)_{n\geq0}$, $\vpc_n^g\in\Z_d$ defined by a finite Moore automaton $\B$ with the set of states $Q_{\B}=\{(\vpc_n^g)_{n\geq0}|_v\colon v\in X^*\}$. To construct a Mealy automaton $\A=(Q,X,\delta,\lambda,q)$ defining endomorphism $g$ complete the following steps.
\begin{enumerate}
\item[Step 1.] Start building the set of states of $\A$ from its initial state $q=\bigl((\vpc^g_n)_{n\geq 0},(\vpc^{g}_{i})_{i=0,1,\ldots,d-1}\bigr)$ with $\tau(q)=\vpc^g_0$. Define $Q_0=\{q\}$.
\item[Step 2.] To build $Q_{i+1}$ from $Q_i$ for $i\geq 1$ start from empty set and for each state $q=\bigl((\vpc^g_n)_{n\geq 0}|_v,(\vpc^{g|_v}_{i})_{i=0,1,\ldots,d-1}\bigr)\in Q_i$ and each $x\in X$ add state $q_x=\bigl((\vpc^g_n)_{n\geq 0}|_{vx},(\vpc^{g|_{vx}}_{i})_{i=0,1,\ldots,d-1}\bigr)$ to $Q_{i+1}$ unless it belongs to $Q_{j}$ for some $j\leq i$ or is already in $Q_{i+1}$. Use the second case in~\eqref{eqn:lambda_section} to calculate $\vpc^{g|_{vx}}_{i}$ from $\vpc^{g}_{j}$'s, which are the values of the output function of the given Moore automaton. Extend the transition function by $\delta(q,x)=q_x$ and the output function by $\lambda(q,x)=\vpc^{g|_v}_{\overline{x}}\mmod d$.
\item[Step 3.] Repeat Step 2 until $Q_{i+1}=\emptyset$.
\item[Step 4.] The set of states of Mealy automaton $\A$ is $\cup_{i\geq0}Q_i$, where transition and output functions are defined in Step 2.
\end{enumerate}
\end{algorithm}

\begin{corollary}
\label{cor:covering2}
Let $g\in\End(X^*)$ be an endomorphism of $X^*$ induced by a selfmap of $\Z_d$ with the sequence of reduced van der Put coefficients defined by finite Moore automaton $\B$. Then the underlying oriented graph $\Gamma(\A)$ of the Mealy automaton $\A$ obtained from $\B$ by Algorithm~\ref{alg:moore_to_mealy}  covers the underlying oriented graph of $\B$.
\end{corollary}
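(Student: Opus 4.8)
The plan is to exhibit an explicit surjective label-preserving graph homomorphism from $\Gamma(\A)$ onto $\Gamma(\B)$, exactly mirroring the argument used for Corollary~\ref{cor:covering}, but with the roles of the two automata reversed. The natural candidate is the map sending each vertex of $\Gamma(\A)$ to its first coordinate, namely
\[
\bigl((\vpc^g_n)_{n\geq 0}|_v,(\vpc^{g|_v}_{i})_{i=0,1,\ldots,d-1}\bigr)\;\longmapsto\;(\vpc^g_n)_{n\geq 0}|_v,
\qquad v\in X^*.
\]
Since by Theorem~\ref{thm:covering2} the set of states of $\A$ is precisely the set of \emph{labels} $l(g|_v)$ of the states $g|_v$ of the minimal Mealy automaton $\A'$, and since the set of states of $\B$ is $Q_{\B}=\{(\vpc^g_n)_{n\geq0}|_v\colon v\in X^*\}$, this projection is well defined and surjective on vertices by construction.

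The key step is to verify that this vertex map is compatible with the (input-labeled) edges, i.e. that it carries each edge of $\Gamma(\A)$ to an edge of $\Gamma(\B)$ with the same label. This follows directly from comparing the transition functions: by~\eqref{eqn:states_labels4} the $\A$-transition on input $x$ sends the state with first coordinate $(\vpc^g_n)_{n\geq0}|_v$ to the state with first coordinate $(\vpc^g_n)_{n\geq0}|_{vx}$, whereas by~\eqref{eqn:moore_transitions} the $\B$-transition on input $x$ sends $(\vpc^g_n)_{n\geq0}|_v$ to $(\vpc^g_n)_{n\geq0}|_{vx}$. Hence the image of the $\A$-edge labeled $x$ from $l(g|_v)$ to $l(g|_{vx})$ is exactly the $\B$-edge labeled $x$ from $(\vpc^g_n)_{n\geq0}|_v$ to $(\vpc^g_n)_{n\geq0}|_{vx}$, so labels are preserved. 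Surjectivity on edges is then immediate: every edge of $\Gamma(\B)$ arises from some pair $(v,x)$, and the corresponding $\A$-edge maps onto it.

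The main thing to be careful about — rather than a genuine obstacle — is the finiteness and the many-to-one nature of the correspondence. Distinct labels may collapse to the same state of $\A'$, and distinct sections $v,w$ may give the same first coordinate; neither issue affects the argument, since a covering is only required to be a surjective label-preserving homomorphism, not an isomorphism. Finiteness of both $\Gamma(\A)$ and $\Gamma(\B)$ is guaranteed by Theorem~\ref{thm:covering2} (which invokes Lemma~\ref{lem:finite} for the first-level coordinates) and by the $d$-automaticity of $(\vpc^g_n)_{n\geq0}$ via Proposition~\ref{prop:automatic_sections}. I would therefore organize the proof as: state the projection map, observe it is well defined and surjective on vertices from the descriptions of $Q_{\A}$ and $Q_{\B}$, check edge-compatibility by matching~\eqref{eqn:states_labels4} against~\eqref{eqn:moore_transitions}, and conclude that it is a covering.
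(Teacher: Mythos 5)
Your proposal is correct and follows essentially the same route as the paper: the paper's proof also projects each state $\bigl((\vpc^g_n)_{n\geq 0}|_v,(\vpc^{g|_v}_{i})_{i=0,1,\ldots,d-1}\bigr)$ of $\A$ onto its first coordinate $(\vpc^g_n)_{n\geq 0}|_v$ and observes that the matching transition rules make this a label-preserving surjective graph homomorphism. Your write-up is simply a more detailed version of the paper's two-line argument, with the well-definedness and edge-surjectivity checks made explicit.
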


\begin{proof}
Since the transitions in the original Moore automaton $\B$ defining $g$ are defined by $\delta((\vpc^g_n)_{n\geq 0}|_v,x)=(\vpc^g_n)_{n\geq 0}|_{vx}$, we immediately get that the map from the underlying oriented graph of $\A$ to the underlying oriented graph of $\B$ defined by
\[\bigl((\vpc^g_n)_{n\geq 0}|_v,(\vpc^{g|_v}_{i})_{i=0,1,\ldots,d-1}\bigr)\mapsto (\vpc^g_n)_{n\geq 0}|_v, v\in X^*\]
is a graph covering.
\end{proof}

\section{Examples}
\label{sec:examples}

\subsection{Moore automaton from Mealy automaton}
We first give the example of construction of a Moore automaton from Mealy automaton. Consider the lamplighter group $\LL=(\Z/2\Z)\wr\Z$ generated by the 2-state Mealy automaton $\A$ over the 2-letter alphabet $X=\{0,1\}$ from~\cite{grigorch_z:lamplighter} shown in Figure~\ref{fig:lamp_aut} and defined by the following wreath recursion:
\[\begin{array}{rcl}
p&=&(p,q)(01),\\
q&=&(p,q).
\end{array}
\]

\begin{proposition}
The Moore automaton $\B_p$ generating the sequence of reduced van der Put coefficients of the transformation of $\Z_2$ induced by automorphism $p$ is shown in Figure~\ref{fig:aut_lampl_moore}, where the initial state is on top, and the value of the output function $\tau$ of $\B_p$ at a given state is equal to the first component of the pair of $d$-adic integers in its label.
\begin{figure}[h]
\begin{center}
\epsfig{file=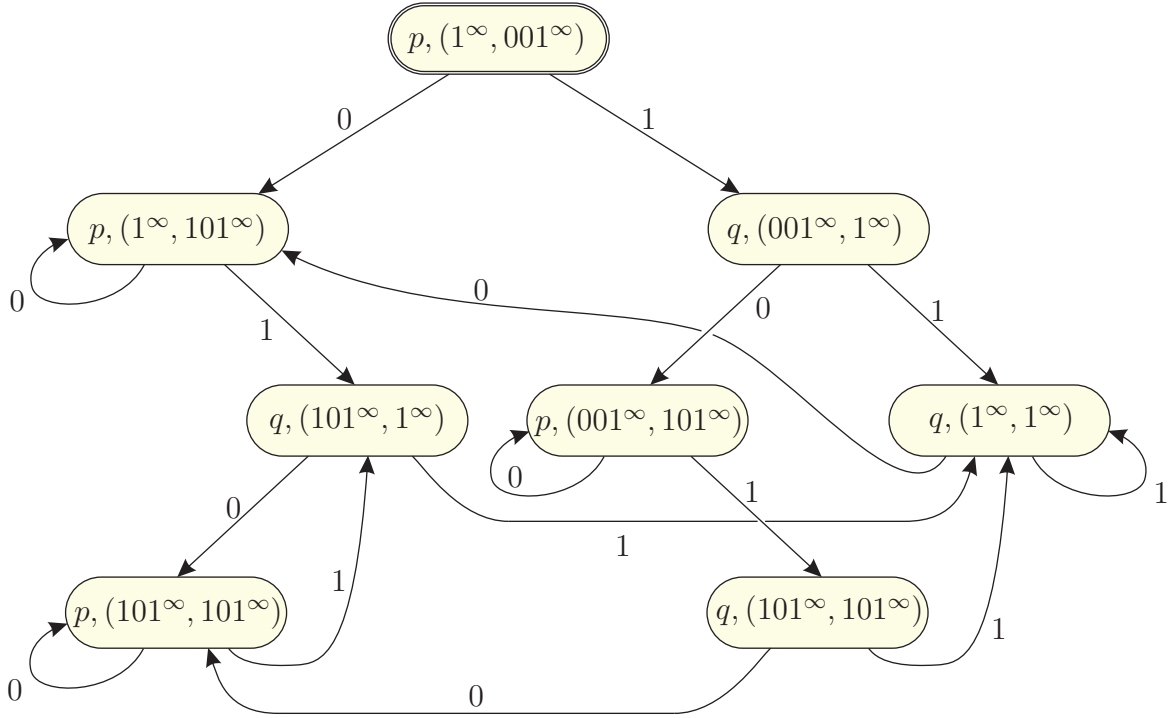}
\caption{Moore automaton $\B_p$ generating the sequence $(\vpc^p_n)_{n\geq 0}$ of reduced van der Put coefficients of the generator $p$ of the lamplighter group $\mathcal L$\label{fig:aut_lampl_moore}}
\end{center}
\end{figure}
\end{proposition}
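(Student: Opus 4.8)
The plan is to run Algorithm~\ref{alg:mealy_to_moore} on the two-state Mealy automaton $\A$ generating $p$, with correctness supplied by Theorem~\ref{thm:covering}; the whole proof then reduces to executing the steps of that algorithm and verifying that the breadth-first search closes up on exactly the automaton drawn in Figure~\ref{fig:aut_lampl_moore}. A preliminary observation streamlines the bookkeeping: since the state set of $\A$ is $\{p,q\}$ with $p|_0=q|_0=p$ and $p|_1=q|_1=q$, the section $g|_v$ occurring as the first component of each label depends only on the last letter of $v$ (it is $p$ when $v$ ends in $0$ or $v=\varepsilon$, and $q$ when $v$ ends in $1$).

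Step~1 is to compute the base coefficients $\vpc^{g'}_{\overline w}$ for $g'\in\{p,q\}$ and $w\in X\cup X^2$, i.e. to evaluate $\hat p$ and $\hat q$ on the eventually periodic inputs $0^\infty,10^\infty,010^\infty,110^\infty$ and apply~\eqref{eqn:lambda_n}. Using the wreath recursion $p=(p,q)(01)$, $q=(p,q)$ one reads the output off digit by digit; every run eventually enters the state $p$ reading $0$'s and so terminates in the tail $1^\infty$, whence each image is eventually periodic (in fact a negative integer), as guaranteed by Lemma~\ref{lem:vpc_algorithm}. I expect to obtain $(\vpc^p_0,\vpc^p_1,\vpc^p_2,\vpc^p_3)=(-1,-4,-3,-1)$ and $(\vpc^q_0,\vpc^q_1,\vpc^q_2,\vpc^q_3)=(-2,-3,-3,-1)$ (in $2$-adic form $-1=1^\infty$, $-2=01^\infty$, $-3=101^\infty$, $-4=001^\infty$); these eight values are all the rest of the computation will use.

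The search then builds the states as labels $\bigl(g|_v,(\vpc^g_{\overline{v0}},\vpc^g_{\overline{v1}})\bigr)$, starting from the initial label $\bigl(p,(\vpc^p_0,\vpc^p_1)\bigr)=\bigl(p,(-1,-4)\bigr)$ with output $\tau=\vpc^p_0=-1$. The key simplification, which I would isolate as a one-line lemma inside the argument, is that the pair at a child $vx$ is determined by the parent pair together with a single base coefficient: its first entry $\vpc^g_{\overline{vx}}$ is exactly the $x$-th entry of the parent pair (because $\overline{vx0}=\overline{vx}$), while its second entry reduces by Corollary~\ref{cor:sections2}, splitting off the final two letters $x1$, to $\vpc^g_{\overline{vx1}}=\vpc^{g|_v}_{\overline{x1}}=\vpc^{g|_v}_{x+2}$, one of the base values from Step~1. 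Hence no coefficient of index $\ge 4$ is ever recomputed from scratch, and every label lives in the finite pool assembled from the two sections $p,q$ and the eight base coefficients.

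Running the enumeration with these rules produces seven distinct labels, all emerging by level three of $X^*$:
\[
(p,(-1,-4)),\ (p,(-1,-3)),\ (q,(-4,-1)),\ (q,(-3,-1)),\ (p,(-4,-3)),\ (q,(-1,-1)),\ (p,(-3,-3)),
\]
and reading off the transitions $\delta$ and the outputs (the first entry of each pair) reproduces Figure~\ref{fig:aut_lampl_moore}. The only genuine obstacle is checking that the search closes, i.e. that each of the fourteen successor labels coincides with one already listed; this is precisely where the reduction of the previous paragraph does the work, confining the labels to the finite pool, with a priori finiteness in any case guaranteed by Theorem~\ref{thm:covering}. Finally, the seven labels being pairwise distinct and separated by their outputs and one-step transitions, the displayed automaton is exactly the algorithm's output (and, one verifies, already minimal).
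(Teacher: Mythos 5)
Your proposal follows exactly the paper's route: both arguments consist of executing Algorithm~\ref{alg:mealy_to_moore} on the two-state lamplighter automaton, starting from the same eight base coefficients $(\vpc^p_0,\ldots,\vpc^p_3)=(1^\infty,001^\infty,101^\infty,1^\infty)$ and $(\vpc^q_0,\ldots,\vpc^q_3)=(01^\infty,101^\infty,101^\infty,1^\infty)$, reducing every deeper coefficient via Corollary~\ref{cor:sections2}, and closing the breadth-first search on the resulting labels; your ``one-line lemma'' about how a child's pair is assembled from the parent's pair and one base coefficient is exactly the bookkeeping the paper performs. There is, however, one substantive discrepancy you should be aware of: you obtain seven states while the paper's computation (and hence Figure~\ref{fig:aut_lampl_moore}) produces eight. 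The divergence is at the vertex $101$, where the paper evaluates $\vpc^p_{\overline{1011}}=\vpc^{p|_{10}}_{\overline{11}}$ as $\vpc^p_2=101^\infty$ and thereby creates the new label $\bigl(q,(101^\infty,101^\infty)\bigr)$; but $p|_{10}=p$ and $\overline{11}=3$, so this coefficient is $\vpc^p_3=1^\infty$, as a direct check confirms: $p(10110^\infty)=000101^\infty=-24$ and $p(1010^\infty)=00001^\infty=-16$, whence $\vpc^p_{13}=(-24+16)/8=-1$. Your label $\bigl(q,(101^\infty,1^\infty)\bigr)=(q,(-3,-1))$ at $101$ therefore coincides with the label at $01$, and your seven-state automaton is the correct output of the algorithm. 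The paper's eighth state has the same output ($-3$) and the same transitions as $(q,(-3,-1))$, so the two automata agree after minimization and both generate $(\vpc^p_n)_{n\geq 0}$; the only caveat is that your answer will not literally match the figure as drawn.
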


\begin{proof}
We will apply Algorithm~\ref{alg:mealy_to_moore} and construct the sections of $(\vpc^p_n)_{n\geq0}$ at the vertices of $X^*$ in the form~\eqref{eqn:states_labels}. It may be useful to refer to Figure~\ref{fig:aut_lampl_moore} to understand better the calculations that follow.

The label of the initial state $(\vpc^p_n)_{n\geq0}|_\varepsilon$ of $\B_p$ is $\bigl(p|_\varepsilon,(\vpc^p_{0},\vpc^p_{1})\bigr)$. By equation~\eqref{eqn:lambda_n} we get:

\[\vpc^p_0=p(0^\infty)=1^\infty\qquad\text{and}\qquad\vpc^p_1=p(10^\infty)=001^\infty.\]
Therefore, the initial state of $\B_p$ is labeled by
\[l((\vpc^p_n)_{n\geq0}|_\varepsilon)=\bigl(p,(1^\infty,001^\infty)\bigr).\]
We proceed with the states corresponding to the vertices of the first level of $X^*$. We calculate:
\[\vpc^p_2=\frac{p(010^\infty)-p(0^\infty)}2=\frac{1001^\infty-1^\infty}2=101^\infty,\]
\[\vpc^p_3=\frac{p(110^\infty)-p(10^\infty)}2=\frac{0101^\infty-001^\infty}2=1^\infty.\]
Therefore, we get the labels of two more states in $\B_p$:
\[l((\vpc^p_n)_{n\geq0}|_0)=\bigl(p|_0,(\vpc^p_0,\vpc^p_2)\bigr)=\bigl(p,(1^\infty,101^\infty)\bigr),\]
\[l((\vpc^p_n)_{n\geq0}|_1)=\bigl(p|_1,(\vpc^p_1,\vpc^p_3)\bigr)=\bigl(q,(001^\infty,1^\infty)\bigr).\]
To obtain labels of the states at the vertices of deeper levels we use Corollary~\ref{cor:sections2}. Namely, for $n>3$, we have that $[n]_2=vx1\in X^*$ for some $v\in X^*$ and $x\in X$. Therefore by Corollary~\ref{cor:sections2}
\[\vpc^p_n=\vpc^p_{\overline{vx1}}=\vpc^{p|_v}_{\overline{x1}}=\vpc^{p|_v}_{n\mmod 4}.\]
Therefore, it is enough to compute the first 4 values of $(\vpc^{p|_v}_n)_{n\geq 0}$ for all states $p|_v$ of an automaton $\A$.
Since there are only 2 states in $\A$ and we have computed the first 4 values of $(\vpc^p_n)_{n\geq0}$, we proceed to $(\vpc^p_n)_{n\geq0}$:
\[\begin{array}{l}
\vpc^q_0=q(0^\infty)=01^\infty,\\
\vpc^q_1=q(10^\infty)=101^\infty,\\
\vpc^q_2=\frac{q(010^\infty)-q(0^\infty)}2=\frac{0001^\infty-01^\infty}2=101^\infty,\\
\vpc^q_3=\frac{q(110^\infty)-q(10^\infty)}2=\frac{1101^\infty-101^\infty}2=1^\infty.
\end{array}
\]
Now, by Corollary~\ref{cor:correspondence} we have that
\[\begin{array}{l}
\vpc^p_4=\vpc^p_{\overline{001}}=\vpc^{p|_0}_{\overline{01}}=\vpc^p_{\overline{01}}=\vpc^p_2=101^\infty,\\
\vpc^p_6=\vpc^p_{\overline{011}}=\vpc^{p|_0}_{\overline{11}}=\vpc^p_{\overline{11}}=\vpc^p_3=1^\infty,\\
\vpc^p_5=\vpc^p_{\overline{101}}=\vpc^{p|_1}_{\overline{01}}=\vpc^q_{\overline{01}}=\vpc^q_2=101^\infty,\\
\vpc^p_7=\vpc^p_{\overline{111}}=\vpc^{p|_1}_{\overline{11}}=\vpc^q_{\overline{11}}=\vpc^q_3=1^\infty.
\end{array}
\]

Thus, the states at the second level has the following labels:
\[\begin{array}{l}
l((\vpc^p_n)_{n\geq0}|_{00})=\bigl(p|_{00},(\vpc^p_{\overline{000}},\vpc^p_{\overline{001}})\bigr)=\bigl(p,(\vpc^p_0,\vpc^p_4)\bigr)=\bigl(p,(1^\infty,101^\infty)\bigr),\\
l((\vpc^p_n)_{n\geq0}|_{01})=\bigl(p|_{01},(\vpc^p_{\overline{010}},\vpc^p_{\overline{011}})\bigr)=\bigl(q,(\vpc^p_2,\vpc^p_6)\bigr)=\bigl(q,(101^\infty,1^\infty)\bigr),\\
l((\vpc^p_n)_{n\geq0}|_{10})=\bigl(p|_{10},(\vpc^p_{\overline{100}},\vpc^p_{\overline{101}})\bigr)=\bigl(p,(\vpc^p_1,\vpc^p_5)\bigr)=\bigl(p,(001^\infty,101^\infty)\bigr),\\
l((\vpc^p_n)_{n\geq0}|_{11})=\bigl(p|_{11},(\vpc^p_{\overline{110}},\vpc^p_{\overline{111}})\bigr)=\bigl(q,(\vpc^p_3,\vpc^p_7)\bigr)=\bigl(q,(1^\infty,1^\infty)\bigr).\\
\end{array}
\]

Since $l((\vpc^p_n)_{n\geq0}|_{00})=l((\vpc^p_n)_{n\geq0}|_{0})$, we can stop calculations along this branch. For other branches we compute similarly on the next level. We start from the branch $01$.
\[l((\vpc^p_n)_{n\geq0}|_{010})=\bigl(p|_{010},(\vpc^p_{\overline{0100}},\vpc^p_{\overline{0101}})\bigr)=\bigl(p,(\vpc^p_2,\vpc^{p|_{01}}_{\overline{01}})\bigr)=\bigl(p,(\vpc^p_2,\vpc^{q}_2)\bigr)=\bigl(p,(101^\infty,101^\infty)\bigr),\]
and
\begin{multline*}
l((\vpc^p_n)_{n\geq0}|_{011})=\bigl(p|_{011},(\vpc^p_{\overline{0110}},\vpc^p_{\overline{0111}})\bigr)=\bigl(q,(\vpc^{p|_{0}}_{\overline{11}},\vpc^{p|_{01}}_{\overline{11}})\bigr)\\
=\bigl(q,(\vpc^p_3,\vpc^{q}_3)\bigr)=\bigl(q,(1^\infty,1^\infty)\bigr)=l((\vpc^p_n)_{n\geq0}|_{11}),\\
\end{multline*}
For the branch $10$ we obtain:
\begin{multline*}
l((\vpc^p_n)_{n\geq0}|_{100})=\bigl(p|_{100},(\vpc^p_{\overline{1000}},\vpc^p_{\overline{1001}})\bigr)=\bigl(p,(\vpc^p_1,\vpc^{p|_{10}}_{\overline{01}})\bigr)\\
=\bigl(p,(\vpc^p_1,\vpc^{p}_2)\bigr)=\bigl(p,(001^\infty,101^\infty)\bigr)=l((\vpc^p_n)_{n\geq0}|_{10})
\end{multline*}
and
\[
l((\vpc^p_n)_{n\geq0}|_{101})=\bigl(p|_{101},(\vpc^p_{\overline{1010}},\vpc^p_{\overline{1011}})\bigr)=\bigl(q,(\vpc^{p|_{1}}_{\overline{01}},\vpc^{p|_{10}}_{\overline{11}})\bigr)=\bigl(q,(\vpc^q_2,\vpc^{p}_2)\bigr)=\bigl(q,(101^\infty,101^\infty)\bigr).
\]

For the branch $11$ we get:
\begin{multline*}
l((\vpc^p_n)_{n\geq0}|_{110})=\bigl(p|_{110},(\vpc^p_{\overline{1100}},\vpc^p_{\overline{1101}})\bigr)=\bigl(p,(\vpc^p_3,\vpc^{p|_{11}}_{\overline{01}})\bigr)\\
=\bigl(p,(\vpc^p_3,\vpc^{q}_2)\bigr)=\bigl(p,(1^\infty,101^\infty)\bigr)=l((\vpc^p_n)_{n\geq0}|_{0})
\end{multline*}
and
\begin{multline*}
l((\vpc^p_n)_{n\geq0}|_{111})=\bigl(p|_{111},(\vpc^p_{\overline{1110}},\vpc^p_{\overline{1111}})\bigr)=\bigl(q,(\vpc^{p|_{1}}_{\overline{11}},\vpc^{p|_{11}}_{\overline{11}})\bigr)\\
=\bigl(q,(\vpc^q_3,\vpc^{q}_3)\bigr)=\bigl(q,(1^\infty,1^\infty)\bigr)=l((\vpc^p_n)_{n\geq0}|_{11}).
\end{multline*}

At this moment we have two unfinished branches: $010$ and $101$. For $010$ we have:
\begin{multline*}
l((\vpc^p_n)_{n\geq0}|_{0100})=\bigl(p|_{0100},(\vpc^p_{\overline{01000}},\vpc^p_{\overline{01001}})\bigr)=\bigl(p,(\vpc^p_2,\vpc^{p|_{010}}_{\overline{01}})\bigr)\\
=\bigl(p,(\vpc^p_2,\vpc^{p}_2)\bigr)=\bigl(p,(101^\infty,101^\infty)\bigr)=l((\vpc^p_n)_{n\geq0}|_{010})
\end{multline*}
and
\begin{multline*}
l((\vpc^p_n)_{n\geq0}|_{0101})=\bigl(p|_{0101},(\vpc^p_{\overline{01010}},\vpc^p_{\overline{01011}})\bigr)=\bigl(q,(\vpc^{p|_{01}}_{\overline{01}},\vpc^{p|_{010}}_{\overline{11}})\bigr)\\
=\bigl(q,(\vpc^q_2,\vpc^{p}_3)\bigr)=\bigl(q,(101^\infty,1^\infty)\bigr)=l((\vpc^p_n)_{n\geq0}|_{01}).
\end{multline*}

Finally, for $101$ branch we compute:
\begin{multline*}
l((\vpc^p_n)_{n\geq0}|_{1010})=\bigl(p|_{1010},(\vpc^p_{\overline{10100}},\vpc^p_{\overline{10101}})\bigr)=\bigl(p,(\vpc^{p|_{1}}_{\overline{01}},\vpc^{p|_{101}}_{\overline{01}})\bigr)\\
=\bigl(p,(\vpc^q_2,\vpc^{q}_2)\bigr)=\bigl(p,(101^\infty,101^\infty)\bigr)=l((\vpc^p_n)_{n\geq0}|_{010})
\end{multline*}
and
\begin{multline*}
l((\vpc^p_n)_{n\geq0}|_{1011})=\bigl(p|_{1011},(\vpc^p_{\overline{10110}},\vpc^p_{\overline{10111}})\bigr)=\bigl(q,(\vpc^{p|_{10}}_{\overline{11}},\vpc^{p|_{101}}_{\overline{11}})\bigr)\\
=\bigl(q,(\vpc^p_3,\vpc^{q}_3)\bigr)=\bigl(q,(1^\infty,1^\infty)\bigr)=l((\vpc^p_n)_{n\geq0}|_{11}).
\end{multline*}
We have completed all the branches and constructed all the transitions in the automaton $\B_p$.
\end{proof}

\subsection{Mealy automaton from Moore automaton}
In this subsection we provide an example of the converse construction. Namely, we will construct the finite state endomorphism of $\{0,1\}^*$ that induces a transformation of $\Z_2$ with the Thue-Morse sequence of reduced van der Put coefficients, where we treat $0$ as $0^\infty$ and $1$ as $10^\infty$ according to the standard embedding of $\Z$ into $\Z_2$.

Recall, that the Thue-Morse sequence $(t_n)_{n\geq0}$ is the binary sequence defined by a Moore automaton shown in Figure~\ref{fig:thue_morse_moore}. It can be obtained by starting with 0 and successively appending the Boolean complement of the sequence obtained thus far. The first 32 values of this sequence are shown in Table~\ref{tab:thue}.

\begin{table}
\[
\begin{array}{|l|l|l|||l|l|l|||l|l|l|||l|l|l|}
\hline
n&[n]_2&t_n&n&[n]_2&t_n&n&[n]_2&t_n&n&[n]_2&t_n\\ \hline
0&0  &0  &8 &0001&1  &16&00001&1  &24&00011&0\\
1&1  &1  &9 &1001&0  &17&10001&0  &25&10011&1\\
2&01 &1  &10&0101&0  &18&01001&0  &26&01011&1\\
3&11 &0  &11&1101&1  &19&11001&1  &27&11011&0\\
4&001&1  &12&0011&0  &20&00101&0  &28&00111&1\\
5&101&0  &13&1011&1  &21&10101&1  &29&10111&0\\
6&011&0  &14&0111&1  &22&01101&1  &30&01111&0\\
7&111&1  &15&1111&0  &23&11101&0  &31&11111&1\\ \hline
\end{array}
\]
\caption{The first 32 values of the Thue-Morse sequence\label{tab:thue}}
\end{table}

\begin{figure}[h]
\begin{center}
\epsfig{file=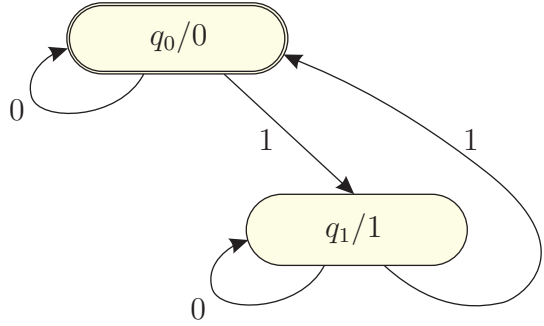}
\caption{Moore automaton $\B$ generating the Thue-Morse sequence\label{fig:thue_morse_moore}}
\end{center}
\end{figure}

\begin{proposition}
\label{prop:thue}
Endomorphism $t$ of $X^*$ inducing a transformation of $\Z_2$ with the Thue-Morse sequence $(\vpc^t_n)_{n\geq 0}=(t_n)_{n\geq 0}$ of the reduced van der Put coefficients is defined by the 2-state Mealy automaton $\A_t$ shown in Figure~\ref{fig:thue_morse_mealy} with the following wreath recursion:
\[\begin{array}{lll}
t&=&(t,s),\\
s&=&(s,t) {{0 1}\choose{0 0}},
\end{array}\]
where ${{0 1}\choose{0 0}}$ denotes the selfmap of $\{0,1\}$ sending both of its elements to $0$.
\end{proposition}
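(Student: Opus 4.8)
The plan is to run Algorithm~\ref{alg:moore_to_mealy} on the two-state Moore automaton $\B$ of Figure~\ref{fig:thue_morse_moore}, whose output is guaranteed to define $t$ by Theorem~\ref{thm:covering2}. First I would record the section structure of the Thue-Morse sequence explicitly. Writing $T=(t_n)_{n\geq0}$ and $\bar T=(\bar t_n)_{n\geq0}$ with $\bar t_n=1-t_n$ for its Boolean complement, the classical identities $t_{2n}=t_n$ and $t_{2n+1}=1-t_n$ give $T|_0=T$, $T|_1=\bar T$, $\bar T|_0=\bar T$, $\bar T|_1=T$, so the $2$-kernel (hence $Q_{\B}$) consists of exactly the two sequences $T$ and $\bar T$. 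This is the only input data needed from the automatic side.

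Next I would compute the labels of the Mealy states, which by Theorem~\ref{thm:covering2} have the form $\bigl((\vpc^g_n)_{n\geq0}|_v,(\vpc^{g|_v}_0,\vpc^{g|_v}_1)\bigr)$. The decisive simplification is that every coefficient equals $t_n\in\{0,1\}\subset\Z_2$, and for such values the shift on $\Z_2$ satisfies $\sigma(0)=\sigma(1)=0$. Hence the first-level coefficients of every section, governed by the $n=0$ and $0<n<2$ cases of equation~\eqref{eqn:lambda_section}, collapse to $\vpc^{g|_x}_0=\sigma(\vpc^g_x)=0$ and $\vpc^{g|_x}_1=\vpc^g_{x+2}+\sigma(\vpc^g_x)=\vpc^g_{x+2}$. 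Using Corollary~\ref{cor:correspondence} to read the higher coefficients $\vpc^{g|_v}_{2},\vpc^{g|_v}_3$ off the first component $(\vpc^g_n)_{n\geq0}|_v\in\{T,\bar T\}$, every value I need reduces to an entry of $T$ or $\bar T$.

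With these formulas in hand, I would start from the initial state $q=\bigl(T,(t_0,t_1)\bigr)=\bigl(T,(0,1)\bigr)$, name it $t$, and follow its two outgoing edges. The output rule $\lambda(g|_v,x)=\vpc^{g|_v}_{\overline x}\mmod 2$ (with $\overline x=x$ for a single letter) gives at $t$ the assignment $0\mapsto0$, $1\mapsto1$, i.e. the trivial permutation; the transitions yield $t|_0$ with label $\bigl(T,(0,t_2)\bigr)=\bigl(T,(0,1)\bigr)=t$ and $t|_1$ with label $\bigl(\bar T,(0,t_3)\bigr)=\bigl(\bar T,(0,0)\bigr)$, which I would name $s$. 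Repeating Step~2 at $s$: from its label its output is $0\mapsto0$, $1\mapsto0$, i.e. the map ${{0 1}\choose{0 0}}$, while its two sections have labels $\bigl(\bar T,(0,\bar t_2)\bigr)=\bigl(\bar T,(0,0)\bigr)=s$ and $\bigl(T,(0,\bar t_3)\bigr)=\bigl(T,(0,1)\bigr)=t$. No new labels appear, so the algorithm terminates with exactly the two states $t=(t,s)$ and $s=(s,t){{0 1}\choose{0 0}}$, which is the claimed wreath recursion.

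The computations are all routine once the section identities are set up, so I expect the only point needing care to be the bookkeeping that separates the two components of each label: the first component records the tail of the section (an element of $\{T,\bar T\}$ fixing levels $\geq2$ via Corollary~\ref{cor:correspondence}), while the second records the two genuine first-level coefficients produced by equation~\eqref{eqn:lambda_section}, and these two generally disagree with the raw first terms of $(\vpc^g_n)_{n\geq0}|_v$. Correctly matching each freshly generated label against the two states already present is what certifies closure of the procedure after level one.
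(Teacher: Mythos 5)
Your proposal is correct and follows essentially the same route as the paper: both run Algorithm~\ref{alg:moore_to_mealy} on the Thue--Morse Moore automaton, track states as pairs $\bigl((\vpc^t_n)_{n\geq0}|_v,(\vpc^{t|_v}_0,\vpc^{t|_v}_1)\bigr)$, and close the construction after finding only the two labels $\bigl(T,(0,1)\bigr)$ and $\bigl(\bar T,(0,0)\bigr)$. Your observation that $\sigma$ vanishes on $\{0^\infty,10^\infty\}$ and your use of the kernel identities $T|_0=T$, $T|_1=\bar T$ merely streamline the same coefficient computations the paper carries out entry by entry (e.g.\ its $\vpc^{t|_1}_2=\vpc^t_5$ versus your $\bar t_2$), and the resulting wreath recursion agrees.
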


\begin{figure}[h]
\begin{center}
\epsfig{file=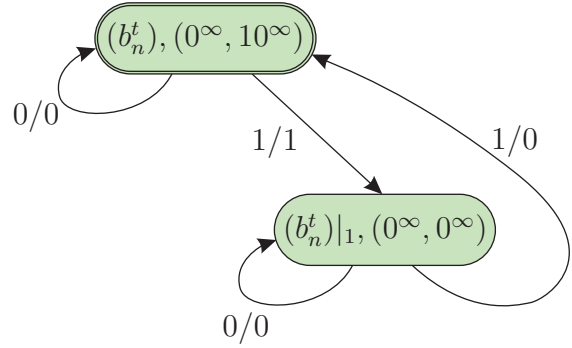}
\caption{Mealy automaton $\A_t$ defining a transformation of $\Z_2$ whose sequence of reduced van der Put coefficients is the Thue-Morse sequence\label{fig:thue_morse_mealy}}
\end{center}
\end{figure}

\begin{proof}
We will follow Algorithm~\ref{alg:moore_to_mealy}, according to which the states of $\A_t$ are the pairs of the form
\begin{equation}
\label{eqn:states_labels_t}
l(t|_v)=\bigl((\vpc^t_n)_{n\geq 0}|_v,(\vpc^{t|_v}_{0}, \vpc^{t|_v}_{1})\bigr),
\end{equation}

Below we will suppress the subscript $n\geq 0$ in the notation for sequences to simplify the exposition. E.g., we will write simply $(\vpc^t_n)$ for $(\vpc^t_n)_{n\geq 0}$.

The initial state $t$ will have a label
\[l(t)=l(t|_\varepsilon)=\bigl((\vpc^t_n),(\vpc^{t}_{0}, \vpc^{t}_{1})\bigr)=\bigl((\vpc^t_n),(0^\infty, 10^\infty)\bigr).\]

We proceed to calculating the labels of the sections at the vertices of the first level. Using Theorem~\ref{thm:van_der_put_sections} (namely, the first two cases in equation~\eqref{eqn:lambda_section}), and the values of $\vpc^t_n=t_n$ of the Thue-Morse sequence from Table~\ref{tab:thue}, we obtain:

\begin{multline*}
l(t|_0)=\bigl((\vpc^t_n)|_0,(\vpc^{t|_0}_{0}, \vpc^{t|_0}_{1})\bigr)=\bigl((\vpc^t_n),(\sigma(\vpc^t_0), \vpc^t_2+\sigma(\vpc^t_0))\bigr)\\
=\bigl((\vpc^t_n),(\sigma(0^\infty), 10^\infty+\sigma(0^\infty))\bigr)=\bigl((\vpc^t_n),(0^\infty, 10^\infty)\bigr)=l(t).
\end{multline*}
We also used above the fact that $(\vpc^t_n)|_0=(\vpc^t_n)$ that follows from the structure of automaton $\B$. Therefore, we can stop developing the branch that starts with 0 and move to the branch starting from 1. Similarly we get

\begin{multline}
\label{eqn:vpc1}
l(t|_1)=\bigl((\vpc^t_n)|_1,(\vpc^{t|_1}_{0}, \vpc^{t|_1}_{1})\bigr)=\bigl((\vpc^t_n)|_1,(\sigma(\vpc^t_1), \vpc^t_3+\sigma(\vpc^t_1))\bigr)\\
=\bigl((\vpc^t_n)|_1,(\sigma(10^\infty), 0^\infty+\sigma(10^\infty))\bigr)=\bigl((\vpc^t_n)|_1,(0^\infty, 0^\infty)\bigr),
\end{multline}
so we obtained a new section. We compute the sections at the vertices of the second level using Figure~\ref{fig:portrait_correspondence}, keeping in mind that according to equation~\eqref{eqn:vpc1} $\vpc^{t|_1}_0=0^\infty$:

\begin{multline*}
l(t|_{10})=\bigl((\vpc^t_n)|_{10},(\vpc^{t|_{10}}_{0}, \vpc^{t|_{10}}_{1})\bigr)=\bigl((\vpc^t_n)|_{1},(\vpc^{(t|_1)|_{0}}_{0}, \vpc^{(t|_1)|_{0}}_{1})\bigr)\\
=\bigl((\vpc^t_n)|_1,(\sigma(\vpc^{t|_1}_0), \vpc^{t|_1}_2+\sigma(\vpc^{t|_1}_0))\bigr)=\bigl((\vpc^t_n)|_1,(\sigma(0^\infty), \vpc^{t}_5+\sigma(0^\infty))\bigr)\\
=\bigl((\vpc^t_n)|_1,(0^\infty, 0^\infty+0^\infty)\bigr)=\bigl((\vpc^t_n)|_1,(0^\infty, 0^\infty)\bigr)=l(t|_1).
\end{multline*}

Finally, since according to equation~\eqref{eqn:vpc1} $\vpc^{t|_1}_1=0^\infty$, we calculate the last section at $11$:
\begin{multline*}
l(t|_{11})=\bigl((\vpc^t_n)|_{11},(\vpc^{t|_{11}}_{0}, \vpc^{t|_{11}}_{1})\bigr)=\bigl((\vpc^t_n),(\vpc^{(t|_1)|_{1}}_{0}, \vpc^{(t|_1)|_{1}}_{1})\bigr)\\
=\bigl((\vpc^t_n),(\sigma(\vpc^{t|_1}_1), \vpc^{t|_1}_3+\sigma(\vpc^{t|_1}_1))\bigr)=\bigl((\vpc^t_n),(\sigma(0^\infty), \vpc^{t}_7+\sigma(0^\infty))\bigr)\\
=\bigl((\vpc^t_n),(0^\infty, 10^\infty+0^\infty)\bigr)=\bigl((\vpc^t_n),(0^\infty, 10^\infty)\bigr)=l(t).
\end{multline*}

We have completed all the branches and constructed all the transitions in the automaton $\A_t$. We only need now to compute the values of the output function. By equation~\eqref{eqn:states_labels4} we get:
\[
\begin{array}{l}
\lambda\left(\bigl((\vpc^t_n),(0^\infty,10^\infty)\bigr),0\right)=0^\infty\mmod 2=0,\\
\lambda\left(\bigl((\vpc^t_n),(0^\infty,10^\infty)\bigr),1\right)=10^\infty\mmod 2=1,\\
\lambda\left(\bigl((\vpc^t_n)|_1,(0^\infty,0^\infty)\bigr),0\right)=0^\infty\mmod 2=0,\\
\lambda\left(\bigl((\vpc^t_n)|_1,(0^\infty,0^\infty)\bigr),1\right)=0^\infty\mmod 2=0,
\end{array}
\]
which completes the proof of the proposition.
\end{proof}

\noindent{\textbf{Acknowledgements.} The authors are thankful to Zoran \v Sun\'ic and Svetlana Katok for enlightening  conversations on the subject of the paper. The first author graciously acknowledges support from the Simons Foundation through Collaboration Grant \#527814 and also is supported by the mega-grant of the Russian Federation Government (N14.W03.31.0030). He also gratefully acknowledges support of the Swiss National Science Foundation. The second author greatly appreciates the support of the Simons Foundation through Collaboration Grant \#317198. The work on this project was partially conducted during the authors' visits to American Institute of Mathematics SQuaRE program; they thank the institute for hospitality and support.}

\bibliographystyle{plain}
\def\cprime{$'$} \def\cydot{\leavevmode\raise.4ex\hbox{.}} \def\cprime{$'$}
  \def\cprime{$'$} \def\cprime{$'$} \def\cprime{$'$} \def\cprime{$'$}
  \def\cprime{$'$} \def\cprime{$'$} \def\cprime{$'$} \def\cprime{$'$}
  \def\cprime{$'$} \def\cprime{$'$} \def\cprime{$'$} \def\cprime{$'$}
  \def\cprime{$'$}

\end{document}